\newcommand{\tpikiesalgorithm}[1]{\textnormal{Algorithm #1}}
\newcommand{\Pclass}{\textnormal{\sffamily P}\xspace}
\newcommand{\NPclass}{\textnormal{\sffamily NP}\xspace}
\newcommand{\NPHclass}{\textnormal{\sffamily NP-hard}\xspace}
\newcommand{\NPOClass}{\textnormal{\sffamily NPO}\xspace}
\newcommand{\NPC}{\textnormal{\sffamily NP-complete}\xspace}
\newcommand{\APXH}{\textnormal{\sffamily APX-hard}\xspace}
\newcommand{\ThreeSatStar}{\textnormal{\sffamily 3SAT*}\xspace}
\newcommand{\ThreeSat}{\textnormal{\sffamily 3SAT}\xspace}
\newcommand{\conflictgraph}[1][]
{%
	\ifthenelse{\equal{#1}{}}{\mathrm{cliques}}{#1\; \mathrm{cliques}}%
}
\newcommand{\bagmachinerestriction}{M(k)}
\newcommand{\jobmachinerestriction}{M(j)}
\newcommand{\bagmachinespeed}{(p_{k}^{i})_{k\in[b],i\in M}}
\newcommand{\MAXThreeSATSix}{\textnormal{\sffamily MAX 3SAT-6}\xspace}
\newcommand{\nfold}{$n$-fold\xspace}
\newcommand{\confs}{\mathcal{C}}
\DeclarePairedDelimiter\floor{\lfloor}{\rfloor}
\DeclarePairedDelimiter\ceil{\lceil}{\rceil}
\DeclarePairedDelimiter\set{\lbrace}{\rbrace}
\DeclarePairedDelimiterX\sett[2]{\lbrace}{\rbrace}{ #1 \,\delimsize| \,\mathopen{} #2 }
\title{Total Completion Time Minimization for Scheduling with Incompatibility Cliques}
\titlerunning{}
\author{Klaus Jansen}     {Department of Computer Science, Faculty Of Engineering, Kiel University, Germany}{kj@informatik.uni-kiel.de}{}{Supported by DFG Projects "Strukturaussagen und deren Anwendung in Scheduling- und Packungsprobleme", JA 612/20-1 and, JA 612/20-2}
\author{Alexandra Lassota}{Department of Computer Science, Faculty Of Engineering, Kiel University, Germany}{ala@informatik.uni-kiel.de}{}{Supported by DFG Projects "Strukturaussagen und deren Anwendung in Scheduling- und Packungsprobleme", JA 612/20-1 and, JA 612/20-2}
\author{Marten Maack}     {Department of Computer Science, Faculty Of Engineering, Kiel University, Germany}{mmaa@informatik.uni-kiel.de}{}{}
\author{Tytus Pikies}     {Department of Algorithms and System Modelling, ETI Faculty,  Gda\'nsk University of Technology, Poland}{tytpikie@pg.edu.pl}{}{This work was supported by Gda\'nsk University of Technology, grant no. POWR.03.02.00-IP.08-00-DOK/16.}
\authorrunning{K. Jansen, A. Lassota, M. Maack, T. Pikies} 
\keywords{identical machines, unrelated machines, total completion time, scheduling, bags, cliques, FPT} 
\begin{document}
	
	\maketitle
	
	\nocite{DBLP:journals/siamdm/GoemansW00}
	\nocite{IntroductionToAlgorithms}
	\nocite{}
	\begin{abstract}
		This paper considers parallel machine scheduling with incompatibilities between jobs.
		The jobs form a graph and no two jobs connected by an edge are allowed to be assigned to the same machine.
		In particular, we study the case where the graph is a collection of disjoint cliques. 
		Scheduling with incompatibilities between jobs represents a well-established line of research in scheduling theory and the case of disjoint cliques has received increasing attention in recent years.
		While the research up to this point has been focused on the makespan objective, we broaden the scope and study the classical total completion time criterion.
		In the setting without incompatibilities, this objective is well known to admit polynomial time algorithms even for unrelated machines via matching techniques.
		We show that the introduction of incompatibility cliques results in a richer, more interesting picture.
		Scheduling on identical machines remains solvable in polynomial time, while scheduling on unrelated machines becomes \APXH.
		We identify several more subcases of theses problems that are polynomial time solvable or \NPHclass, respectively.
		Furthermore, we study the problem under the paradigm of fixed-parameter tractable algorithms (FPT).
		In particular, we consider a problem variant with assignment restrictions for the cliques rather than the jobs.
		We prove that it is \NPHclass and can be solved in FPT time with respect to the number of cliques.
		Moreover, we show that the problem on unrelated machines can be solved in FPT time for reasonable parameters, e.g., the parameter pair: number of machines and maximum processing time.
		The latter result is a natural extension of known results for the case without incompatibilities and can even be extended to the case of total weighted completion time.
		All of the FPT results make use of $n$-fold Integer Programs that recently have received great attention by proving their usefulness for scheduling problems.
	\end{abstract}

	\section{Introduction}
	\label{sec:introduction}
	Consider a task system under difficult conditions like high electromagnetic radiation, or with an unstable power supply. 
	Due to the environmental conditions, users prepare tasks in groups and want the jobs in a given group to be scheduled on different processors. That assures that even if a few processors fail, another processor will be able to execute at least part of the jobs.
	Due to the instability, our system even might stop working completely and in this case all jobs that are done only partially have to be scheduled again.
	As observed  in~\cite{book:TheoryOfScheduling} and further pointed out in~\cite{DBLP:journals/cacm/BrunoCS74}, the sum of completion times criterion tends to reduce the mean number of unfinished jobs at each moment in the schedule.
	For this reason we would like to minimize the sum of completion times of the jobs respecting the additional reliability requirement given by the groups. 
	In the following, we discuss the problems motivated by this scenario more formally.

	\subparagraph*{Problem.}

	In the classical problem of scheduling on parallel machines, a set $J$ of $n$ jobs, a set $M$ of $m$ machines, and a processing time function $p$ are given.
	The processing times are of the form $p: J \rightarrow \mathbb{N}$ if the machines are identical or of the form $p: J \times M \rightarrow \mathbb{N}\cup\set{\infty}$ if the machines are unrelated. 
	That is, the processing time of a job does or does not, respectively, depend on the machine to which the job is assigned to.
	For brevity, we usually write $p_{j}$ or $p_{j}^{i}$ instead of $p(j)$ or $p(j,i)$ for each job $j$ and machine $i$.
	The goal is to find a schedule of the jobs on the machines, which minimizes a given objective function.
	A schedule in this setting is an assignment from jobs to machines and starting times.
	However, by the fact that for any machine we can order the jobs assigned to it optimally, according to Smith's rule~\cite{SmithRule}, for brevity we do not specify the starting times explicitly. 
	The completion time $C_j$ of $j$ is given by the sum of its starting and processing times.
	Probably the most studied objective function is the minimization of the makespan $C_{\max} = \max_j C_j$, directly followed by the minimization of the total completion time objective $\sum C_j$ or the sum of weighted completion times~$\sum w_jC_j$.
	In this paper, we use the three-field notation prevalent in scheduling theory.
	For instance, makespan minimization on identical machines is abbreviated as $P||C_{\max}$ and minimization of the total completion time on unrelated machines as $R||\sum C_j$.
	For a general overview of scheduling notation we refer the reader to~\cite{DBLP:books/daglib/0010885}.
	
	All of the scheduling problems discussed so far are fundamental and often studied with respect to additional constraints. 
	One line of research considers incompatibilities between jobs in the sense that some jobs may not be processed by the same machine.
	More formally, an incompatibility graph $G=(J,E)$ is part of the input, and an edge $\set{j,j'}\in E$ signifies that in a feasible schedule $j$ and $j'$ cannot be assigned to the same machine.
	In this paper, we study variants of $P||\sum (w_j) C_j$ and $R||\sum (w_j) C_j$ in which the incompatibility graph is a collection of cliques corresponding to the groups of jobs mentioned above. 
	In the three-field notation, we denote the class to which the incompatibility graph belongs in the middle, e.g. $P|\conflictgraph{}|\sum (w_j) C_j$ or $R|\conflictgraph{}|\sum (w_j) C_j$.

	\subparagraph*{Related Work.}

	First note that both $P||\sum w_jC_j$ and $P||C_{\max}$ are well-known to be strongly NP-hard.
	On the other hand, $P||\sum C_j$ can be solved in polynomial time via a simple greedy method (see \cite{book:TheoryOfScheduling}) and even $R||\sum C_j$ can be shown to be in $\Pclass$ via matching techniques \cite{DBLP:journals/cacm/BrunoCS74}.
	
	Scheduling with incompatibilities has first been considered in the 1990's by Jansen, Bodlaender and Woeginger~\cite{DBLP:journals/dam/BodlaenderJW94} who studied $P||C_{\max}$ with incompatibilities between jobs in the sense used in this paper.
	Among other things they presented an approximation algorithm which approximation ratio depends on the quality of a coloring for the incompatibility graph.
	The result yields constant approximation algorithms for subproblem where incompatibility graph can be colored in polynomial time with with constant number of colors, which is less than the number of the machines.
	Furthermore, Jansen and Bodlaender~\cite{DBLP:conf/mfcs/BodlaenderJ93} presented hardness results in the same setting for cographs, bipartite graphs and interval graphs.
	More recently, there has been a series of results for the context with uniformly related machines and unit processing times \cite{FurmanczykK17,DBLP:journals/dam/FurmanczykK18,DBLP:journals/cor/MallekBB19} for several classes of incompatibility graphs like (complete) bipartite graphs, forests, or $k$-chromatic cubic graphs.
	In 2012, Dokka, Kouvela, and Spieksma \cite{DBLP:journals/orl/DokkaKS12} presented approximation and inapproximability results for the so called multi-level bottleneck assignment problem.
	This problem can be seen as a variant of $P|\conflictgraph{}|C_{\max}$ in which each clique has the same size and each machine has to receive exactly one job from each clique.
	However, the exact setting studied in the present paper (with respect to incompatibilities) was introduced only recently by Das and Wiese~\cite{DBLP:conf/esa/DasW17} who called the cliques \emph{bags}.
	They obtained a PTAS for $P|\conflictgraph{}|C_{\max}$ and showed that (unless $\Pclass=\NPclass$) there is no constant approximation algorithm for the restricted assignment variant $P|\conflictgraph{}, \jobmachinerestriction|C_{\max}$, i.e., the case in which each job $j$ may only be processed on a given set $M(j)$ of machines eligible for $j$.
	Moreover, they gave an $8$-approximation for the special case $P|\conflictgraph{}, \bagmachinerestriction|C_{\max}$ in which jobs in the same clique have the same restrictions, i.e., sets $M(k)$ of eligible machines are given for each clique $k\in[b]$.
	This line of research was continued by two groups.
	In particular, Grage, Jansen and Klein~\cite{DBLP:conf/spaa/GrageJK19} obtained an EPTAS for $P|\conflictgraph|C_{\max}$, and Page and Solis-Oba~\cite{DBLP:journals/tcs/PageS20} considered a variant of $R|\conflictgraph|C_{\max}$ where the number of machine types and cliques is restricted and obtained a PTAS among many other results.
	Two machines have the same type if the processing time of each job is the same on both of them.

	Finally, we also consider fixed-parameter tractable (FPT) algorithms for scheduling problems.
	A good overview on this line of research is provided in a survey by Mnich and van Bevern \cite{DBLP:journals/cor/MnichB18}.
	The most notable result in our context is probably a work due to Knop and Koutecký \cite{DBLP:journals/scheduling/KnopK18} who used so-called $n$-fold Integer Programs to prove (among other things) two FPT results for $R||\sum w_jC_j$.
	In particular, $R||\sum w_jC_j$ is FPT with respect to the number of machines and the number of different job kinds $\vartheta$ , and also FPT with respect to the maximum processing time, the number of different job kinds $\vartheta$, and the number  of distinct machine types $\kappa$. These results were generalized and greatly extended by Knop et al.\ in \cite{DBLP:journals/corr/abs-1909-07326}. In their work, they introduce a general framework for solving various configuration ILPs by modeling them as (an extended version of) the Monoid Decomposition problem. This allows to solve many problems with different kinds of objects (for example, jobs with release times and due dates) and locations (for example, unrelated machines) and (linear or non-linear) objectives in FPT time with plenty different, natural parameterizations.

	\subparagraph*{Results and Methodology.}

	The results of our paper can be divided into three groups.
	The first one is comprised of polynomial time algorithms for several variants of $P|\conflictgraph{}|\sum C_j$ and $R|\conflictgraph{}|\sum C_j$.
	These results are based on classical approaches like flow and matching techniques, dynamic programming, greedy algorithms, and exhaustive search.
	They are presented in Section \ref{sec:polyalgos}.
	Next, we present hardness results in Section \ref{sec:hardness}.
	In the reductions some ideas previously used for variants of $P||C_{\max}$ and $R||C_{\max}$ (see, e.g., \cite{DBLP:journals/algorithmica/EbenlendrKS14,DBLP:journals/jcss/ChenJZ18,DBLP:conf/stacs/0011MYZ17,DBLP:conf/stacs/MaackJ20}) are reused.
	Finally, we present several FPT results all of which are based on $n$-fold Integer Programs which have proven increasingly useful in the context of scheduling in recent years, see, e.g., \cite{DBLP:journals/scheduling/KnopK18,DBLP:conf/esa/KnopKM17,DBLP:conf/innovations/JansenKMR19}.
	These results are discussed in Section \ref{sec:fptalgos}.
	All of our results are summarized in Table \ref{table:results}.
	\begin{table}
		\caption{An overview of the results of this paper.
			For the classical polynomial time algorithms the running times are listed.}
		\centering
		\begin{tabular}{ll}
			\toprule
			Problem & Result \\ \midrule
			$P|\conflictgraph{}|\sum C_j$ & $O(m^2n + mn^{5/2})$\\
			$R|\conflictgraph{}, p_{j}^{i} \in \{1, \infty \}|\sum C_j$ & $O(m^2n^3\log mn)$\\
			$R|\conflictgraph[2], p_{j}^{i} \in \{p_1 \le p_2\}|\sum C_j$ & $O(m^2n^4\log mn)$ \\
			$P|\conflictgraph[b], \bagmachinerestriction|\sum C_j$ & $O(m^{O(b^{(b+1)})}n^{3}\log mn)$\\
			$Rm |\conflictgraph{}, p_{j}^{i} \in \{a_1, \ldots, a_k\}| \sum C_j$ &  $O(n^{2km}nm^m)$\\
			$R|\conflictgraph{}, \jobmachinerestriction, \bagmachinespeed|\sum C_j$ &  $O(m^2n^4\log mn))$\\
			$P|\conflictgraph{}, \bagmachinerestriction, p_i \in \{p_1 < p_2 < 2p_1\}|\sum C_j$ &\APXH\\
			$R|\conflictgraph[2], p_{j}^{i} \in \{p_1 < p_2 < p_3\}|\sum C_j$ &\NPC\\
			$P|\conflictgraph{}, p_j \in \{p_1 < p_2\},\bagmachinerestriction|\sum C_j$  &\NPC\\
			$R|\conflictgraph{}, p_j \in \{p_1 < p_2\}|\sum C_j$  &\NPC\\
			$P|\conflictgraph{}, \bagmachinerestriction|\sum C_j$  & FPT w.r.t. $b$\\
			$R|\conflictgraph{}|\sum w_j C_j$ & FPT w.r.t. $m$, $p_{\max}$, $\vartheta$\\
			$R|\conflictgraph{}|\sum w_j C_j$ & FPT w.r.t. $\kappa$, $b$, $p_{\max}$, $\vartheta$\\\bottomrule		\end{tabular}
		\label{table:results}
	\end{table}
	
	We briefly discuss the results, establish links between them, and introduce the missing notation.
	First of all, we show that the problem $P|\conflictgraph{}|\sum C_j$ remains in $\Pclass$ while $R|\conflictgraph{}|\sum C_j$ -- unlike $R||\sum C_j$ -- is $\NPC$. 
	Hence, the introduction of incompatibility cliques results in a richer, more interesting picture which we explore in more detail.
	In particular, the problem remains $\NPC$ even in the case with only two cliques and three distinct processing times $R|\conflictgraph[2], p_{j}^{i} \in \{p_1 < p_2 < p_3\}|\sum C_j$, and in the case with only two distinct processing times and arbitrarily many cliques $R|\conflictgraph{}, p_j \in \{p_1 < p_2\}|\sum C_j$.
	On the other hand, the case with two cliques and two processing times $R|\conflictgraph[2], p_{j}^{i} \in \{p_1 \le p_2\}|\sum C_j$, or the case with many cliques and two processing times~$1$ and $\infty$, denoted as $R|\conflictgraph{}, p_{j}^{i} \in \{1, \infty \}|\sum C_j$, are both in $\Pclass$. 
	Furthermore, a setting derived from our motivational example turns out to be polynomial time solvable, that is, the variant of $R|\conflictgraph{}|\sum C_j$ in which jobs belonging to the same clique have the same processing times, and hence can be seen as copies of the same job.
	This remains true even if we introduce additional \emph{job dependent} assignment restrictions.
	The corresponding problem is denoted as $R|\conflictgraph{}, \jobmachinerestriction, \bagmachinespeed|\sum C_j$.
	Note that this setting is closely related to the case with clique dependent assignment restrictions introduced by Das and Wiese \cite{DBLP:conf/esa/DasW17}.
	We study this case as well and prove it to be \NPC and even \APXH already for the case with only two processing times $P|\conflictgraph{}, p_j \in \{p_1 < p_2\},\bagmachinerestriction|\sum C_j$. 
	On the other hand, it can be solved in polynomial time if the number of cliques is constant even if there are arbitrarily many processing times $P|\conflictgraph[b], \bagmachinerestriction|\sum C_j$.
	While the last result relies on costly guessing steps, we can refine it using $n$-fold Integer Programs yielding an FPT time algorithm with respect to $b$ for $P|\conflictgraph{},\bagmachinerestriction|\sum C_j$.
	Furthermore, we revisit FPT results due to Knop and Koutecký \cite{DBLP:journals/scheduling/KnopK18} for $R||\sum w_j C_j$.
	Careful extensions of the ILPs considered in this work yield that $R|\conflictgraph{}|\sum w_j C_j$ is FPT with respect to $m$, $p_{\max}$, and $\vartheta$.
	In particular $b$ is not needed as a parameter in this setting.
	If we remove $m$ from parameters, we get FPT running times with respect to $b$, $p_{\max}$, and $\vartheta$.
	Interestingly, the setting with a constant number of machines and processing times $Rm |\conflictgraph{}, p_{j}^{i} \in \{a_1, \ldots, a_k\}| \sum C_j$ is in \Pclass.
	Hence, it would be interesting if FPT results with respect to the number of distinct processing times are achievable in this setting.
	For a discussion of further open problems we refer to Section \ref{sec:open_problems}.
	
	\section{Polynomial Time Algorithms}\label{sec:polyalgos}
	In this chapter, the polynomial time algorithms are presented.
	For all of the problems we construct flow networks, sometimes disguised as matchings in bipartite graphs.
	We also use a greedy approach for $P||\sum C_j$, dynamic programming, as well as exhaustive search.
	
	In the following, we understand by $D = (V, A, capacity, cost)$ a digraph with the set of vertices $V$, the set of arcs $A$, capacities on the arcs given by a function $capacity: E \rightarrow N$, and the cost of a flow by an arc given by $cost: E \rightarrow N$. 
	A directed edge between $v_1 \in V$ and $v_2 \in V$ is denoted by $(v_1, v_2)$.
	
	\subsection{A Polynomial Time Algorithm for Identical Machines}
	Let us begin with a key procedure for the algorithm.
	In a nutshell, we prove that a perfect matching in the vertices of the graph constructed in line \ref{algline:IdentMachinesPerfectMatch} of \cref{alg:incompsolver} corresponds to a reassignment of the jobs in $S$ in a way that the assignment of the jobs to $m_1, \ldots, m_{i-1}$ is not changed and that $m_i$ is given a set of compatible jobs.
	Without loss of generality we assume that each clique $V_i$ consists of exactly $m$ jobs; if this is not the case we add dummy jobs with processing time $0$.
	Notice also that in any schedule the jobs can be divided into layers.
	Precisely, the layers are formed of the jobs that are scheduled as last on their machines, then the ones scheduled before the last ones, \ldots, and as first (which correspond to $b$-th layer). 
	We can exchange the jobs that are on a given layer without increasing the total completion time, because the job scheduled as last on a machine contributes once to the total completion time, the jobs scheduled exactly before last twice, etc.
	\begin{algorithm}
		\begin{algorithmic}[1]
			\REQUIRE A set of cliques $V_1 \cup \ldots \cup V_b$, number $1 \le i \le m$, a schedule~$S$ such that machines $m_1, \ldots, m_{i-1}$ have compatible jobs assigned.
			\ENSURE A schedule with the total completion time equal to the total completion time of~$S$, where jobs on $m_1, \ldots, m_{i}$ are independent. 
			\STATE $V_L = \{v_{L}[1], \ldots, v_{L}[b]\}$.
			\STATE $V_B = \{v_B[1], \ldots, v_B[b]\}$. \label{algline:VB}
			\STATE Construct $E$ by connecting $v_B[i]$ to the vertex $v_L[j]$ iff on machines $m_i, \ldots, m_m$ there is a job from $V_i$ scheduled as $j$-th.
			\STATE Let $M'$ be a perfect matching in $(V_L \cup V_B, E)$. \label{algline:IdentMachinesPerfectMatch}
			\FOR {$l = 1, \ldots, b$}
			\STATE Let $\{v_L[l], v_B[j]\} \in M'$.
			\STATE Exchange the job on position $l$-th on $m_i$ with any job from $V_j$ assigned to position $l$-th on $m_i, \ldots, m_m$
			\ENDFOR
			\RETURN $S$
		\end{algorithmic}
		\caption{IncompatibilitySolving$(i, S)$}
		\label{alg:incompsolver}
	\end{algorithm}
	
	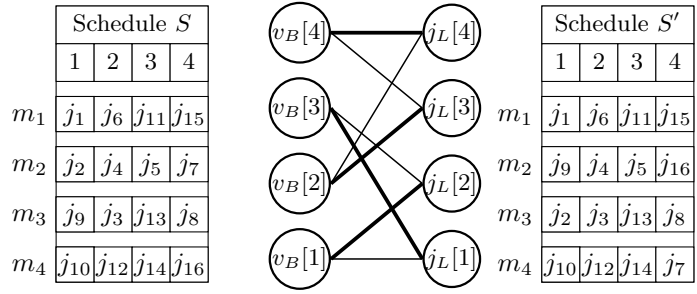
\begin{figure}
		\centering
		\small
		\begin{ganttchart}
			[
			y unit title=0.5cm,
			y unit chart=0.665cm,
			title label anchor/.style={below=-1.6ex},
			title height=1,
			progress label text={},
			bar height=0.7,
			]
			{1}{4}
			\gantttitle{Schedule $S$}{4} \\
			\gantttitlelist{1,...,4}{1} \\
			\ganttbar{$m_1$}{1}{1} \ganttbar[inline]{$j_{1}$}{1}{1} \ganttbar[inline]{$j_{6}$}{2}{2} \ganttbar[inline]{$j_{11}$}{3}{3} \ganttbar[inline]{$j_{15}$}{4}{4}\\
			\ganttbar{$m_2$}{1}{1} \ganttbar[inline]{$j_{2}$}{1}{1} \ganttbar[inline]{$j_{4}$}{2}{2} \ganttbar[inline]{$j_{5}$}{3}{3} \ganttbar[inline]{$j_{7}$}{4}{4}\\
			\ganttbar{$m_3$}{1}{1} \ganttbar[inline]{$j_{9}$}{1}{1} \ganttbar[inline]{$j_{3}$}{2}{2} \ganttbar[inline]{$j_{13}$}{3}{3} \ganttbar[inline]{$j_{8}$}{4}{4}\\
			\ganttbar{$m_4$}{1}{1} \ganttbar[inline]{$j_{10}$}{1}{1} \ganttbar[inline]{$j_{12}$}{2}{2} \ganttbar[inline]{$j_{14}$}{3}{3} \ganttbar[inline]{$j_{16}$}{4}{4}
		\end{ganttchart}
		\hspace{0.5cm}
		\begin{tikzpicture}
		[place/.style={circle,draw=black!100,line width=0.3mm,inner sep=0pt,minimum size=6mm},
		subordinate/.style={circle,draw=black!100,line width=0.3mm,inner sep=0pt,minimum size=2mm},
		myline/.style={line width=0.2mm},
		matching/.style={line width=0.5mm}]
		\node at ( 2,0)   (j1) [place] {$j_{L}[1]$};
		\node at ( 2,1)   (j2) [place] {$j_{L}[2]$};
		\node at ( 2,2)   (j3) [place] {$j_{L}[3]$};		
		\node at ( 2,3)   (j4) [place] {$j_{L}[4]$};	
		
		\node at ( 0, 0)   (b1) [place] {$v_B[1]$};
		\node at ( 0, 1)   (b2) [place] {$v_B[2]$};
		\node at ( 0, 2)  (b3) [place] {$v_B[3]$};
		\node at ( 0, 3)   (b4) [place] {$v_B[4]$};
		
		\draw [matching]  (b1.east) -- (j2.west);\draw [myline]  (b1.east) -- (j1.west);
		\draw [matching]  (b2.east) -- (j3.west);\draw [myline]  (b2.east) -- (j4.west);
		\draw [matching]  (b3.east) -- (j1.west);\draw [myline]  (b3.east) -- (j2.west);
		\draw [myline]  (b4.east) -- (j3.west);\draw [matching]  (b4.east) -- (j4.west);
		
		\end{tikzpicture}
		\begin{ganttchart}
			[
			y unit title=0.5cm,
			y unit chart=0.665cm,
			title label anchor/.style={below=-1.6ex},
			title height=1,
			progress label text={},
			bar height=0.7,
			]
			{1}{4}
			\gantttitle{Schedule $S'$}{4} \\
			\gantttitlelist{1,...,4}{1} \\
			\ganttbar{$m_1$}{1}{1} \ganttbar[inline]{$j_{1}$}{1}{1} \ganttbar[inline]{$j_{6}$}{2}{2} \ganttbar[inline]{$j_{11}$}{3}{3} \ganttbar[inline]{$j_{15}$}{4}{4}\\
			\ganttbar{$m_2$}{1}{1} \ganttbar[inline]{$j_{9}$}{1}{1} \ganttbar[inline]{$j_{4}$}{2}{2} \ganttbar[inline]{$j_{5}$}{3}{3} \ganttbar[inline]{$j_{16}$}{4}{4}\\
			\ganttbar{$m_3$}{1}{1} \ganttbar[inline]{$j_{2}$}{1}{1} \ganttbar[inline]{$j_{3}$}{2}{2} \ganttbar[inline]{$j_{13}$}{3}{3} \ganttbar[inline]{$j_{8}$}{4}{4}\\
			\ganttbar{$m_4$}{1}{1} \ganttbar[inline]{$j_{10}$}{1}{1} \ganttbar[inline]{$j_{12}$}{2}{2} \ganttbar[inline]{$j_{14}$}{3}{3} \ganttbar[inline]{$j_{7}$}{4}{4}	\end{ganttchart}
		
		\caption
		{
			An illustration of an application of \cref{alg:incompsolver}.
			Let the set of cliques be given by $\{j_{1}, j_{2}, j_{3}, j_{4}\}, \{j_5, j_6, j_7, j_8\}, \{j_{9}, j_{10}, j_{11}, j_{12}\}, \{j_{13}, j_{14}, j_{15}, j_{16}\}$ and let $i=2$ (which means that $m_1$ has already a set of compatible jobs assigned).
			For clarity we identify the labels of the vertices with the labels of the jobs.
			Notice how using a matching in the constructed graph the jobs can be exchanged in a way that $m_2$ has only compatible jobs assigned.
		}
		\label{figure:IncompatibilitySolvingIllustration}
	\end{figure}
	
	\begin{theorem}[\cite{HallTheorem}]
		\label{theorem:HallTheorem}
		A bipartite graph $(A \cup B, E)$ has a matching that saturates $A$ if and only if $|N(S)| \ge |S|$ for all $S \subseteq A$.
	\end{theorem}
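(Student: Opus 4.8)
The plan is to prove the two directions separately, with the forward implication being routine and the reverse implication carried out by induction on $|A|$.

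For necessity, suppose a matching $M$ saturates $A$. For any $S \subseteq A$, the matching assigns to each vertex of $S$ a distinct partner in $B$, and every such partner lies in $N(S)$; these partners are pairwise distinct, so $|N(S)| \ge |S|$. This direction needs no further work.

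For sufficiency, I would induct on $|A|$. The base case $|A| = 1$ is immediate, since Hall's condition gives $|N(A)| \ge 1$, so the single vertex of $A$ has a neighbor to which it can be matched. For the inductive step, assume Hall's condition holds and split into two cases according to whether some proper nonempty subset is \emph{tight}. In Case~1, every nonempty $S \subsetneq A$ satisfies the strict inequality $|N(S)| \ge |S| + 1$. Here I pick any edge $\{a,b\}$ (one exists because $|N(\{a\})| \ge 1$), delete $a$ from $A$ and $b$ from $B$, and note that removing $b$ shrinks each neighborhood by at most one; hence Hall's condition is preserved on the smaller instance. The induction hypothesis yields a matching saturating $A \setminus \{a\}$, and appending $\{a,b\}$ saturates $A$. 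In Case~2, there is a nonempty $A_0 \subsetneq A$ with $|N(A_0)| = |A_0|$. The restriction of the graph to $A_0$ and $N(A_0)$ inherits Hall's condition, so induction provides a matching $M_0$ saturating $A_0$ and using exactly the vertices of $N(A_0)$. On the complementary instance with $A_1 = A \setminus A_0$ and $B_1 = B \setminus N(A_0)$, I would reverify Hall's condition: for any $S \subseteq A_1$, applying the hypothesis to $S \cup A_0$ and subtracting the tight set gives $|N(S) \cap B_1| = |N(S \cup A_0)| - |N(A_0)| \ge (|S| + |A_0|) - |A_0| = |S|$. Induction then yields a matching $M_1$ saturating $A_1$, and $M_0 \cup M_1$ saturates $A$.

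The main obstacle is the bookkeeping in Case~2, namely verifying that Hall's condition descends to the complementary instance; this hinges on the observation that the neighbors of $S \cup A_0$ decompose into $N(A_0)$ together with the neighbors of $S$ lying outside $N(A_0)$, so that tightness of $A_0$ exactly cancels. An alternative route, perhaps more in keeping with the flow-based techniques used elsewhere in this paper, is to model the matching as an integral $s$--$t$ flow and invoke the max-flow min-cut theorem: a cut of capacity less than $|A|$ would directly exhibit a violating set $S$, so Hall's condition forces a flow of value $|A|$, i.e.\ a saturating matching.
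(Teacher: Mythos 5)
Your proof is correct: the necessity direction is immediate, and your sufficiency argument is the standard induction (the Halmos--Vaughan proof), with both cases handled properly --- in particular, the key cancellation $|N(S)\cap B_1| \ge |S\cup A_0| - |A_0| = |S|$ in the tight-set case is exactly right. Note, however, that the paper does not prove this statement at all; it is Hall's marriage theorem, imported with a citation and used as a black box in Lemma~\ref{lemma:IncompatibilitySolving}, so there is no in-paper argument to compare against. Your alternative suggestion via max-flow min-cut would also work and is arguably closer in spirit to the flow constructions used elsewhere in the paper, but either route is a complete and standard proof of the cited result.
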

	\begin{lemma}
		\label{lemma:IncompatibilitySolving}
		Let $S$ be a schedule for an instance of $P|\conflictgraph{} = V_1 \cup \ldots \cup V_b|\Sigma C_j$, such that each of the machines $m_1, \ldots, m_{i-1}$ has compatible jobs assigned in~$S$ and $b$~jobs are assigned to each machine.
		\tpikiesalgorithm{\ref{alg:incompsolver}} constructs in $O(bm + b^{5/2})$ time a schedule such that  each of the machines $m_1, \ldots, m_i$ has compatible jobs assigned, each of the machines has $b$ jobs and the total completion time of the new schedule is equal to the total completion time of $S$. 
	\end{lemma}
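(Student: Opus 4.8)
The plan is to combine the \emph{layer} observation made before the algorithm with a regular-bipartite-graph (Hall-type) argument guaranteeing the perfect matching of line~\ref{algline:IdentMachinesPerfectMatch}, and then to check that the induced swaps stay within a single layer.

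First I would fix the structure. After padding, every clique $V_k$ has exactly $m$ jobs and every machine holds exactly $b$ jobs; since there are $b$ cliques, a machine has compatible jobs assigned if and only if it holds at most one job of each clique, and by counting this means it holds \emph{exactly} one job from each clique. Consequently each already-fixed machine $m_1,\dots,m_{i-1}$ contains exactly one job of $V_k$, so $V_k$ has precisely $i-1$ jobs on $m_1,\dots,m_{i-1}$ and hence exactly $m-i+1$ jobs among $m_i,\dots,m_m$. Symmetrically, each position $j\in\{1,\dots,b\}$ carries exactly $m-i+1$ jobs among $m_i,\dots,m_m$, one per machine.

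The central step, and the one I expect to be the main obstacle, is the existence of the matching $M'$. I would pass to the bipartite \emph{multigraph} on $V_B\cup V_L$ that has one edge $\{v_B[k],v_L[j]\}$ for every job of $V_k$ occupying position $j$ on some machine in $\{m_i,\dots,m_m\}$. By the two counts above this multigraph is $(m-i+1)$-regular on both sides, and its underlying simple graph is exactly the graph $(V_L\cup V_B,E)$ built by the algorithm, with identical neighbourhoods. For any $T\subseteq V_B$ the number of multigraph edges incident to $T$ equals $(m-i+1)\lvert T\rvert$; all of them land in $N(T)$, which absorbs at most $(m-i+1)\lvert N(T)\rvert$ edges, whence $\lvert N(T)\rvert\ge\lvert T\rvert$. \Cref{theorem:HallTheorem} then provides a matching saturating $V_B$, and since $\lvert V_B\rvert=\lvert V_L\rvert=b$ this matching is perfect.

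Finally I would verify the reassignment and the running time. For each matched pair $\{v_L[j],v_B[k]\}\in M'$ the algorithm swaps the job in position $j$ on $m_i$ with a job of $V_k$ lying in the \emph{same} position $j$ on some $m_{i'}$ with $i'\ge i$; such a job exists precisely because $\{v_B[k],v_L[j]\}\in E$. Both swapped jobs belong to layer $j$ and the machines are identical, so each exchange preserves the total completion time, as argued before the algorithm statement; moreover no machine among $m_1,\dots,m_{i-1}$ is ever a donor, so these machines are untouched and stay compatible. The $b$ swaps concern pairwise distinct positions and hence do not interfere, and afterwards $m_i$ holds one job from each of the $b$ distinct cliques matched to its positions, so $m_i$ is compatible while every machine still has $b$ jobs. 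For the running time, scanning the at most $m$ machines $m_i,\dots,m_m$ over their $b$ positions constructs $E$ in $O(bm)$ time, and a Hopcroft--Karp matching on $2b$ vertices and $O(b^2)$ edges costs $O(b^2\sqrt{b})=O(b^{5/2})$, yielding the claimed $O(bm+b^{5/2})$ bound.
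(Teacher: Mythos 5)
Your proposal is correct and follows essentially the same route as the paper: the existence of the perfect matching is established by the identical Hall-type double count (each clique in a subset $V_B'$ contributes $(m-i+1)\lvert V_B'\rvert$ jobs on $m_i,\dots,m_m$, while each layer there holds only $m-i+1$ jobs, forcing $\lvert N(V_B')\rvert\ge\lvert V_B'\rvert$), which you merely phrase as regularity of the associated bipartite multigraph. Your additional verification of the within-layer swaps and of the $O(bm+b^{5/2})$ running time spells out details the paper delegates to the discussion preceding the algorithm, but introduces no new idea.
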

	\begin{proof}
		Remember that each of the cliques has exactly $m$ jobs.
		We prove that it is always possible to exchange the jobs inside the layers $1, \ldots, b$ for machines $m_i, \ldots, m_m$, so that the load on $m_i$ consists of compatible jobs. 
		Consider the structure of the graph constructed in \tpikiesalgorithm{\ref{alg:incompsolver}}.
		Take any $V_B' \subseteq V_B$. 
		Notice that the cliques corresponding to this vertices, have exactly $m-i+1$ jobs on the machines $m_i, \ldots, m_{m}$.
		A layer $i$ on the machines $m_{i}, \ldots, m_m$ has exactly $m-i+1$ jobs in total. 
		Hence, clearly the size of neighbors of $V_B'$ in $V_L$ is at least $|V_B'|$; hence by \cref{theorem:HallTheorem} there is a perfect matching in the graph.
	\end{proof}
	
	Consider an instance of $P|\conflictgraph{} |\Sigma C_j$.
	Assume that each of the cliques have $m$ jobs.
	If this is not the case, add jobs with processing time $0$.
	Order the jobs non-increasingly according to their processing times and schedule them in a round robin fashion without respect to the cliques, which is by the Smith's Rule optimal \cite{SmithRule}.
	By \cref{lemma:IncompatibilitySolving} we may easily construct a method to change the schedule to respect the cliques.
	Hence, the following theorem follows:
	\begin{theorem} 
		\label{theorem:IdenticalMachinesTotalCompletion}
		\Copy{theorem:IdenticalMachinesTotalCompletion:text}
		{
			$P|\conflictgraph{} |\Sigma C_j$ can be solved to optimality in $O(m^2n + mn^{5/2})$ time. 
		}
	\end{theorem}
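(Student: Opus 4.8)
The plan is to reduce the clique-constrained problem to the ordinary identical-machines problem $P||\Sigma C_j$ and then to repair feasibility without changing the objective by repeated use of \cref{lemma:IncompatibilitySolving}. First I would preprocess the instance so that every clique $V_1,\dots,V_b$ contains exactly $m$ jobs, padding with jobs of processing time $0$ where necessary; this costs $O(mn)$ and does not affect the optimal value, since the padding jobs can always be placed first on their machines (they are shortest under Smith's rule \cite{SmithRule}), contributing $0$ to the objective and leaving the completion times of the genuine jobs untouched. After padding there are $bm$ jobs, and in any feasible schedule each of the $m$ machines receives exactly one job of each clique, hence exactly $b$ jobs, so the layer structure described before \cref{alg:incompsolver} is well defined and consists of $b$ layers of $m$ jobs each.

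Next I would compute an optimal schedule for the relaxation in which the clique constraints are dropped. For $P||\Sigma C_j$ this is classical: sorting the jobs non-increasingly by processing time and distributing them round-robin (equivalently, placing the $m$ largest jobs in layer $1$, the next $m$ in layer $2$, and so on, with Smith's rule \cite{SmithRule} governing the order on each machine) minimizes $\Sigma C_j$, because the objective equals $\sum_{l=1}^{b} l \cdot (\text{total processing time in layer } l)$ and pairing the largest jobs with the smallest multipliers is optimal by a rearrangement argument. Its value is a lower bound for the clique-constrained optimum.

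The crucial step is to turn this relaxed-optimal schedule into a feasible one of the same cost. Starting from the round-robin schedule I would apply \tpikiesalgorithm{\ref{alg:incompsolver}} successively for $i=1,\dots,m$. The precondition of \cref{lemma:IncompatibilitySolving} (machines $m_1,\dots,m_{i-1}$ compatible, $b$ jobs per machine) holds vacuously for $i=1$ and is maintained inductively, and each invocation makes $m_i$ compatible while preserving both the per-machine job count and the total completion time, since it only exchanges jobs within layers and a within-layer exchange leaves every layer's total processing time unchanged. After the $m$-th call all machines carry pairwise compatible jobs, i.e. the schedule respects the cliques, and its objective still equals the lower bound; hence the clique-constrained optimum coincides with the relaxed optimum and the produced schedule is optimal.

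Finally I would bound the running time. Sorting and the round-robin construction take $O(bm\log(bm))$, which is lower-order, and the main loop performs $m$ calls to \tpikiesalgorithm{\ref{alg:incompsolver}}, each costing $O(bm+b^{5/2})$ by \cref{lemma:IncompatibilitySolving}, for a total of $O(bm^2+mb^{5/2})$; since the cliques partition the job set we have $b\le n$, which yields the claimed $O(m^2 n+mn^{5/2})$. I expect the main obstacle to be the middle argument — certifying that the unconstrained lower bound is actually attainable under the incompatibility constraints — which is exactly the content carried by the Hall-type condition inside \cref{lemma:IncompatibilitySolving}; the padding bookkeeping and the rearrangement optimality are routine by comparison.
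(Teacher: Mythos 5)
Your proposal is correct and follows essentially the same route as the paper: pad every clique to exactly $m$ jobs with zero-length dummies, build the relaxed optimum by sorting and round-robin (optimal by Smith's rule), and then restore feasibility at no cost by invoking \cref{lemma:IncompatibilitySolving} inductively for $i=1,\dots,m$, with the running time $O(bm^2+mb^{5/2})\subseteq O(m^2n+mn^{5/2})$ obtained exactly as you state. Your write-up is in fact somewhat more explicit than the paper's about why the relaxed optimum is a lower bound that the repaired schedule attains.
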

	\begin{proof}
		First let us notice that by adding the jobs with processing time equal to $0$ we do not increase the total completion time, because we can always schedule these jobs before the "normal" jobs, even when scheduling with respect to cliques.
		Then we may use a round robin to obtain an optimal schedule without respect to the cliques.
		The round robin means an assignment of the job with largest processing time to the position $(1,1)$, \ldots, the job with $m$-th largest processing time to $(m,1)$, the job with $m+1$-th largest processing time to $(1,2)$, etc.
		Hence by \cref{lemma:IncompatibilitySolving} and a simple inductive argument the correctness follows. 
		The complexity follows from $b = O(n)$, and from Hopcrof-Karp algorithm.
	\end{proof}	
	
	
	\subsection{Exact Polynomial Algorithms For Unrelated Machines}
	
	\begin{theorem}[\cite{NetworkFlows}, \cite{IntroductionToAlgorithms}]
		\label{theorem:MaximumFlowMinimumCost}
		For a network given by a digraph $(V, A, capacity, cost)$ the maximum flow with minimum cost can be computed in $O(|V|U(|A| + |V|\log|V|))$ time, where $U$ is the maximum flow value, using the Successive Shortest Path Algorithm~\cite{NetworkFlows} and Dijkstra's Algorithm with Fibonacci heaps~\cite{IntroductionToAlgorithms}.
	\end{theorem}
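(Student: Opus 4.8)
The plan is to prove the bound by analysing the Successive Shortest Path (SSP) algorithm: first establishing correctness through a loop invariant, and then charging the running time. To set up, for a feasible flow $f$ I would work with the residual network $G_f$ on vertex set $V$ in which every arc $(u,v)\in A$ with $f(u,v)<capacity(u,v)$ contributes a forward residual arc of cost $cost(u,v)$, and every arc with $f(u,v)>0$ contributes a backward residual arc $(v,u)$ of cost $-cost(u,v)$. The algorithm starts from the zero flow and repeatedly (i) computes a minimum-cost path $P$ from the source $s$ to the sink $t$ in $G_f$, and (ii) augments $f$ along $P$ by its bottleneck residual capacity, halting as soon as no $s$-$t$ path remains.

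The core of the argument, and the step I expect to be the main obstacle, is the invariant that after each augmentation the current flow $f$ is of minimum cost among all flows of value $v(f)$, equivalently that $G_f$ contains no negative-cost cycle. I would prove this by induction on the number of augmentations. Since $cost$ is nonnegative (the input gives $cost\colon E\to N$), the residual network of the zero flow has only nonnegative arc costs and hence no negative-cost cycle, settling the base case and allowing the initial node potentials to be taken as $\pi\equiv 0$. For the inductive step I would maintain potentials $\pi(v)$ equal to the minimum cost of an $s$-$v$ path in $G_f$; the shortest-path triangle inequality gives that the reduced cost $c^{\pi}(u,v)=cost(u,v)+\pi(u)-\pi(v)$ is nonnegative on every residual arc, while the arcs of $P$ are tight, i.e.\ have reduced cost $0$. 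Augmenting along $P$ only creates reverse arcs of these tight arcs, which again have reduced cost $0$; thus nonnegativity of reduced costs (hence absence of a negative-cost cycle) is preserved, and after recomputing the distances the potentials are updated for the next iteration.

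Granting the invariant, both optimality and maximality are immediate: the algorithm stops exactly when $G_f$ admits no $s$-$t$ path, so by the max-flow min-cut theorem $f$ is a maximum flow, and by the invariant it has minimum cost among flows of that value, i.e.\ it is a minimum-cost maximum flow. The potentials then play their second role in the complexity analysis: because all reduced costs stay nonnegative, each minimum-cost path can be found by Dijkstra's algorithm on the reduced costs rather than by a method tolerating negative weights, and with Fibonacci heaps this costs $O(|A|+|V|\log|V|)$ per iteration, with an additional $O(|V|)$ to augment along $P$ and to reset the potentials from the computed distances.

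Finally I would bound the number of iterations. As all capacities are integral and every augmentation raises $v(f)$ by at least one unit, the number of augmentations is at most the maximum flow value $U$. Multiplying the per-iteration cost by at most $U$ iterations yields $O\!\left(U\,(|A|+|V|\log|V|)\right)$, which lies within the claimed $O\!\left(|V|\,U\,(|A|+|V|\log|V|)\right)$ bound, completing the argument.
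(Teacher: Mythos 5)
The paper does not prove this theorem itself --- it is imported verbatim from the cited references, which use exactly the argument you sketch (successive shortest paths, the no-negative-cycle/minimality invariant maintained via node potentials and reduced costs, and Dijkstra with Fibonacci heaps on the reduced costs) --- so your proposal is correct and coincides with the intended proof. Note that your analysis in fact yields the sharper bound $O\left(U(|A|+|V|\log|V|)\right)$, since with $U$ defined as the maximum flow value the number of augmentations is at most $U$; the extra $|V|$ factor in the stated bound stems from the convention in the cited textbook where $U$ denotes the largest arc capacity (so that the flow value is only bounded by $|V|U$), making the theorem as stated a harmless overestimate that your argument certainly implies.
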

	
	We can solve $R|\conflictgraph{}, p_{j}^{i} \in \{1, \infty \}|\sum C_j$ by constructing a suitable flow network.
	Assume that there exists a schedule with finite cost; otherwise use an algorithm following from \cref{theorem:IdenticalMachinesTotalCompletion}.
	In this case each of the machines can do at most one job from a clique.
	The total completion time of the jobs assigned to a machine is a function of the number of such jobs.
	We refer the reader to \cref{figure:AlgorithmUnitTimeBags} for an overview of a sample construction.
	\begin{theorem}	
		\label{theorem:SchedulingByFlows}
		\Copy{theorem:SchedulingByFlows:text}
		{
			The problem $R|\conflictgraph{}, p_{j}^{i} \in \{1, \infty \}|\sum C_j$ can be solved in $O(m^2n^3\log mn)$ time.
		}
	\end{theorem}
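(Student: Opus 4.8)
The plan is to encode the problem as a minimum-cost maximum-flow instance and then invoke \cref{theorem:MaximumFlowMinimumCost}. The starting point is the observation already recorded before the statement: in any finite-cost schedule each machine processes at most one job of each clique, and since every finite processing time equals $1$, a machine carrying $t$ jobs contributes exactly $1+2+\dots+t=t(t+1)/2$ to $\sum C_j$, irrespective of which jobs they are and in which order they run. In particular, the marginal cost of the $\ell$-th job placed on a machine is exactly $\ell$; this convex, job-independent cost structure is what the network will exploit.

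Concretely, I would construct a digraph $D=(V,A,capacity,cost)$ as follows. Besides a source $\sigma$ and a sink $\tau$, introduce a node $J_j$ for every job $j$, a node $M_i$ for every machine $i$, and an auxiliary node $(c,i)$ for every pair of a clique $c$ and a machine $i$ such that some job of $c$ is eligible on $i$ (i.e.\ $p_j^i=1$). The arcs are: $(\sigma,J_j)$ of capacity $1$ and cost $0$ for every job; $(J_j,(c,i))$ of capacity $1$ and cost $0$ whenever $j$ lies in clique $c$ and $p_j^i=1$; $((c,i),M_i)$ of capacity $1$ and cost $0$; and, for every machine $i$ and every $\ell\in\{1,\dots,b\}$, a parallel arc $(M_i,\tau)$ of capacity $1$ and cost $\ell$. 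I would then compute a minimum-cost flow of value $n$.

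For correctness I would establish a value-preserving correspondence between finite schedules and integral $\sigma$–$\tau$ flows of value $n$. By the integrality of min-cost flow (all capacities are integral) an optimal flow may be taken integral; a flow of value $n$ saturates every arc $(\sigma,J_j)$, hence routes each job through exactly one eligible node $(c,i)$, i.e.\ assigns it to an eligible machine. The capacity-$1$ arc $((c,i),M_i)$ is precisely the gadget enforcing that at most one job of clique $c$ reaches $M_i$, so the clique constraints hold. Since the parallel arcs leaving $M_i$ have strictly increasing costs $1,2,3,\dots$, a cost-minimising flow sends the $t$ units entering $M_i$ through the $t$ cheapest of them, paying $1+2+\dots+t=t(t+1)/2$; summing over machines shows that the flow cost equals the total completion time, and the converse direction is immediate, so the minimum cost equals the optimum. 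Feasibility comes for free: a finite-cost schedule exists iff the maximum flow has value $n$; otherwise no finite schedule exists (the case flagged before the statement, cf.\ \cref{theorem:IdenticalMachinesTotalCompletion}).

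Finally I would bound the running time. The network has $|V|=O(bm+n)=O(nm)$ vertices (using $b\le n$), $|A|=O(nm)$ arcs, and the required flow value is $U=n$. Plugging these into \cref{theorem:MaximumFlowMinimumCost} yields $O(|V|\,U\,(|A|+|V|\log|V|))=O(nm\cdot n\cdot nm\log(nm))=O(m^2n^3\log mn)$, as claimed. I expect the only real subtlety to be the two modelling steps — the capacity-$1$ clique-on-machine node that turns each incompatibility constraint into a single arc capacity, and the parallel arcs of increasing cost that linearise the convex per-machine contribution $t(t+1)/2$ — together with checking that the vertex and arc counts collapse to $O(nm)$ so that the stated bound is met.
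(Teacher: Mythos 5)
Your construction is essentially the paper's own: the same min-cost max-flow encoding with a capacity-$1$ clique--machine gadget node enforcing the incompatibility constraints and unit-cost arcs of increasing cost $1,2,\dots$ linearising the convex per-machine contribution, followed by the same integrality and running-time argument via \cref{theorem:MaximumFlowMinimumCost}. The only cosmetic differences are the orientation (jobs on the source side instead of the sink side), the use of $b$ parallel machine--sink arcs in place of explicit position nodes $(m_i,\ell)$, and omitting ineligible arcs rather than giving them cost $p_j^i-1\in\{0,\infty\}$; none of this changes the argument or the $O(m^2n^3\log mn)$ bound.
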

	\begin{figure}
		\centering
		\small
		\begin{tikzpicture}
		[place/.style={circle,draw=black!100,line width=0.3mm,inner sep=1pt,minimum size=8mm},
		subordinate/.style={circle,draw=black!100,line width=0.3mm,inner sep=1pt,minimum size=7mm},
		myline/.style={line width=0.3mm}]
		\node at ( 0,2)   (s) [subordinate] {$s$};
		\node at ( 2,2)   (sprim) [subordinate] {$s'$};
		
		\node at ( 4.25,0)  (m1j1) [place] {\small $m_1,1$};
		\node at ( 4.25,1.0)    (m1jx) [place] {\small $\ldots$};
		\node at ( 4.25,2)    (m1jn) [place] {\small $m_1,n$};
		\node at ( 4.25,3)   (mxjx) [place] {\small $\ldots$};
		\node at ( 4.25,4) (mmjn) [place] {\small $m_m, n$};
		
		\node at ( 6.5,1.0)  (m1) [place] {$m_1$};
		
		\node at ( 8.5,1)    (m1b1) [place] {\small $m_1,1$};
		\node at ( 8.5,3)   (m1bx) [place] {\small $\ldots$};
		\node at ( 8.5,4)   (m1bb) [place] {\small $m_1,b$};
		
		\node at ( 11, 0)  (j1b1) [place] {\small $v_1^{V_1}$};
		\node at ( 11 ,1)  (jxb1) [place] {\small $\ldots$};
		\node at ( 11,2)  (jnb1) [place] {\small $v^{V_1}_{|V_1|}$};
		\node at ( 11,3)  (jxbx) [place] {\small $\ldots$};
		\node at ( 11,4)  (jnbb) [place] {\small $v^{V_b}_{|V_b|}$};
		
		\node at ( 13.25,2)  (t) [subordinate] { $t$};
		
		\draw [myline,->]  (s) -- node[midway, fill=white]  {$(c,0)$} (sprim);
		
		\draw [myline,->]  (sprim.east)  |- node[near end, fill=white] {$(1,1)$} (m1j1);
		\draw [myline,->]  (sprim.east)  |- node[near end, fill=white] {$(\ldots)$} (m1jx);
		\draw [myline,->]  (sprim.east)  |- node[near end, fill=white] {$(1,n)$}  (m1jn);
		\draw [myline,->]  (sprim.east)  |- node[near end,, fill=white] {$(\ldots)$} (mxjx);
		\draw [myline,->]  (sprim.east)  |- node[near end,, fill=white] {$(1,n)$} (mmjn);
		
		\draw [myline,->]  (m1jx.east)  -- (m1.west);
		\draw [myline,-]  (m1jx.east)  -| node[near start, fill=white] {$(1,0)$} ($(m1jx.east)!.85!(m1.west)$);
		\draw [myline,-]  (m1j1.east)  -| node[near start, fill=white] {$(1,0)$} ($(m1jx.east)!.85!(m1.west)$);
		\draw [myline,-]  (m1jn.east)  -| node[near start, fill=white] {$(1,0)$} ($(m1jx.east)!.85!(m1.west)$);
		
		\draw [myline,->]  (m1.east)  |- node[near end, fill=white] {$(1,0)$} (m1b1);
		\draw [myline,->]  (m1.east)  |- node[near end, fill=white] {$(1,0)$}(m1bx);
		\draw [myline,->]  (m1.east)  |- node[near end, fill=white] {$(1,0)$} (m1bb);
		
		\draw [myline,->]  (m1b1.east)  |- node[near end, fill=white] {\small $(1, cost)$} (j1b1);
		\draw [myline,->]  (m1b1.east)  |- node[near end, fill=white] {\small $(1, cost)$} (jxb1);
		\draw [myline,->]  (m1b1.east)  |- node[near end, fill=white] {\small $(1, cost)$} (jnb1);
		\draw [myline,->]  (m1bx.east)  |- node[near end, fill=white] {\small $(1, cost)$} (jxbx);
		\draw [myline,->]  (m1bb.east)  |- node[near end, fill=white] {\small $(1, cost)$} (jnbb);
		
		\draw [myline,->]  (jnb1.east)  -- (t.west);
		\draw [myline,-]  (jnb1.east)  -| node[near start, fill=white] {$(1,0)$} ($(jnb1.east)!.85!(t.west)$);
		\draw [myline,-]  (j1b1.east)  -| node[near start, fill=white] {$(1,0)$} ($(jnb1.east)!.85!(t.west)$);
		\draw [myline,-]  (jxb1.east)  -| node[near start, fill=white] {$(1,0)$} ($(jnb1.east)!.85!(t.west)$);
		\draw [myline,-]  (jxbx.east)  -| node[near start, fill=white] {$(1,0)$} ($(jnb1.east)!.85!(t.west)$);
		
		\draw [myline,-]  (jnbb.east)  -| node[near start, fill=white] {$(1,0)$} ($(jnb1.east)!.85!(t.west)$);
		\end{tikzpicture}
		\caption{
			An illustration of the flow network constructed for \cref{theorem:SchedulingByFlows}. 
			The first field of an edge's label is its capacity, the second one is the cost per unit of flow.
			For a $v_j \in V_k$ the $cost$ field in an arc $((m_i, k),v_j)$ is $p_j^i-1$, hence it is $0$ or $\infty$.
			Notice how the cost of a flow by the network corresponds to a cost of a schedule and how a capacity of an edge forces the flow to "respect the cliques".
		}
		\label{figure:AlgorithmUnitTimeBags}
	\end{figure}
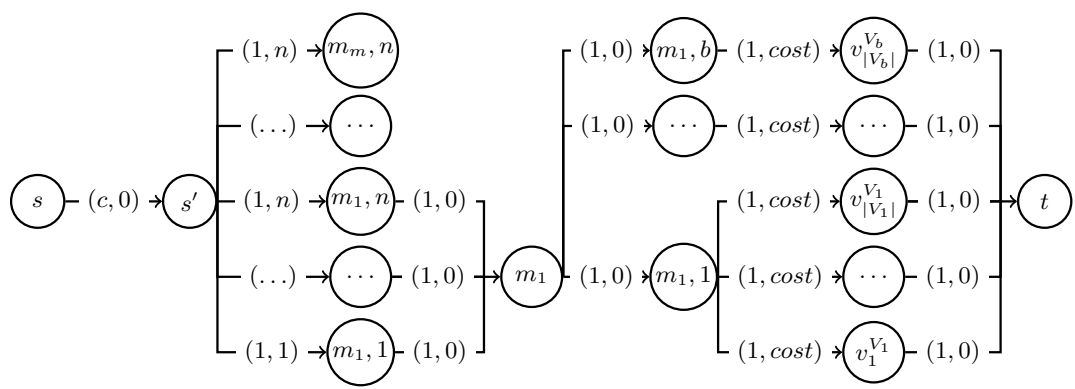
	
	\label{apx:FirsAlgorithmDescription}
	\begin{proof}
		Consider \tpikiesalgorithm{\ref{alg:identical_unit_time_restricted_assignment}}.
		Let us make two observations.
		The arc between $s$ and $s'$ has capacity other than $n$ only because the algorithm is reused for another problem.
		The arcs $A_2$ and $A_3$ could be merged, but for the clarity of the notation they are separated.
		To see that the algorithm works consider the interpretation of the flow in the constructed network.
		Assume that there is a schedule with a finite total completion time.
		In this case we can easily construct a flow with the cost equal to cost of the schedule, by considering to which machines the jobs are assigned and how many jobs are assigned to a given machine.
		Consider an integral flow with the minimum cost; notice that by the fact that all the capacities are integral such a flow exists.
		We show that it corresponds to a schedule with the minimum cost.
		A flow on an arc $(s', (m_i, l))$ corresponds to an assignment of a job as $l$-th on machine $m_i$.
		If $(s', (m_i, l))$ has a flow, all the arcs $(s', (m_i, l-1)), \ldots, (s', (m_i, 1))$ have to have a flow, due to the fact that the cost of the flow is minimal.
		Moreover, the cost of the flow by $(s', (m_i, l))$ is exactly the contribution of the jobs scheduled as $l$-th on the machine.
		Due to the fact that the arcs in $A_4$ have capacity $1$ the cliques restrictions are satisfied.
		Finally notice that the flows by the arcs in $A_5$ correspond to an assignment of the jobs to the machines and we can order them in any way on the machines.
		
		The complexity follows from an observation that the number of vertices and the number of arcs are both $O(nm)$ and that the maximum flow value is $O(n)$.
		Hence by \cref{theorem:MaximumFlowMinimumCost} the problem can be solved in $O(m^2n^3\log mn)$ time.	\end{proof}
	\begin{algorithm}
		\begin{algorithmic}[1]
			\REQUIRE A set of cliques $V_1, \ldots, V_b$, a set of $m$ unrelated machines $M =\{m_1, \ldots, m_m\}$, a parameter $c$.
			\ENSURE An optimal schedule.
			\STATE 	 Construct the following digraph.
			\begin{align*}
			V &\leftarrow \{s, s'\} \cup (M \times \{1, \ldots, n\})  \cup M \cup (M \times \{V_1, \ldots, V_b\}) \cup_{i \in {1, \ldots, b}}V_i \cup \{t\} \\
			A_1 &\leftarrow \{(s, s')\} \\ 
			A_2 &\leftarrow \{(s', (m_i, j))| i \in {1, \ldots, m}, j \in {1, \ldots, n} \} \\
			A_3 &\leftarrow \{((m_i, j), m_i)| i \in {1, \ldots, m}, j \in {1, \ldots, n} \} \\
			A_4 &\leftarrow \{(m_i, (m_i, V_j))| i \in {1, \ldots, m}, j \in {1, \ldots, b}\} \\
			A_5 &\leftarrow \{((m_i, V_j), v_j)|i \in {1, \ldots, m}, j \in {1, \ldots b}, v_j \in V_j\} \\
			A_6 &\leftarrow \cup_{i = 1}^{b} V_i \times \{t\} \\
			A   &\leftarrow A_1 \cup A_2 \cup A_3 \cup A_4 \cup A_5 \cup A_6 \\
			capacity(e) &= \left\{
			\begin{array}{ll}
			c \: |& \: e = (s, s')\\
			1 \: |& \: otherwise
			\end{array}
			\right. \\
			cost(e) &= \left\{
			\begin{array}{ll}
			j \:      &| \: e = (s', (m_i, j)) \\
			p_{j}^{i}-1 \: &|\: e = ((m_i, V_j), v_j) \land v_j \in V_j \\
			0 \:      &|\: otherwise
			\end{array}
			\right.
			\end{align*} 
			\STATE Calculate a maximum integral flow with the minimum cost in $D = (V, A, capacity, cost)$.
			\STATE Schedule $S$: assign the jobs according to the maximum flow in $D$.  
			\RETURN $S$.
		\end{algorithmic}
		\caption{An optimal algorithm for $R|\conflictgraph{}, p_{j}^{i} \in \{1, \infty \}|\sum C_j$.}
		\label{alg:identical_unit_time_restricted_assignment}
	\end{algorithm}
	\noindent
	\noindent 
	We leave the proof of the following claim to the reader.
	\begin{claim}
		\label{claim:OnSpreadingTheSmallJobs}
		Assume that for the problem $R|V_1 \cup V_2, p_j \in \{p_1 \le p_2\}|\sum C_j$ there is a schedule with $n_1$ jobs assigned with processing time $p_1$ and where the jobs are assigned according to Smith's Rule~\cite{SmithRule}.
		If the number of the machines to which these $n_1$ jobs are assigned is maximal, then it has the smallest total completion time among all the schedules with $n_1$ jobs assigned with processing time $p_1$.
	\end{claim}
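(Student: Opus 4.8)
The plan is to argue by a local exchange that increasing the number of machines carrying a $p_1$-job never increases $\sum C_j$; since the configuration singled out in the statement is exactly the one maximising this number, it then attains the minimum over all schedules with $n_1$ jobs of processing time $p_1$.

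First I would record the structural consequences of having only two cliques. Every machine receives at most one job of $V_1$ and at most one job of $V_2$, hence at most two jobs, and a two-job machine holds exactly one job from each clique. Using the layer viewpoint already introduced above, on a two-job machine the job scheduled first contributes its processing time twice and the job scheduled last contributes it once, while a single job contributes once; by Smith's rule the smaller job is placed first. Thus $\sum C_j = \sum_j p_j + \Delta$, where $\Delta$ is the sum of the processing times of the jobs occupying the front (doubled) positions. With $n_1$ and $n_2 = n - n_1$ fixed, the first term is constant, so minimising $\sum C_j$ amounts to minimising $\Delta$, that is, filling the doubled positions with small jobs and using as few two-job machines as possible. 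Maximising the number of machines that receive a $p_1$-job is precisely minimising the number of machines that carry \emph{two} $p_1$-jobs.

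Next I would give the exchange step. Suppose the spread is not maximal: then some machine $A$ carries two $p_1$-jobs (one from each clique), while some machine $B$ carries no $p_1$-job. I distinguish the state of $B$. (i)~If $B$ is empty, move one $p_1$-job of $A$ to $B$; the two machines then contribute $p_1 + p_1$ instead of $3p_1$, so $\sum C_j$ drops by $p_1$. (ii)~If $B$ holds a single $p_2$-job of some clique $V_c$, move to $B$ the $p_1$-job of $A$ lying in the \emph{other} clique (keeping $B$ clique-feasible); the pair now contributes $p_1 + (2p_1 + p_2)$ instead of $3p_1 + p_2$, leaving $\sum C_j$ unchanged. (iii)~If every machine without a $p_1$-job is full with two $p_2$-jobs, pick such a $B$ and swap one $p_1$-job of $A$ with the $p_2$-job of $B$ belonging to the \emph{same} clique (so both machines stay clique-feasible); the two machines then contribute $2(2p_1 + p_2)$ instead of $3p_1 + 3p_2$, a change of $p_1 - p_2 \le 0$. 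In every case the number of machines carrying a $p_1$-job strictly increases and $\sum C_j$ does not increase.

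Finally, iterating the exchange terminates at a schedule of maximal spread whose total completion time is at most that of the schedule we started from; applying this to an optimal schedule for the fixed value $n_1$ shows that the minimum is attained with maximal spread, which is the assertion. The step I expect to be most delicate is the bookkeeping of clique-feasibility inside the exchange: the reason the moves are always legal is exactly that a two-job machine uses one job from each clique, so in cases (ii) and (iii) there is always a job of the right clique available to move or to swap. The one remaining point to check is that, whenever the spread is below its maximum $\min(n_1, m)$, a target machine $B$ without a $p_1$-job exists; this is a simple counting argument, since if the spread is not maximal some machine carries two $p_1$-jobs and hence, by pigeonhole, at least one machine carries none.
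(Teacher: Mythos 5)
Your reduction of the objective to the constant term $n_1p_1+(n-n_1)p_2$ plus the sum $\Delta$ of the processing times occupying the doubled front slots is correct, and the arithmetic in your three exchange cases is consistent. The gap is that the claim lives on \emph{unrelated} machines: a job does not own the value $p_1$; the pair $(j,i)$ does, i.e.\ $p_j^i\in\{p_1,p_2\}$, and ``assigned with processing time $p_1$'' means the job happens to sit on a machine where it takes $p_1$. Every one of your exchanges silently assumes a job keeps its processing time when moved from machine $A$ to machine $B$. In case~(i) the moved job may take $p_2$ on $B$, which destroys the invariant that exactly $n_1$ jobs run at $p_1$ and invalidates the claimed contribution $p_1+p_1$; in case~(iii) the $p_2$-job pushed onto $A$ may become a $p_1$-job there, with the symmetric effect. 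For the same reason the maximal spread is not $\min(n_1,m)$ but an instance-dependent quantity (only some machines can run a given job at $p_1$), so a schedule can fail your termination test while already having maximal spread, and conversely the existence of $A$ and $B$ does not guarantee that any $p_1$-job of $A$ can be re-assigned to $B$ at processing time $p_1$. A further structural mismatch: the claim quantifies over \emph{all} schedules with $n_1$ jobs at $p_1$, and two such schedules may run entirely different subsets of jobs at $p_1$ on entirely different machines; a local relocation of the jobs of one fixed schedule cannot in general reach the maximal-spread configuration.

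Your argument would be essentially fine for the identical-machines variant with two job sizes, but that is not the setting in which the claim is used (it feeds \cref{theorem:TwoCliquesTwoTimes}, where the flow network decides which jobs get their $p_1$-machines). The intended argument avoids constructing intermediate feasible schedules altogether: it takes two schedules with spreads $m'\le m''$, fixes the convention that the remaining jobs are placed greedily at $p_2$, writes each cost purely in terms of aggregate counts (how many $p_1$-jobs are counted once versus twice, how many $p_2$-jobs remain, how many machines are still free), and verifies case by case that the larger spread is never worse. Because it only compares costs of the two given schedules rather than morphing one into the other, it is immune to the machine-dependence problem that your exchanges run into. To salvage an exchange-style proof you would need an augmenting-path argument certifying that a processing-time-preserving reassignment exists whenever the spread is submaximal, which is essentially what the flow formulation already provides.
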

	
	Let us guess the number of jobs assigned with processing time $p_1$ in a solution to an instance of $R|\conflictgraph[2], p_{j}^{i} \in \{p_1 \le p_2\}|\sum C_j$.
	By the claim and the algorithm following from \cref{theorem:SchedulingByFlows} we may find distribution of these jobs to the machines and schedule the rest of the jobs with processing time $p_2$.
	Hence we have the following.
	\begin{theorem}
		\label{theorem:TwoCliquesTwoTimes}
		\Copy{theorem:TwoCliquesTwoTimes:text}
		{
			$R|\conflictgraph[2], p_{j}^{i} \in \{p_1 \le p_2\}|\sum C_j$ can be solved in $O(m^2n^4\log mn)$ time.
		}
	\end{theorem}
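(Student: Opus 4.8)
The plan is to guess the number $n_1$ of jobs that are assigned with processing time $p_1$ in some optimal schedule, and for each of the $O(n)$ possible values of $n_1$ to reduce the residual problem to a single minimum-cost flow computation of the kind analysed in \cref{theorem:SchedulingByFlows}. The value returned is the best schedule found over all guesses.

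The first thing I would record is a structural consequence of having only two cliques: every machine receives at most one job from $V_1$ and at most one from $V_2$, hence at most two jobs in total. Under Smith's rule the smaller job runs first, so the contribution of a machine to $\sum C_j$ is a fixed function of the (at most two) processing times placed on it --- a job that runs first on a two-job machine is charged twice and the trailing job once. Consequently, once the split into $n_1$ small and $n-n_1$ large placements is fixed, the objective is determined by how many machines carry each combination of sizes. The subtlety is that, on unrelated machines, declaring a job \emph{small} is a statement about the job--machine pair $p_j^i=p_1$ rather than about the job itself, so choosing the small placements subject to the clique constraints is exactly the kind of assignment a flow is meant to resolve.

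For a fixed $n_1$ I would then invoke \cref{claim:OnSpreadingTheSmallJobs}: among all Smith-rule schedules that use exactly $n_1$ small placements, one that spreads these placements over the largest possible number of machines is optimal. This lets me commit in advance to the machine configuration of the small jobs (how many machines carry one versus two of them), which in turn pins down the positional multiplier each small placement is charged. With these multipliers fixed, I would run \tpikiesalgorithm{\ref{alg:identical_unit_time_restricted_assignment}} with its capacity parameter $c$ set to $n_1$, so that exactly $n_1$ units of flow are routed and the arc costs --- adapted to the two sizes $p_1,p_2$ and to the multipliers dictated by the spread --- make the minimum-cost integral flow select the cheapest feasible set of small placements respecting both cliques. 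The remaining $n-n_1$ large jobs are then scheduled into the leftover clique-slots, and because they all share the size $p_2$ and, by Smith's rule, sit behind any small job on the same machine, their total contribution is a fixed correction term determined by the small-job configuration.

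The step I expect to be the main obstacle is precisely this coupling between the two job sizes: the positional multiplier of a small placement depends on whether a large job is later stacked behind it (cost $2p_1$ rather than $p_1$), yet a flow of value $n_1$ only routes the small jobs. \cref{claim:OnSpreadingTheSmallJobs} is what breaks the coupling --- by certifying that maximal spreading is optimal it removes the freedom in the small-job configuration, after which the residual large-job placement is routine and its cost enters as a closed-form correction. Once correctness is in place the running time is immediate: there are $O(n)$ guesses for $n_1$, and by \cref{theorem:MaximumFlowMinimumCost} each flow on a network with $O(nm)$ vertices and arcs and flow value $O(n)$ costs $O(m^2n^3\log mn)$, for a total of $O(m^2n^4\log mn)$ as claimed in \cref{theorem:TwoCliquesTwoTimes}.
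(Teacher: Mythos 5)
Your proposal follows essentially the same route as the paper: guess the number $n_1$ of small placements, invoke \cref{claim:OnSpreadingTheSmallJobs} to fix the spread of small jobs over a maximal number of machines, run \tpikiesalgorithm{\ref{alg:identical_unit_time_restricted_assignment}} with capacity parameter $c=n_1$ to realize that spread at minimum cost, and fill the remaining clique-slots greedily with $p_2$-placements, for a total of $O(n)$ flow computations and the claimed $O(m^2n^4\log mn)$ bound. Your extra discussion of why the positional multipliers decouple once the claim is applied is a slightly more explicit account of the same argument; there is no substantive difference.
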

	\begin{proof}
		Notice that we can guess the number of the jobs to be scheduled with processing time $p_1$.
		Let this number be $n_1$.
		Notice that any machine can do at most $2$ jobs.
		
		Also notice that if we have a partial schedule for some of the jobs, then we may always schedule the remaining jobs greedily with processing time at most $p_2$, respecting the cliques, and not increasing the completion time of the jobs already scheduled.
		To see that it is always possible, assume that there is a job $v \in V_i$, where $V_i$ is one of two cliques, such that it cannot be scheduled.
		Hence any machine on which it cannot be scheduled has to have a job from $V_i$ but the number of such machines clearly has to be less than $m$.
		
		Let us use \tpikiesalgorithm{\ref{alg:identical_unit_time_restricted_assignment}} to find a subset of $n_1$ jobs that is assigned to as many machines as possible - let us set the parameter $c$ to $n_1$ in the algorithm.
		Hence, by an easy observation, the optimal flow corresponds to a schedule of $n_1$ jobs assigned with processing time $p_1$, with the minimum total completion time among all such schedules.
		Clearly, greedy assignment of the remaining jobs is optimal - and by the fact that we guessed the number of jobs assigned with processing time $p_1$ we may assume that all of the remaining jobs are assigned with~$p_2$.
	\end{proof}
	
	Consider the problem $P|\conflictgraph[b], \bagmachinerestriction|\sum C_j$.
	Take any subset of cliques and order it, such an ordered subset we call a {\em configuration}.
	The number of configurations is $O(1)$, by the fact that $b$ is constant in this setting. 
	We may guess how many machines have a given configuration and we can check if all the guesses are feasible with respect to jobs.
	We may, by a matching method, check if we can assign the configurations to actual machines.
	After this by a matching technique similar to the one used in \cite{DBLP:journals/cacm/BrunoCS74} we may find the best schedule for a given multiset of configurations.
	Together this gives us the following.
	\begin{theorem}
		\label{theorem:identicalMachinesBBags}
		\Copy{theorem:identicalMachinesBBags:text}
		{
$P|\conflictgraph[b], \bagmachinerestriction|\sum C_j$ can be solved in time $O(m^{f(b)}n^{3}m^{}\log mn)$, where $f(b) = O(\sum_{i = 0}^{b} \binom{b}{i}i!)$.
		}
	\end{theorem}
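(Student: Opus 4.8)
\section*{Proof proposal}

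The plan is to characterise every schedule by the ``shape'' it induces on each machine and to enumerate over these shapes. Formally, for a single machine a \emph{configuration} records which cliques contribute a job to that machine, together with the order in which these jobs are processed; since two jobs of the same clique are incompatible, a machine receives at most one job per clique, so a configuration is exactly an ordered subset of $[b]$. The number of configurations is therefore $f(b)=\sum_{i=0}^{b}\binom{b}{i}i!$, a constant. By the layering argument already used for \cref{theorem:IdenticalMachinesTotalCompletion} (equivalently, by Smith's rule \cite{SmithRule}), a job processed $\ell$-th from the end contributes $\ell\cdot p_j$ to $\sum C_j$; hence a configuration endows each of its cliques with a fixed \emph{multiplier} equal to the position of that clique in the order. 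Consequently a schedule is fully described, up to its objective value, by (i) a multiset of configurations, one per machine, and (ii) an assignment of the actual jobs to the resulting slots, each slot carrying a clique label and a multiplier. The crucial observation is that, because the machines are \emph{identical}, the cost $\ell\cdot p_j$ of filling a slot does not depend on which machine the slot sits on, so the objective value depends only on the multiset of configurations and on the job-to-slot assignment, whereas the restrictions $M(k)$ only influence which configuration a given machine is allowed to realise.

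First I would guess, for each of the $f(b)$ configurations $C$, the number $n_C\in\{0,\dots,m\}$ of machines realising $C$; this yields $O(m^{f(b)})$ guesses. A guess is discarded unless $\sum_C n_C=m$ and, for every clique $k$, the number $\sum_{C\ni k} n_C$ of slots labelled $k$ equals $|V_k|$ (job-count feasibility). Next I would test whether the guessed configurations can be placed on concrete machines subject to the clique restrictions: build a bipartite graph whose left vertices are the $\sum_C n_C$ configuration instances and whose right vertices are the machines, joining an instance of $C$ to machine $i$ exactly when $i\in\bigcap_{k\in C}M(k)$, and check for a matching saturating the left side. By \cref{theorem:HallTheorem} this feasibility is decidable, and it can be evaluated with the flow machinery of \cref{theorem:MaximumFlowMinimumCost}. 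Finally, for every feasible guess I would compute the minimum objective value: since cost decouples across cliques and is machine-independent, for each clique $k$ the jobs of $V_k$ must be matched to the $|V_k|$ slots labelled $k$ so as to minimise $\sum \ell\cdot p_j$, which by a rearrangement argument is achieved by assigning larger processing times to smaller multipliers (or, uniformly, by a Bruno--Coffman--Sethi-style min-cost matching \cite{DBLP:journals/cacm/BrunoCS74}). The output is the minimum over all feasible guesses.

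Correctness follows because any optimal schedule induces a feasible guess, and the per-guess computation returns an assignment no worse than that schedule, while conversely every feasible guess can be turned into a valid schedule of the computed cost. For the running time there are $O(m^{f(b)})$ guesses, and for each the feasibility matching and the cost computation are carried out by the flow and matching routines bounded through \cref{theorem:MaximumFlowMinimumCost}, contributing the remaining $O(n^{3}m\log mn)$ factor. The step I expect to be the main obstacle is the decoupling itself together with the machine-placement feasibility: one must argue carefully that fixing only the \emph{multiset} of configurations (and not which machine realises which one) already suffices for computing the optimal cost --- this is exactly where identical machines are used --- and then verify Hall's condition for the placement matching, which is delicate because a configuration is eligible on a machine only if \emph{all} of its cliques are, so the neighbourhood of a set of instances is an intersection of the sets $M(k)$.
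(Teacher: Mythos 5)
Your proposal is correct and follows essentially the same route as the paper's proof: enumerate the $O(m^{f(b)})$ multisets of machine configurations (ordered subsets of cliques), check job-count feasibility, decide placement of configurations on machines under the restrictions $M(k)$ via a bipartite perfect matching, and compute the optimal job-to-slot assignment via a Bruno--Coffman--Sethi-style minimum-cost matching with position multipliers. The "obstacle" you flag (verifying Hall's condition for the placement) is handled in the paper, as in your proposal, simply by computing a maximum matching and checking whether it is perfect, so no separate combinatorial argument is needed there.
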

	\begin{proof}
		Consider \tpikiesalgorithm{\ref{alg:IdenticalBagsRestricted}}.
		First notice that if the number of cliques is limited, then also the number of the possible assignments of the cliques to a machine is limited; we call such an assignment {\em machine configuration}.
		By an assignment of a clique we understand a reservation of a place for a job of the clique $V_i$ on a $k$-th position.
		The total possible number of ways to assigns cliques to a machine is $f(b) = \sum_{i = 0}^{b} \binom{b}{i}i!$, which corresponds to all the choices of $i=0,\ldots,b$ cliques among $b$ cliques in total and the assignment the chosen cliques to the positions $1, \ldots, i$ on the machine.
		Such an assignment done for $m$ machines at once we call {\em machines configuration}.
		Hence we check all possible machines configurations and their number is $O(m^{f(b)})$.
		Given a machines configuration it might be the case that the configuration has places reserved that are incompatible  with the jobs in $V_1 \cup \ldots \cup V_b$, in this case we may simply skip the configuration.
		Given a machines configuration it might be the case that due to clique dependent restrictions it is not possible to assign the machine configurations to machines.
		This is verified by finding the largest matching in the graph $(V, E)$.
		If there is perfect matching $M'$, then it is possible to assign machine configurations to the machines.
		The meaning of the matching $M''$ is that we have to assign the jobs from cliques to the positions in the configuration, represented by $(V', E', cost)$, which is a construction similar to the one presented in~\cite{DBLP:journals/cacm/BrunoCS74}.
		Hence using $M''$ and $M'$ one can easily construct a schedule.
		A feasible schedule with the smallest cost is an optimal one.
	\end{proof}
	
	\begin{algorithm}
		\begin{algorithmic}[1]
			\REQUIRE A set of cliques $V_1 \cup \ldots \cup V_b$, a set of $m$ identical machines $M$, a clique-machine compatibility graph $G_{bm}$.
			\ENSURE An optimal schedule or an information that the schedule does not exist.
			\FOR {machines configuration $MC$ in the set of all possible machines configurations}
			\STATE If $MC$ contains different number of places for cliques than $|V_1|, \ldots, |V_b|$ continue.
			\STATE Let $V = M \cup MC$.
			\STATE Let $E = \{\{m,C\}|m \in M, C \in MC, C=(V_a, \ldots, V_z), \{m,V_a\} \in G_{bm}, \ldots, \{m, V_z\} \in G_{bm}\}$.
			\STATE If there is no perfect matching in $G = (V, E)$ continue; otherwise let $M'$ be the matching.
			\STATE Let $V' = \bigcup_{C \in MC}\bigcup_{i = 1,\ldots, |C|} \{(C, i)\} \cup V_1 \cup \ldots \cup V_b$ 
			\COMMENT {By an abuse of the notation we assume that $C$ is an ordered subset of $\{1, \ldots, b\}$ corresponding to a machine configuration in $MC$, hence $C[i]$ is the $i$-th clique from the end in the configuration.}
			\STATE Let $E' = \bigcup_{C \in MC}\bigcup_{i = 1,\ldots, |C|} V_{C[i]} \times \{(C, i)\}$.
			\STATE Let $\forall_{v \in C[i]} cost(\{(C,i), v\}) = ip(v)$.
			\STATE Find the matching with the smallest cost $M''$ in $(V', E', cost)$.
			\STATE Schedule $S$: Assign jobs to machine configurations based on $M''$, assign machine configurations to machines based on $M'$.
			\ENDFOR
			\RETURN {Schedule with the smallest total completion time or "NO" if no feasible schedule was found.}
		\end{algorithmic}
		\caption{An exact algorithm for $P|\conflictgraph[b], \bagmachinerestriction|\sum C_j$.}
		\label{alg:IdenticalBagsRestricted}
	\end{algorithm}
	
	By a simple dynamic programming we obtain. 
	\begin{theorem}
		\label{theorem:RmConstantSizesNumber}
		\Copy{theorem:RmConstantSizesNumber:text}
		{
			$Rm |\conflictgraph{}, p_{j}^{i} \in \{a_1, \ldots, a_k\}| \sum C_j$ can be solved in $O(n^{2km}nm^m)$ time.
		}
	\end{theorem}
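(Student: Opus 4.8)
The plan is to exploit that, once the job-to-machine assignment is fixed, the objective decomposes over the machines and, on each machine, depends only on how many jobs of each of the $k$ processing-time values land there. Sort the values so that $a_1 \le \dots \le a_k$, and for a given assignment let $n_{i,t}$ be the number of jobs placed on machine $i$ whose processing time there equals $a_t$; call $(n_{i,1},\dots,n_{i,k})$ the \emph{profile} of machine $i$. By Smith's rule~\cite{SmithRule} it is optimal to process the jobs of a machine in non-decreasing order of processing time, and then the total completion time contributed by machine $i$ equals
\[
f_i(n_{i,1},\dots,n_{i,k}) \;=\; \sum_{r=1}^{s} (s-r+1)\, q_r, \qquad s=\textstyle\sum_{t=1}^{k} n_{i,t},
\]
where $q_1\le\cdots\le q_s$ is the sorted list of processing times on machine $i$ (value $a_t$ occurring $n_{i,t}$ times). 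Since this list is determined by the profile, $f_i$ is a function of the profile alone, and the whole objective equals $\sum_{i=1}^m f_i$, i.e. it is determined by the \emph{global profile} $(n_{i,t})_{i\in[m],\,t\in[k]}$. The cliques enter only through the requirement that the jobs of a clique occupy pairwise distinct machines; in particular a clique of more than $m$ jobs renders the instance infeasible, so every clique may be assumed to have at most $m$ jobs.

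I would therefore reduce the problem to deciding which global profiles are \emph{realizable} by an assignment respecting the clique constraints, and then minimizing the (profile-determined) cost over those. This is a dynamic program whose states are the global profiles, of which there are at most $(n+1)^{km}=O(n^{km})$. Starting from the all-zero profile, I process the cliques one at a time; to incorporate a clique $V_\ell$ I enumerate every injective map from its at most $m$ jobs to the machines, using only machines on which the corresponding job has a finite processing time, and for each such map I add the induced increments to a realizable profile to obtain successor profiles. A profile is marked realizable after clique $\ell$ if it arises this way from a profile realizable after clique $\ell-1$. After all $b$ cliques are processed, each realizable profile satisfies $\sum_{i,t} n_{i,t}=n$ and corresponds to a genuine assignment of all jobs, so I output the minimum of $\sum_i f_i$ over the realizable profiles (or report infeasibility if there are none).

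For correctness I would prove both directions. Any feasible schedule induces, through its job-to-machine assignment, a profile that the DP reaches (process the cliques in the DP's order, each time taking the map agreeing with the schedule), and resorting each machine by Smith's rule preserves feasibility while not increasing the objective; hence the optimum is at least the DP value. Conversely, every realizable profile yields a feasible schedule of cost exactly $\sum_i f_i$ by placing the chosen jobs and ordering each machine by Smith's rule; hence the DP value is attainable. Together these give optimality. For the running time there are $b\le n$ rounds, one per clique; bounding the work of a round crudely by the number of ordered pairs of profiles, $O(n^{km})\times O(n^{km})=O(n^{2km})$, times the $O(m^m)$ injective assignments inspected per transition, gives the claimed $O(n^{2km}\,n\,m^m)$ bound.

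The only genuinely substantive step is the opening observation that the per-machine cost collapses to a function of the profile: this is what makes the state space finite-dimensional and the objective separable, and it relies on the boundedness of the number of distinct processing times together with Smith's rule. Once this is in place the dynamic program and its analysis are routine, and the stated complexity is a deliberately loose overcount of a process that can in fact be carried out somewhat faster (e.g. by iterating only over successors of each reachable profile rather than over all profile pairs).
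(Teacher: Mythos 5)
Your proposal is correct and follows essentially the same route as the paper: a dynamic program whose states are the per-machine counts of jobs of each processing-time value, built up clique by clique, enumerating the $O(m^m)$ injective assignments of each clique's jobs to machines and deduplicating equivalent states, with Smith's rule collapsing the per-machine cost to a function of these counts. Your write-up supplies the correctness argument and the explicit cost formula in somewhat more detail than the paper's sketch, but the algorithm and the complexity accounting are the same.
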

	\begin{proof}
		Consider \tpikiesalgorithm{\ref{alg:MUnrelatedConstantNumberOfValues}}.
		Notice that if the number of possible processing times is $k$, then each of the machines can have at most $O(n^{2k})$ jobs assigned with different processing times. 
		Hence the number of all possible divisions of the jobs can be bounded by $O(n^{2km})$.
		The algorithm processes the jobs clique by clique, each of the considered assignments does not contradict the cliques.
		Precisely, a clique consists of at most $m$ jobs that can be assigned to at most $m$ machines, hence the number of all possible proper assignments is $O(m^m)$.
		Notice that after considering the assignment of a clique $V_i$ ,the assignments of the jobs from cliques $V_1, \ldots, V_i$ that result in the same number of jobs with given size assigned to the machines are equivalent.
		Hence, we may discard all of them save one.
		This operation is the trimming step in the algorithm. 
		The trimming can clearly be done in $O(n^{2km})$ time.
		Hence the time complexity of the algorithm is $O(n^{2km}nm^m)$.
	\end{proof}
	\begin{algorithm}
		\begin{algorithmic}[1]
			\REQUIRE A set of cliques $V_1 \cup \ldots \cup V_b$ with jobs, a set of $m$ machines, a set $\{a_1, \ldots, a_k\}$ of possible values.
			\ENSURE An exact schedule.
			\STATE $divisions \leftarrow \{((0, \ldots, 0), \ldots, (0, \ldots, 0))\}$.
			\COMMENT {The single tuple in $divisions$ represents the number of jobs of sizes $a_1, \ldots, a_k$ on the machines $m_1 \ldots m_m$.}
			\FOR {$i = 1,\ldots, b$}
			\STATE $divisions' \leftarrow \emptyset$.
			\FOR {$d \in divisions$}
			\STATE Add the jobs in $V_i$ to $d$ in all possible ways.
			\STATE Add $d$ to $divisions'$.
			\ENDFOR
			\STATE Trim $divisions'$.
			\STATE $divisions \leftarrow divisions'$.
			\ENDFOR
			\RETURN {A schedule based on the division with the smallest cost}
		\end{algorithmic}
		\caption{ An exact algorithm for $Rm|\conflictgraph{}, p_{j}^{i} \in \{a_1, \ldots, a_k\}|\sum C_j$.}
		\label{alg:MUnrelatedConstantNumberOfValues}
	\end{algorithm}
	
	By constructing a suitable flow network, similar to the one used in \cite{DBLP:journals/cacm/BrunoCS74}, with the cliques requirement satisfied by a construction similar to the one presented in \cref{figure:AlgorithmUnitTimeBags} we obtain:
	\begin{theorem}
		\label{theorem:copiesOfJobs}
		\Copy{theorem:copiesOfJobs:text}
		{
			$R|\conflictgraph{}, \jobmachinerestriction, \bagmachinespeed|\sum C_j$ can be solved to optimality in $O(m^2\cdot n^4 \cdot \log mn))$ time.
		}
	\end{theorem}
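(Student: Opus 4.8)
The plan is to generalize the flow network of \cref{figure:AlgorithmUnitTimeBags} (which itself follows the assignment network of \cite{DBLP:journals/cacm/BrunoCS74}) so that, on top of respecting the cliques, it also accounts for arbitrary processing times and the job-dependent restrictions $M(j)$. The starting observation is the one behind Smith's rule \cite{SmithRule}: if a job is scheduled as the $\ell$-th from the \emph{end} on a machine $i$, its processing time is counted exactly $\ell$ times in $\sum C_j$ (once for itself and once for each of the $\ell-1$ jobs after it). Since all jobs of a clique $V_k$ share the common processing time $p_k^i$ on machine $i$, placing \emph{any} job of $V_k$ as the $\ell$-th from the end on machine $i$ contributes precisely $\ell\cdot p_k^i$, independently of which job of $V_k$ is chosen. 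Hence it suffices to route each job to a (machine, position) slot and charge this contribution on an arc.

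Concretely, I would build a digraph with a source $s$, a sink $t$, a node for every job $j$, a clique gate $\gamma_{i,k}$ for every machine $i$ and clique $V_k$, and a position node $\pi_{i,\ell}$ for every machine $i$ and $\ell\in\{1,\ldots,n\}$. The arcs are: $s\to j$ of capacity $1$; for every job $j\in V_k$ and every eligible machine $i\in M(j)$ an arc $j\to\gamma_{i,k}$ of capacity $1$ (this is where the restrictions $M(j)$ enter); an internal arc of capacity $1$ through $\gamma_{i,k}$ (realized by splitting the gate into an in- and an out-copy); for every position $\ell$ an arc $\gamma_{i,k}\to\pi_{i,\ell}$ of capacity $1$ and cost $\ell\cdot p_k^i$; and finally $\pi_{i,\ell}\to t$ of capacity $1$ and cost $0$. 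The capacity-$1$ gate $\gamma_{i,k}$ enforces that at most one job of $V_k$ lands on machine $i$ (the clique constraint, just as the machine-clique nodes do in \cref{figure:AlgorithmUnitTimeBags}), while the capacity-$1$ position nodes guarantee that no two jobs share a slot. I then compute a maximum integral flow of minimum cost.

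For correctness I would argue the two standard directions. Any feasible schedule yields a flow of value $n$ whose cost equals its total completion time, by sending each job along $s\to j\to\gamma_{i,k}\to\pi_{i,\ell}\to t$ for its machine $i$ and its actual position $\ell$ from the end. Conversely, a minimum-cost flow of value $n$ induces a schedule, and the key point is that in an optimal flow the positions used on each machine form a contiguous block $\{1,\ldots,r_i\}$: if some $\pi_{i,\ell}$ carried flow while a cheaper $\pi_{i,\ell'}$ with $\ell'<\ell$ were free, rerouting the corresponding gate would strictly lower the cost. Contiguity makes the charge $\ell\cdot p_k^i$ on each used arc equal to the true contribution of that job, so the flow cost equals the total completion time of the induced schedule; and the rearrangement inequality (equivalently, Smith's rule) guarantees that the min-cost matching of cliques to positions on each machine realizes the optimal ordering. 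If the maximum flow has value below $n$, no assignment can respect both the restrictions and the cliques, and we report infeasibility.

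The main obstacle, and the place where care is needed, is exactly the interaction between the position-dependent cost $\ell\cdot p_k^i$ and the clique constraint: unlike in the unit-time network of \cref{figure:AlgorithmUnitTimeBags}, the cost can no longer be split off from the clique-gating arc, so the gate and the position slot must be chained on a single path and the contiguity argument above is needed to certify that the resulting charge is genuinely $\sum C_j$. For the running time, the network has $O(mn)$ vertices and $O(mn^2)$ arcs (the arcs $\gamma_{i,k}\to\pi_{i,\ell}$ dominate), and the maximum flow value is $n$; substituting these into \cref{theorem:MaximumFlowMinimumCost} yields the claimed bound of $O(m^2 n^4 \log mn)$.
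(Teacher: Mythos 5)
Your construction is essentially identical to the paper's \tpikiesalgorithm{\ref{alg:RJobMachineRestricionBagMachineSpeed}}: jobs as sources, split clique--machine gate nodes of capacity $1$ enforcing both the restrictions $M(j)$ and the clique constraint, and arcs to machine--position slots carrying cost $\ell\cdot p_k^i$, solved by min-cost max-flow with the same size accounting. Your write-up is in fact somewhat more careful than the paper's (explicitly arguing contiguity of used positions and the correspondence between flow cost and $\sum C_j$), but the approach is the same.
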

	\begin{proof}
		Consider \tpikiesalgorithm{\ref{alg:RJobMachineRestricionBagMachineSpeed}}.
		The proof is based upon a straightforward observation that the constructed flow network has integral capacities, hence there exist an integral flow that has minimum cost.
		The flow network is a straightforward adaptation of the network presented in~\cite{DBLP:journals/cacm/BrunoCS74}.
		It is easy to see that a schedule corresponding to such a flow respects the cliques due to capacities of $A_2$.
		Also it respects the restrictions of the jobs by the composition of $A_1$.
		The complexity follows from the size of the network and \cref{theorem:MaximumFlowMinimumCost}.
	\end{proof}
	The theorem is only interesting because it shows that the problem of executing copies of given jobs reliably can be solved in polynomial time, even if the machines are unrelated and some copies cannot be executed on some machines.
	\begin{algorithm}
		\begin{algorithmic}[1]
			\REQUIRE A set of cliques $V_1 \cup \ldots \cup V_b$, a set of $m$ machines $M$, a mapping between machines and cliques, and relation $compatible \subseteq J \times M$ between jobs and machines.
			\ENSURE An optimal schedule
			\STATE Construct a flow network:
			\STATE Let there be sinks $T = M \times \{1, \ldots, n\}$, each with capacity $1$.
			\STATE Let there be sources $S = V_1 \cup \ldots \cup V_b$, each with capacity $1$.
			\STATE Let there be vertices $V^1 = M \times \{1, \ldots, b\} \times \{1\}$.
			\STATE Let there be vertices $V^2 = M \times \{1, \ldots, b\} \times \{2\}$
			\STATE $A_1 = \{(j, (m, i, 1))|j \in V_i, m \in M, (j, m) \in compatible \}$.
			\STATE $A_2 = \{((m,i,1), (m,i,2)) | m \in M, i \in \{1, \ldots, b\}\}$.
			\STATE $A_3 = \{((m,i,2), (m,n'))| n' \in \{1, \ldots, n\}, m \in M, i \in \{1, \ldots, b\}\}$.
			\STATE $capacity(e) \equiv 1$.
			\STATE 	$cost(e) = \left\{
			\begin{array}{ll}
			n'p_{k}^i \: &|\: e = ((i,k,2), (i,n')) \\
			0 \:      &|\: otherwise
			\end{array}
			\right\} $.
			\\
			\COMMENT{By an abuse of the notation, we assume that for a clique $V_k$, $p_{k}^i$ is the processing time of a job from $V_k$ on $m_i$.}
			\STATE Construct the maximum flow with minimal cost in $(S \cup T \cup V^1 \cup V^2, A_1 \cup A_2 \cup A_3, capacity, cost)$. 
			\RETURN {If the flow is less than $n$, then there is no feasible schedule. 
				Otherwise return a schedule corresponding to the flow.}
			
		\end{algorithmic}
		\caption{An exact algorithm for $R|\conflictgraph{}, \jobmachinerestriction,\bagmachinespeed|\sum C_j$.}
		\label{alg:RJobMachineRestricionBagMachineSpeed}
	\end{algorithm}
	
	
	\section{Hardness Results}\label{sec:hardness}
	In this chapter we prove hardness results for the following problems:
	$R|\conflictgraph[2], p_{j}^{i} \in \{p_1 < p_2 < p_3\}|\sum C_j$;
	$P|\conflictgraph{}, p_j \in \{p_1 < p_2\},\bagmachinerestriction|\sum C_j$, where each of the cliques has at most $2$ jobs and
	$R|\conflictgraph{}, p_j \in \{p_1 < p_2\}|\sum C_j$.
	We do this by modifying results from \cite{DBLP:conf/stacs/MaackJ20}.
	Similar techniques have been used before, see, e.g., \cite{DBLP:journals/algorithmica/EbenlendrKS14,DBLP:journals/jcss/ChenJZ18,DBLP:conf/stacs/0011MYZ17}.
	We also prove that $P|\conflictgraph{}, \bagmachinerestriction, p_i~\in~\{p_1 < p_2 < 2p_1\}|\sum C_j$ is \APXH by an $L$-reduction from the problem \MAXThreeSATSix.
	
	Let us start with a description of \MAXThreeSATSix.
	This problem is an optimization version of \ThreeSat in which every variable appears in $6$ clauses and each literal in exactly $3$ clauses. 
	The goal is to calculate the maximum number of clauses that can be satisfied, i.e., have at least one literal with truth assigned.
	From the sketch of the proof of Theorem 12 from \cite{ApproximatingMinSumSetCover} we get the following lemma. 
	\begin{lemma}[\cite{ApproximatingMinSumSetCover}]
		The problem \MAXThreeSATSix is \APXH.
	\end{lemma}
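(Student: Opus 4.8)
The plan is to establish this by an \Lreduction from \textsc{Max E3Sat} (the restriction of \ThreeSat in which every clause contains exactly three literals), whose \APXH status is classical and follows from the PCP theorem together with H\aa stad's tight inapproximability bounds. The goal of the reduction is to take an arbitrary \textsc{Max E3Sat} formula and produce a formula with perfectly balanced occurrences, while keeping the two optima linearly related. First I would perform \emph{variable splitting}: if a variable $x$ occurs $d_x$ times, I replace its $r$-th occurrence by a fresh copy $x^{(r)}$, so that after this step each copy occurs exactly once among the original clauses. To force the copies of a common variable to take the same value, I would place them on the vertices of a constant-degree expander $H_x$ and, for every edge $\{x^{(r)},x^{(s)}\}$ of $H_x$, append the two equality clauses $(x^{(r)} \vee \neg x^{(s)})$ and $(\neg x^{(r)} \vee x^{(s)})$.

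The second key step is the analysis of this consistency gadget. The expansion of $H_x$ guarantees that if the copies of some variable are split into a true-set and a false-set, then the number of violated equality clauses is at least a constant times the size of the smaller set; hence flipping every copy to the majority value of its variable can only increase the total number of satisfied clauses. This \emph{consistency rounding} shows that a near-optimal assignment may be assumed consistent, and therefore corresponds to an assignment of the original formula. Together with the fact that the number of added equality clauses is linear in the size of the input, this yields the two defining inequalities of an \Lreduction: the optimum of the produced instance is linearly bounded by that of the original, and any feasible solution can be converted into one for the original while losing only a constant factor of the optimality gap. Finally, I would enforce the \emph{exact occurrence counts}: after splitting, each copy occurs once in an original clause and a fixed number of times inside the expander gadget, so by fixing the expander degree and appending a small, carefully chosen set of padding clauses (balanced filler that does not shift the optimum) I would arrange that every variable lands in exactly $6$ clauses and each literal in exactly $3$.

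I expect the exact-count step to be the main obstacle. Splitting and expander-based consistency are by now standard, and their effect on the \Lreduction parameters is routine to verify; the delicate part is the combinatorial bookkeeping that forces every variable into exactly $6$ clauses with each literal appearing exactly $3$ times, \emph{while simultaneously} preserving the expansion property underlying the consistency argument and the linear relationship between the two optima. Satisfying all three requirements at once typically forces a tailored gadget and a short case analysis on the parity of the per-variable occurrence counts, which is precisely why one prefers to invoke the explicit construction already worked out in the cited reference rather than redo it from scratch.
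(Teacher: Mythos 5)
The paper does not prove this lemma at all: it is imported directly from the sketch of the proof of Theorem~12 in the cited reference, so there is no in-paper argument to compare yours against. Your outline reconstructs the standard route to bounded-occurrence SAT variants (variable splitting, expander-based consistency clauses, majority rounding), and that skeleton is the right one and is essentially the machinery the reference relies on; the \Lreduction inequalities you derive from expansion are the correct way to argue the gap is preserved.

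However, as a standalone proof your sketch has a genuine gap exactly where you flag it, and the gap is worse than ``delicate bookkeeping'': the gadget as described provably cannot produce the required occurrence pattern. If $H_x$ is $d$-regular and every edge contributes the two clauses $(x^{(r)}\vee\neg x^{(s)})$ and $(\neg x^{(r)}\vee x^{(s)})$, then each copy $x^{(r)}$ occurs $1+2d$ times in total, with $d+1$ occurrences of one polarity and $d$ of the other; no integer $d$ yields the $(3,3)$ split that \MAXThreeSATSix demands, and $1+2d=6$ has no integer solution either. So the ``small, carefully chosen set of padding clauses'' is not a finishing touch --- it is the entire content of the exact-count requirement, and no construction for it is given, nor is it shown that such padding can be added without destroying either the expansion-based consistency argument or the linear relation between the optima. (A secondary issue: the consistency clauses have two literals, which must be reconciled with whatever convention of \ThreeSat the target problem uses.) Since you ultimately defer this step to the cited construction, your proposal lands in the same place as the paper --- an appeal to the reference --- but it should not be read as a self-contained proof of the lemma.
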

	For the $L$-reduction let us use the definition from~\cite{DBLP:books/lib/Ausiello99}.
	Let $P_1$ and $P_2$ be two \NPOClass problems.
	This class consists of optimization problems such that for any problem in this set:
	\begin{itemize}
		\item The set of instances of the problem is recognizable in polynomial time.
		\item The value of a solution for the problem can be bound by a polynomial function of the size of an instance.
		Moreover, any such a solution can be verified in polynomial time to be a solution to the instance of the problem.
		\item The value of a solution can be computed in polynomial time.  
	\end{itemize}
	The terms used in the further definitions are as follows. 
	$I_{P_1}$ ($I_{P_2}$) is the set of instances of $P_1$ ($P_2$).
	$SOL_{P_1}(x)$ ($SOL_{P_2}(x)$) is the function that associates to any input instance $x \in I_{P_1}$ ($x \in I_{P_2}$) the set of feasible solutions of $x$.
	A function $m_{P_1}(x)$ ($m_{P_2}(x)$) is the measure function, defined for pairs $(x, y)$ such that $x \in I_{P_1}$ and $y \in SOL_{P_1}$ ($x \in I_{P_2}$ and $y \in SOL_{P_2}$);
	for every such pair $(x, y)$, $m_{P_1}(x)$ ($m_{P_2}(x)$) provide a positive integer (or rational) which is the value of the feasible solution $y$.
	Finally, $m^*_{P_1}(x)$ ($m^*_{P_2}(x)$) are the values of optimal solution for an instance $x \in I_{P_1}$ ($x \in I_{P_2}$).
	
	$P_1$ is said to be $L$-reducible to $P_2$ if functions $f$ and $g$ and two positive constant $\beta$ and $\gamma$ exist and are such that:
	\begin{enumerate}
		\item For any instance $x \in I_{P_1}$, $f(x) \in I_{P_2}$ is computable in polynomial time.
		\item For any $x \in I_{P_1}$, if $SOL_{P_1}(x) \neq \emptyset$, then $SOL_{P_2}(f(x)) \neq \emptyset$
		\item For any $x \in I_{P_1}$ and for any $y \in SOL_{P_2}(f(x))$, $g(x,y) \in SOL_{P_1}(x)$ is computable in polynomial time.
		\item For any $x \in I_{P_1}$, we have $m^*_{P_2}(f(x)) \le \beta m^*_{P_1}(x)$.
		\item For any $x \in I_{P_1}$ and for any $y\in SOL_{P_2}(f(x))$, we have
		\[
		|m^*_{P_1}(x) - m_{P_1}(x, g(x,y))| \le \gamma |m^*_{P_2}(f(x)) - m_{P_2}(f(x), y)|
		\]
	\end{enumerate}
	
	\begin{figure}
		\centering
		\small
		\begin{tikzpicture}
		[place/.style={ellipse,draw=black!100,line width=0.3mm,inner sep=0pt,minimum size=6mm},
		jobsB1/.style={ellipse,draw=black!100,line width=0.3mm, dotted,inner sep=0pt,minimum size=6mm},
		jobsB2/.style={ellipse,draw=black!100,line width=0.3mm, dashed,inner sep=0pt,minimum size=6mm},
		jobsB1B2/.style={ellipse,draw=black!100,line width=0.3mm, dash dot,inner sep=0pt,minimum size=6mm},
		myline/.style={line width=0.3mm}]
		\node at ( 0,0.25) (j1) [place] {$j[v,1]$};
		\node at ( 2,0.25) (j2) [place] {$j[v,2]$};
		\node at ( 4,0.25) (j3) [place] {$j[v,3]$};
		\node at ( 6,0.25) (j4) [place] {$j[v,4]$};
		\node at ( 8,0.25) (j5) [place] {$j[v,5]$};
		\node at ( 10,0.25) (j6) [place] {$j[v,6]$};

		\node at ( 0,-1) (m1) [place] {$m[v,1]$};
		\node at ( 2,-1) (m2) [place] {$m[v,2]$};
		\node at ( 4,-1) (m3) [place] {$m[v,3]$};
		\node at ( 6,-1) (m4) [place] {$m[v,4]$};
		\node at ( 8,-1) (m5) [place] {$m[v,5]$};
		\node at ( 10,-1) (m6) [place] {$m[v,6]$};
		
		\node at ( 0,-2) (j1T) [place] {$j^T[v,1]$};
		\node at ( 2,-2) (j1F) [place] {$j^F[v,1]$};
		
		\node at ( 2,-3) (mC1) [place] {$m[C,1]$};
		\node at ( 5,-3) (mC2) [place] {$m[C,2]$};
		\node at ( 8,-3) (mC3) [place] {$m[C,3]$};
		
		\node at ( 2,-4) (jC1) [place] {$j[C,1]$};
		\node at ( 5,-4) (jC2) [place] {$j[C,2]$};
		\node at ( 8,-4) (jC3) [place] {$j[C,3]$};

		\draw [myline,->] (j1.south) --(m1.north);
		\draw [myline,->] (j1.south) --(m2.north west);
		\draw [myline,->] (j2.south) --(m2.north);
		\draw [myline,->] (j2.south) --(m3.north west);
		\draw [myline,->] (j3.south) --(m3.north);
		\draw [myline,->] (j3.south) --(m4.north west);
		\draw [myline,->] (j4.south) --(m4.north);
		\draw [myline,->] (j4.south) --(m5.north west);
		\draw [myline,->] (j5.south) --(m5.north);
		\draw [myline,->] (j5.south) --(m6.north west);
		\draw [myline,->] (j6.south) --(m6.north);
		\draw [myline,->] (j6.south) to[out =195, in=15] (m1.north east);
		
		\draw [myline,->] (j1T.north) --(m1.south);
		\draw [myline,->] (j1F.north) --(m1.south east);
		
		\draw [myline,->] (j1T.south) --(mC1.north);
		
		\draw [myline,->] (jC1.north) --(mC1.south);
		\draw [myline,-] (jC1.north) |- ($(jC2.north)!.6!(mC2.south)$);
		\draw [myline,-] (jC1.north) |- ($(jC3.north)!.6!(mC3.south)$);
		
		\draw [myline,-] (jC2.north) |- ($(jC1.north)!.6!(mC1.south)$);
		\draw [myline,->] (jC2.north)--(mC2.south); 
		\draw [myline,-] (jC2.north) |-($(jC3.north)!.6!(mC3.south)$);
		
		\draw [myline,-] (jC3.north)  |- ($(jC1.north)!.6!(mC1.south)$);
		\draw [myline,-] (jC3.north)  |-($(jC2.north)!.6!(mC2.south)$);
		\draw [myline,->] (jC3.north) --(mC3.south);
		\end{tikzpicture}
		\caption
		{
			An illustration of the idea of eligibility of the jobs used in \cref{theorem:apxcompleteness}. 
			The figure presents a component corresponding to one of the variables and a component corresponding to one of the clauses.
			In the example $C$ is such a clause that $(C,1) = \kappa(v,1)$.
		}
		\label{figure:ComponentForBasicNPC}
	\end{figure}
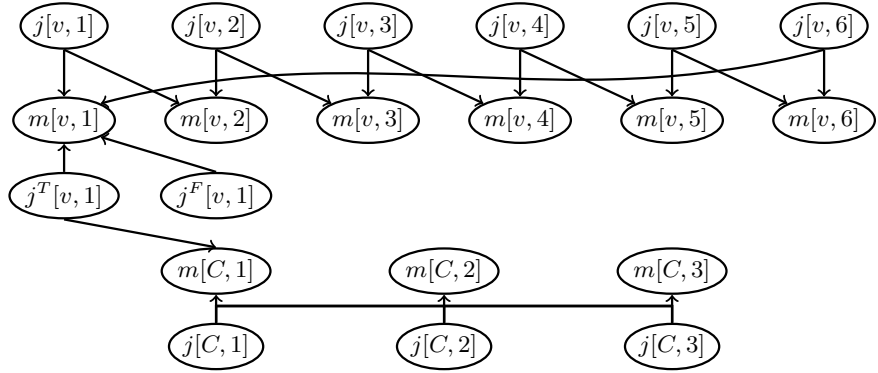
	
	\begin{table}
		\caption{The processing times $p_{i}$ of jobs used in the $L$-reduction in \cref{theorem:apxcompleteness}.}
		\centering
		\begin{tabular}{ l c c c }
			job & clique & $p_i$ & clique allowed on \\
			\hline
			$j[v,1]$ & $V[v,1]$ & $p_1$ & $m[v,1], m[v,2]$ \\
			$j[v,2]$ & $V[v,2]$ & $p_2$ & $m[v,2], m[v,3]$ \\
			$j[v,3]$ & $V[v,3]$ & $p_1$ & $m[v,3], m[v,4]$ \\
			$j[v,4]$ & $V[v,4]$ & $p_2$ & $m[v,4], m[v,5]$ \\
			$j[v,5]$ & $V[v,5]$ & $p_1$ & $m[v,5], m[v,6]$ \\
			$j[v,6]$ & $V[v,6]$ & $p_2$ & $m[v,6], m[v,1]$ \\
			\hline
			$j^T[v,i]$ & $V^*[v,i]$ & $p_1$ & $m[v,i], m[\kappa(v,i)]$\\
			$j^F[v,i]$ & $V^*[v,i]$ & $p_2$ & $m[v,i], m[\kappa(v,i)]$ \\
			\hline
			$j[C, 1]$   & $V[C,1]$ & $p_1$ & $m[C,1], m[C,2], m[C,3]$ \\
			$j[C, 2]$   & $V[C,1]$ & $p_1$ & $m[C,1], m[C,2], m[C,3]$ \\
			$j[C, 3]$ & $V[C,1]$ & $p_2$ & $m[C,1], m[C,2], m[C,3]$ \\   
			\\
		\end{tabular}
		\label{table:ProcessingTimesAPXCompleteness}
	\end{table}
	
	\begin{theorem}
		$P|\conflictgraph{}, \bagmachinerestriction, p_i \in \{p_1 < p_2 < 2p_1\}|\sum C_j$ is \APXH.
		\label{theorem:apxcompleteness}
	\end{theorem}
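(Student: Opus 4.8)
The plan is to complete the \Lreduction from \MAXThreeSATSix whose construction $f$ is given by the gadgets of \cref{figure:ComponentForBasicNPC} together with the processing times and clique eligibilities of \cref{table:ProcessingTimesAPXCompleteness}. The first step is to verify that each gadget behaves as intended. For a variable $v$, the six cycle jobs $j[v,1],\dots,j[v,6]$ are each eligible only on the two consecutive machines $m[v,i],m[v,i+1]$ (indices mod $6$), so on the cycle $m[v,1],\dots,m[v,6]$ they can only be placed one-to-one in one of the two rotations. Both rotations schedule three $p_1$- and three $p_2$-jobs, hence cost the same, but they differ in whether a given machine $m[v,i]$ carries a $p_1$- or a $p_2$-cycle job; I interpret the two rotations as the truth value of $v$. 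For a clause $C$, the three jobs $j[C,1],j[C,2],j[C,3]$ form a single clique $V[C,1]$ and are eligible only on $m[C,1],m[C,2],m[C,3]$, so they occupy these three machines one-to-one. Finally, for each occurrence the clique $V^*[v,i]$ holds one $p_1$- and one $p_2$-job that must be split between the variable machine $m[v,i]$ and the clause machine $m[\kappa(v,i)]$; which of the two is sent to the clause, together with the $p_1$/$p_2$ cycle job already sitting on $m[v,i]$, encodes whether this occurrence is used to satisfy $C$.

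The core is a normalization lemma: any schedule can be transformed, without increasing $\sum C_j$, into a \emph{canonical} one in which (i) every variable cycle follows one of the two rotations, (ii) each clique $V^*[v,i]$ is split across its two eligible machines, in the direction that realizes the savings whenever the corresponding literal is satisfying, and (iii) the clause jobs are spread one-to-one over the three clause machines. Here the hypothesis $p_1<p_2<2p_1$ is essential: it guarantees that spreading jobs over the available machines of a gadget is always at least as good as stacking them, and it pins down the local cost of adding a $p_1$- or $p_2$-literal job onto a machine that already holds a $p_1$- or $p_2$-cycle job. The payoff is a clean accounting identity for canonical schedules, namely $\sum C_j = \mathrm{Base} + c\cdot u$, where $\mathrm{Base}$ is a fixed constant depending only on $p_1,p_2$ and the number of gadgets, $c>0$ is a fixed constant, and $u$ is the number of clauses left unsatisfied by the truth assignment read off the rotations. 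I then define $g(x,y)$ to first normalize $y$ and then output this assignment.

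It remains to check the two quantitative \Lreduction conditions. Writing $N$ for the number of clauses of $x$ (so $N=\Theta(n)$ for $n$ variables, since each variable occurs in $6$ clauses and each clause has $3$ literals), the assignment read off a canonical schedule satisfies $N-u$ clauses. Since a random assignment satisfies a constant fraction of the clauses in expectation, $m^*_{P_1}(x)=\Omega(N)$, and with the trivial bound $m^*_{P_1}(x)\le N$ we get $m^*_{P_1}(x)=\Theta(N)$. Combined with $\mathrm{Base}=\Theta(N)$ and $c\cdot\min u=O(N)$, this yields $m^*_{P_2}(f(x))=\mathrm{Base}+c\cdot\min u=\Theta(N)=\Theta\!\left(m^*_{P_1}(x)\right)$, so condition~4 holds for a suitable constant $\beta$. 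For condition~5, normalization never increases cost, so $m_{P_2}(f(x),y)\ge \mathrm{Base}+c\cdot u\bigl(g(x,y)\bigr)$; together with $m^*_{P_2}(f(x))=\mathrm{Base}+c\cdot\min u$ and $m^*_{P_1}(x)-m_{P_1}(x,g(x,y))=u(g(x,y))-\min u$, this gives $|m^*_{P_1}(x)-m_{P_1}(x,g(x,y))|\le \tfrac{1}{c}\,|m^*_{P_2}(f(x))-m_{P_2}(f(x),y)|$, i.e.\ condition~5 with $\gamma=1/c$.

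The main obstacle I expect is the normalization lemma, specifically the exchange arguments that turn an arbitrary, possibly badly stacked, schedule into the canonical one while controlling the cost \emph{exactly}. The delicate point is that a literal job diverted to a clause machine interacts with the clause jobs already placed there, so I must verify, using $p_2<2p_1$, that the cost incurred on the clause machines is a fixed function of only the number of satisfying literal jobs a clause receives (in particular, that the saving is attained already with one satisfying literal and that routing a non-satisfying literal to the clause is never strictly beneficial). Establishing that the resulting constant $c$ is independent of the instance, and confirming the exact linear form $\mathrm{Base}+c\cdot u$, is where the bulk of the routine but careful case analysis lies.
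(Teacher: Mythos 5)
Your proposal follows essentially the same route as the paper's proof: the same gadget construction, a normalization (exchange) argument using $p_2 < 2p_1$ to reach schedules where every variable machine carries one $p_1$- and one $p_2$-job, the resulting linear accounting of the cost in the number of unsatisfied clauses with slope $c = p_2 - p_1$, and the constants $\beta$ from the fact that a constant fraction of clauses is always satisfiable and $\gamma = 1/(p_2-p_1)$. The case analysis you defer is exactly the part the paper also treats tersely, and your outline of it (swapping $j^T[v,i]$ with $j^F[v,i]$, rebalancing machines with three jobs against machines with one) matches the paper's two normalization observations.
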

	\begin{proof}
		We prove this by an $L$-reduction from \MAXThreeSATSix to $P|\conflictgraph{},\bagmachinerestriction , p_i \in \{p_1 < p_2 < 2p_1\}|\sum C_j$.
		
		For the pair of the problems let us define $f$, the function constructing an instance of $P|\conflictgraph{}, \bagmachinerestriction, p_i \in \{p_1 < p_2 < 2p_1\}|\sum C_j$  from an instance of \MAXThreeSATSix.
		Let the set of variables be $V$; and the set of clauses be $\mathcal{C}$, where $|\mathcal{C}| = 2|V|$.
		Define $\kappa: V \times \{1, \ldots, 6\} \rightarrow \mathcal{C} \times \{1, 2, 3\}$ to be a function that maps the first unnegated literal of a variable, the first negated literal of the variable, etc. to its clause and the position in the clause.
		For a variable $v \in V$, construct a set of machines $\cup_{i = 1,\ldots, 6}\{m[v, i]\}$.
		The machine $m[v,1]$ corresponds to the first non-negated literal of $v$, $m[v,2]$ corresponds to first negated one, etc. 
		Construct also a set of clause machines $\{m[C, 1], m[C, 2], m[C, 3]\}$, for $C \in \mathcal{C}$. 
		The jobs that we construct are described in \cref{table:ProcessingTimesAPXCompleteness}.
		Notice that there are $13|V|$ jobs with size $p_1$ and $11|V|$ jobs with size~$p_2$.
		The construction is illustrated in \cref{figure:ComponentForBasicNPC}.
		
		Let $k$ be the maximum number of clauses that can be satisfied for a given instance of \MAXThreeSATSix.
		Notice that $|V| \le k \le 2|V|$, because if we assign $T$ to all the variables, then at least half of the clauses are satisfied.
		Let us make an assignment of the jobs to machines based on a valuation giving $k$ satisfied clauses.
		Consider two cases.
		\begin{itemize}
			\item If a variable $v$ has value $T$, let $m[v, 1], m[v, 3], m[v, 5]$ be assigned jobs $j[v, 1], j[v, 3], j[v, 5]$ and let $m[v, 2], m[v, 4], m[v, 6]$ be assigned jobs $j[v, 2], j[v, 4], j[v, 6]$. 
			\item Otherwise let $m[v, 1], m[v, 3], m[v, 5]$ be assigned jobs $j[v, 6], j[v, 2], j[v, 4]$ and let $m[v, 2],$ $m[v, 4], m[v, 6]$ be assigned jobs $j[v, 1], j[v, 3], j[v, 5]$. 
		\end{itemize}
		If $m[v, i]$ has job with processing time $p_2$ assigned already, assign a job with processing time $p_1$ from $V^*[v, i]$ to it; otherwise assign a job with processing time $p_2$ from $V^*[v, i]$ to it.
		Assign the other job from $V^*[v,i]$ to $m[\kappa(v,i)]$.
		For all $C \in \mathcal{C}$ assign the jobs from the clique $V[C,1]$ to the eligible machines in an optimal way.
		Notice that only the machines that correspond to the clauses that are not satisfied can have two jobs with size $p_2$ assigned, and there is exactly one such machine for a given not satisfied clause.
		Notice that the cost of such a schedule is
		\begin{gather*}
		6|V|(2p_1 + p_2) + (2|V| - k)(4p_1 + 5p_2) + (11|V| - 6|V| - 4(2|V| - k))(2p_1 + p_2) \\
		+ \frac{1}{2}(13|V| - 6|V| - 2(2|V| - k) - (11|V| - 6|V| - 4(2|V| - k)))3p_1 \\
		= 25|V|p_1 + 11|V|p_2 + (2|V| - k)(p_2-p_1) \le k(24p_1 + 12p_2).
		\end{gather*}
		Hence let $(24p_1 + 12p_2)$ be the $\beta$ constant.
		
		Let us assume that for a given instance of \MAXThreeSATSix we have a solution $y$ of the corresponding scheduling problem with a given cost. 
		Let us define the $g$ function.
		The $g$ function begins with modifying the solution according to the following observations.
		\begin{enumerate}
			\item Let us assume that in $y$ there exists $m[v, i]$ that has exactly $2$ jobs assigned; let us assume that both of them have size $p_1$ (have size $p_2$).
			Notice that this means that the machine has a job $j^T[v, i]$ (a job $j^F[v, i]$) assigned. 
			Notice that we can exchange this job with $j^F[v, i]$ (with $j^T[v, i]$) without increasing the total completion time.
			\item Assume that some machine $m[v, i]$ has three jobs assigned. 
			It also means that there is a machine $m[v, i']$ that has exactly one job assigned.
			Notice that in any case, by the previous observation and by the assumption that $p_1 \le p_2 \le 2p_1$ we may shift the jobs in a way that after the shift all of the machines have exactly $2$ jobs, without increasing the total completion time of the schedule.
			This follows from a simple analysis of all possible cases of the assignment of the jobs to the machines. 
		\end{enumerate}
		Notice that this means that we may assume that the machines $m[v, i]$ are processing exactly one job with size $p_1$ and one with size $p_2$ each. 
		We prove that the total completion time of the schedule depends only on the number of the machines that are processing two jobs with size $p_2$.
		Let the number of such machines be $k'$.
		Total completion time of the schedule is then equal to $k'3p_2 + (11|V| - 2k')(2p_1 + p_2) + \frac{1}{2}(13|V| - (11|V| - 2k'))3p_1 = 25|V|p_1 + 11|V|p_2 + k'(p_2-p_1)$.
		From such a schedule we can easily find a valuation of the variables in the instance of \MAXThreeSATSix such that it satisfies exactly $2|V| - k'$ clauses.
		Let now $k''$ be the number of machines that are processing two jobs with size $p_2$ in an optimal solution.
		Notice that $k''$ corresponds to a schedule with cost $25|V|p_1 + 11|V|p_2 + k''(p_2-p_1)$. 
		And this schedule corresponds to a solution to \MAXThreeSATSix that has exactly $(2|V| - k'')$ clauses satisfied.
		There can be no better solution to \MAXThreeSATSix.
		Hence let us assume that for some $\gamma$ we have that 
		\[
		|(2|V|  - k'') - (2|V| -k')| \le \gamma |k''(p_2-p_1) + 25|V|p_1 + 11|V|p_2 - (k'(p_2 - p_1) + 25|V|p_1 + 11|V|p_2)|.
		\]
		Which is equivalent to $k' - k'' \le \gamma (k' - k'')(p_2 - p_1)$,
		hence clearly $\gamma = \frac{1}{p_2 - p_1}$ is a suitable constant.
		All other conditions are easily fulfilled.
	\end{proof}
	The APX-hardness for $R|\conflictgraph{}, p_{j}^{i} \in \{p_1 < p_2 < p_3\} |\sum C_j$ follows readily from the observation that we may always set $p_3$ to such a high value (dependent on the size of an instance of the problem) that in any reasonable schedule it will be not used.
	Mind the difference with the previous problem, that in this case $p_3$ is a function of $p_1, p_2$ and the input size.
	
	The same idea may be reused for the next problem and an $\alpha$-reduction, but this time from an even more restricted version, i.e., from the problem \ThreeSatStar considered in \cite{DBLP:conf/stacs/MaackJ20}.
	The input of \ThreeSatStar problem consists of a set of variables, and two sets of clauses: 1-in-3 clauses and 2-in-3 clauses.
	Each of the literals occurs exactly $2$ times, hence each variable occurs exactly twice negated and twice nonnegated.
	The number of 1-in-3 clauses and 2-in-3 clauses are equal.
	The question is if there is assignment of the variables such that in each 1-in-3 clause exactly one literal is true and that in each 2-in-3 clause exactly two literals are true.
	In the paper it was proved that the problem is \NPC.
	
	In the case of the next problem we use $p_3$ to restrict assignment of some jobs to some machines. 
	We have to also divide the jobs differently.
	\begin{theorem}
		\label{theorem:R2BNPC}
		\Copy{theorem:R2BNPC:text}
		{
			$R|\conflictgraph[2], p_{j}^{i} \in \{p_1 < p_2 < p_3\}|\sum C_j$ is strongly \NPC.
		}
	\end{theorem}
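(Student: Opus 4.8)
The plan is to give a polynomial-time many-one reduction from \ThreeSatStar, which is \NPC by the cited result, reusing the trick from \cref{theorem:apxcompleteness} that the largest processing time $p_3$ can be chosen so large (yet still polynomially bounded in the input size) that no schedule of bounded cost ever places a job with time $p_3$ on a machine; thus $p_3$ simulates a forbidden, job-dependent assignment restriction. Membership in \NPclass is immediate, since the cost of any schedule can be evaluated in polynomial time. The structural fact I would exploit throughout is that with only two incompatibility cliques $V_1,V_2$ every machine can receive at most one job from each clique, hence at most two jobs in total; if a machine is loaded with two jobs of times $q\le q'$, then by Smith's rule its contribution to $\sum C_j$ is exactly $2q+q'$. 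All processing times other than $p_3$ will equal $p_1$ or $p_2$, so a two-job machine contributes one of $3p_1$, $2p_1+p_2$, or $3p_2$; since $3p_1<2p_1+p_2<3p_2$, the whole construction is arranged so that the total cost is governed by how many machines are forced into the expensive $3p_2$ pattern.

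\textbf{Variable gadgets.} For each variable I would build a gadget, analogous to the cyclic variable component in \cref{theorem:apxcompleteness}, but with all its jobs distributed between the two global cliques $V_1$ and $V_2$ and with the eligibility of each job to its two neighbouring machines enforced by setting its time to $p_3$ everywhere else. Because each gadget machine can hold at most one $V_1$-job and one $V_2$-job, the only two cheap ways to fill the gadget correspond to the two rotations of the cycle, which I identify with the truth values of the variable; any mixed configuration is penalised by creating a $3p_2$ machine. Since each literal occurs exactly twice in \ThreeSatStar, the gadget exposes, for each of the four literal occurrences of the variable, one job whose placement records whether that occurrence is made true.

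\textbf{Clause gadgets.} For each clause I would add a constant number of clause machines together with clause jobs split between $V_1$ and $V_2$, and route the literal-occurrence jobs from the variable gadgets into the clause machines (again using $p_3$ to pin down eligibility). The remaining processing times are calibrated so that the cost contribution of a clause gadget is minimised precisely when the number of incoming true literals equals the demand of the clause, i.e.\ one for a $1$-in-$3$ clause and two for a $2$-in-$3$ clause; a count that is too small or too large forces at least one extra $3p_2$ machine and hence strictly raises the cost. I would then compute the target value $B$ as a fixed affine expression in $p_1$, $p_2$ and the numbers of variables and clauses, and prove the two implications: a valuation satisfying all \ThreeSatStar constraints yields a schedule of cost exactly $B$, and conversely, after normalising an arbitrary schedule of cost at most $B$ (pushing every machine to hold exactly two jobs and eliminating $p_3$-placements, via the same kind of case analysis as in \cref{theorem:apxcompleteness}) the number of $3p_2$ machines is minimal, which can happen only when the induced valuation meets every exact-in-$3$ requirement.

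The main obstacle I anticipate is making the clause gadget tight for the exact-in-$3$ semantics inside the two-clique straitjacket: unlike ordinary satisfiability, where one true literal already suffices, here both a deficit and an excess of true literals must be punished, so the gadget has to be symmetric enough to detect a surplus as well as a shortage, and the global counting argument must certify that the minimum number of $3p_2$ machines is attained \emph{only} by exactly-satisfying valuations. Once this gadget arithmetic is fixed, the construction $f$ runs in polynomial time, and because $p_1,p_2$ are constants while $p_3$ is polynomially bounded, the reduction establishes strong \NPclass-hardness; together with membership in \NPclass this yields the claimed strong \NPC-ness.
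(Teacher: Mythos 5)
Your plan follows the paper's proof essentially step for step: a reduction from \ThreeSatStar, the value $p_3$ (polynomially bounded, larger than the target cost) used purely to forbid assignments, the observation that with two cliques every machine carries at most two jobs and contributes $2q+q'$, a cyclic variable gadget whose two rotations encode the truth value, literal-occurrence jobs routed to clause machines, and a target bound attainable only when every machine holds exactly one $p_1$-job and one $p_2$-job. The one point you explicitly leave open --- how to punish both a deficit \emph{and} a surplus of true literals at a clause --- is exactly where your write-up stops short of a proof, so let me describe how the paper closes it.

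The resolution requires no additional gadget arithmetic: it falls out of the global counting you already set up. The paper places \emph{all} variable jobs and \emph{all} clause jobs in $V_1$ and \emph{all} consistency (literal-occurrence) jobs in $V_2$, so each of the three machines of a clause $C$ must receive exactly one clause job and exactly one consistency job. A literal occurrence made true delivers a $p_1$-job ($j^T$) to its clause machine and one made false delivers a $p_2$-job ($j^F$); the three clause jobs are given the complementary multiset of times, namely $\{p_1,p_1,p_2\}$ for a $1$-in-$3$ clause and $\{p_1,p_2,p_2\}$ for a $2$-in-$3$ clause. The decision bound is $m(2p_1+p_2)$, and since $3p_1+3p_2>2(2p_1+p_2)$ whenever $p_1<p_2$, this bound is met if and only if \emph{every} machine ends up with one $p_1$-job and one $p_2$-job. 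For a $1$-in-$3$ clause this perfect pairing of consistency jobs with clause jobs is possible exactly when one $j^T$ and two $j^F$ arrive, i.e.\ exactly one literal is true; too many or too few true literals forces some clause machine into a $3p_1$ or $3p_2$ pattern, which (because the global numbers of $p_1$- and $p_2$-assignments are fixed) necessarily creates a $3p_2$ machine somewhere and exceeds the bound. So the symmetry you were worried about is supplied for free by the exact-cost target rather than by the clause gadget itself. With that piece filled in, your argument matches the paper's.
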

	
	\begin{table}
		\caption{The processing times $p_{j}^{i}$ used in the $\alpha$-reduction in \cref{theorem:R2BNPC}.}
		\centering
		\begin{tabular}{ l c c c c }
			job & clique & $p_1$ on & $p_2$ on & $p_3$ on  \\
			\hline
			$j[v,1]$ & $V_1$ & $m[v,1], m[v,2]$ & - & other \\
			$j[v,2]$ & $V_1$ & - & $m[v,2], m[v,3]$ & other \\
			$j[v,3]$ & $V_1$ &$m[v,3], m[v,4]$ & - & other \\
			$j[v,4]$ & $V_1$ &  - & $m[v,4], m[v,1]$ & other \\
			\hline
			$j^T[v,i]$ & $V_2$ & $m[v,i], m[\kappa(v,i)]$ & - & other \\
			$j^F[v,i]$ & $V_2$ & - & $m[v,i], m[\kappa(v,i)]$ & other \\
			\hline
			$j[C, 1]$ & $V_1$ & $m[C,1], m[C,2], m[C,3]$ & - & other \\
			$j[C, 2]$ & $V_1$  & - & $m[C,1], m[C,2], m[C,3]$ & other \\
			\multirow{2}{*}{$j[C, 3]$} & \multirow{2}{*}{$V_1$}   & if $C \in C_{13}$: &  if $C \in C_{23}$: & other \\
			&  &      $m[C,1], m[C,2], m[C,3]$  & $m[C,1], m[C,2], m[C,3]$ & other \\ 
			\\  
		\end{tabular}
		\label{table:ProcessingTimes}
	\end{table}
	\label{apx:hardness}
	\begin{proof}
		Consider the proof of Proposition $9$ from~\cite{DBLP:conf/stacs/MaackJ20}.
		We encode the \ThreeSatStar problem as an instance of $R|\conflictgraph[2], p_{j}^{i} \in \{p_1 < p_2 < p_3\}|\sum C_j$.
		For an instance of \ThreeSatStar let $V$ be the set of variables, $C_{13}$ the set of 1-in-3 clauses and $C_{23}$ the set of 2-in-3 clauses.	
		Let $\kappa : V \times \{1,2,3,4\} ]\rightarrow C \times \{1,2,3\}$, i.e., let it be a function mapping respectively the first nonnegated, first negated, second nonnegated, second negated literal corresponding to $v \in V$ to a clause $C$ and a position in the clause.  
		
		We construct the following sets:
		sets of machines and jobs that correspond to variables,
		sets of machines and jobs that correspond to clauses,
		sets of jobs that force the valuation of literals in the clauses to be consistent with the valuation of variables.
		\begin{itemize}
			\item For a given variable $v \in V$ we construct machines $m[v,1], m[v,2], m[v,3], m[v,4]$ called {\em variable machines} and jobs $j[v,1], j[v,2], j[v,1], j[v,2]$ called {\em variable jobs}.
			\item For a given clause   $C \in C_{13} \cup C_{23}$ we construct machines $m[C,1], m[C,2], m[C,3]$ called {\em clause machines} and jobs $j[C, 1], j[C,2], j[C,3]$ called {\em clause jobs}.
			\item In addition we construct for each variable $v \in V$ jobs $j^T[v,1], j^T[v,2], j^T[v,3], j^T[v,4]$ and $j^F[v,1], j^F[v,2], j^F[v,3], j^F[v,4]$ called {\em consistency jobs}.
		\end{itemize}
		The cliques have two functions: they force consistency of the valuation of the literals; they also force that 1-in-3 clauses and 2-in-3 clauses are satisfied by literals, with consistency jobs acting as intermediaries.
		Notice that the total number of machines is $m = 8|V|$ and the total number of jobs is $n = 16|V|$.
		The processing times are given in \cref{table:ProcessingTimes}.
		The question is if there is a schedule with the total completion time equal to $m(2p_1 + p_2)$, which corresponds to a schedule where every job is scheduled with the lowest possible processing time and every machine has one job assigned with processing time $p_1$ and one with $p_2$.
		In fact the processing time $p_3$ is used to exclude some assignments, because in a schedule that has the required processing time each of the jobs have to be assigned with processing time $p_1$ or $p_2$.
		
		Assume that there is an assignment satisfying the \ThreeSatStar instance.
		Construct the schedule in the following way:
		\begin{enumerate}
			\item 
			If $v \in V$ has value true:
			\begin{itemize}
				\item $m[v,1] \leftarrow \{j[v,1], j^F[v,1]\}$, $ m[v,2] \leftarrow \{j[v,2], j^T[v,2] \}$,\\ $m[v,3] \leftarrow \{j[v,3], j^F[v,3]\}$, $m[v,4] \leftarrow \{j[v,4], j^T[v,4]\}$.
				\item $m[\kappa(v,1)] \leftarrow j^T[v,1], m[\kappa(v,2)] \leftarrow j^F[v,2], m[\kappa(v,3)] \leftarrow j^T[v,3], m[\kappa(v,4)] \leftarrow j^F[v,4]$.
			\end{itemize} 
			If $v \in V$ has value false:
			\begin{itemize}
				\item $m[v,1] \leftarrow \{j[v,2], j^T[v,1]\}$, $m[v,2] \leftarrow \{j[v,3], j^F[v,2] \}$, \\ $m[v,3] \leftarrow \{j[v,4], j^T[v,3]\}$, $m[v,4] \leftarrow \{j[v,1], j^F[v,4]\}$.
				\item $m[\kappa(v,1)] \leftarrow j^F[v,1], m[\kappa(v,2)] \leftarrow j^T[v,2], m[\kappa(v,3)] \leftarrow j^F[v,3], m[\kappa(v,4)] \leftarrow j^T[v,4]$.
			\end{itemize}
			\item 
			For a clause $C$ assign the jobs from $j[C,1], j[C,2], j[C,3]$ optimally.
			That is, assign them in a way that is consistent with the assignment performed in the previous step and in a way that each of the machines has one job with processing time $p_1$ and one with $p_2$.
			Notice that for a $C \in C_{13}$ the machines $m[C,1], m[C,2], m[C,3]$ have exactly two consistency jobs with processing time $p_2$ and one with $p_1$ assigned, hence it is always possible.
			Similar considerations hold for clauses $C \in C_{23}$.
		\end{enumerate}
		In such a schedule each of the machines have exactly one job with processing time $p_1$ and one with $p_2$.
		
		Now assume that there is a schedule $S$ with the total completion time equal to $m(2p_1 + p_2)$.
		Consider an assignment: for a variable $v$ if $m[v,1]$ has $j[v,1]$ assign $T$ to $v$, otherwise assign $F$.
		Notice that due to the processing times, $m[v,1]$ has $j[v,1]$ and $m[v,3]$ has $j[v,3]$ assigned; or $m[v,1]$ has $j[v,2]$ and $m[v,3]$ has $j[v,4]$ assigned.
		The jobs on the machines hence correspond to the $T/F$ values of the literals.
		In the schedule the jobs $j^T[v,i]$ and $j^F[v,i]$ have to complement the assignment of $j[v,i]$, hence the valuation of the "appearances" of the literals has to be also consistent. 
		Finally notice that due to the total completion time bound and due to the processing times, the machines $m[C,1], m[C,2], m[C,3]$ have exactly two jobs: $j^T[v,i]$ and $j^T[v', i']$ and one job $j^F[v'', i'']$ assigned if clause is $C \in C_{23}$.
		Similar observation holds for $C \in C_{13}$.
	\end{proof}
	By similar constructions we obtain the two following theorems.
	\begin{theorem}
		\label{theorem:BagMachineNPC}
		\Copy{theorem:BagMachineNPC:text}
		{
			$P|\conflictgraph{}, p_j \in \{p_1 < p_2\},\bagmachinerestriction|\sum C_j$ is strongly \NPC even if each clique has at most $2$ jobs.	
		}
	\end{theorem}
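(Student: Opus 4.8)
The plan is to reduce from \ThreeSatStar, reusing almost verbatim the gadget skeleton of \cref{theorem:R2BNPC}. The crucial observation is that in that reduction every job already has a single ``intended'' processing time, namely $p_1$ or $p_2$, and the third value $p_3$ serves no purpose other than to forbid a job from the machines outside its eligible set. Since here we are handed clique-dependent machine restrictions $\bagmachinerestriction$, I would realise these prohibitions directly through the eligible sets $M(k)$ and discard $p_3$ altogether, keeping exactly the finite processing times read off from \cref{table:ProcessingTimes}. Thus I would again build, for each variable $v$, the variable machines $m[v,1],\dots,m[v,4]$ and variable jobs $j[v,1],\dots,j[v,4]$; for each clause $C$, the clause machines $m[C,1],m[C,2],m[C,3]$ and clause jobs $j[C,1],j[C,2],j[C,3]$; and for each literal appearance the consistency jobs $j^T[v,i],j^F[v,i]$, with the same sizes and eligible sets, so that $m=8|V|$, $n=16|V|$, and the number of $p_1$-jobs equals the number of $p_2$-jobs equals $m$.

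The one genuine point of departure is the clique structure, since we must now keep every clique of size at most two whereas \cref{theorem:R2BNPC} used the two large cliques. I would place each consistency pair $\{j^T[v,i],j^F[v,i]\}$ in its own size-two clique $V^*[v,i]$ with eligible set $\{m[v,i],m[\kappa(v,i)]\}$, and put every variable job and every clause job in a singleton clique carrying the machine restriction listed for it in \cref{table:ProcessingTimes}. A singleton clique is merely a job with a machine restriction, so this respects $\bagmachinerestriction$ and the bound of two jobs per clique, and no clique of size exceeding two is ever needed.

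The target value is once more $m(2p_1+p_2)$. First I would argue, as in \cref{theorem:R2BNPC} via Smith's rule~\cite{SmithRule}, that with a multiset of $m$ copies of $p_1$ and $m$ of $p_2$ the value $m(2p_1+p_2)$ is attainable if and only if every machine receives exactly two jobs, one of each size (a machine with three or more jobs, or with two equal-size jobs, strictly increases the cost by at least $p_2-p_1$). The forward direction copies the assignment of \cref{theorem:R2BNPC}: a satisfying valuation sends, through each clique $V^*[v,i]$, the $p_1$-job to the clause machine for a true literal and the $p_2$-job for a false one, and the balanced schedule follows. For the backward direction the size-two cliques do all the work that the large cliques previously did: each clique $V^*[v,i]$ is forced to split its $p_1$- and $p_2$-job between $m[v,i]$ and $m[\kappa(v,i)]$, so every variable machine receives exactly one consistency job and every clause machine exactly one; counting against the ``exactly two jobs per machine'' constraint then forces each variable machine to take exactly one variable job and each clause machine exactly one clause job. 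On the four variable machines this leaves precisely the two perfect matchings of a $4$-cycle, encoding the truth value of $v$ consistently, and on each clause gadget the size of the single arriving consistency job per position together with the fixed clause-job sizes forces exactly one (respectively two) true literals for a $1$-in-$3$ (respectively $2$-in-$3$) clause.

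I expect the main obstacle to be exactly this re-packaging into cliques of size at most two: in \cref{theorem:R2BNPC} the property ``every machine carries two jobs, the three clause jobs lie on three distinct clause machines, and the four variable jobs form a matching'' came for free from the two saturating cliques, whereas here it must be re-derived from the budget, the singleton restrictions, and the forced split performed by the consistency cliques. The delicate check is that this counting still pins down a bijection between clause jobs and clause machines, and between variable jobs and variable machines, without any incompatibility among those jobs themselves; this works only because each machine's single remaining free slot is filled by exactly one such job. Strong hardness is then routine: the reduction is polynomial and, taking $p_1<p_2$ to be the constants $1<2$, produces only small numbers, so the problem is strongly \NPC given the hardness of \ThreeSatStar~\cite{DBLP:conf/stacs/MaackJ20}, while membership in \NPclass is immediate.
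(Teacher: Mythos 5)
Your proposal is correct and follows essentially the same route as the paper: reuse the \ThreeSatStar{} gadget of \cref{theorem:R2BNPC} with the same machines, jobs and sizes, drop $p_3$ in favour of clique-dependent eligible sets, make each consistency pair $\{j^T[v,i],j^F[v,i]\}$ a two-job clique and everything else a singleton clique, and argue that the budget $m(2p_1+p_2)$ forces exactly one job of each size per machine, from which consistency and clause satisfaction follow. You in fact spell out the counting and matching details that the paper's proof only sketches, and your per-pair reading of the consistency cliques is the one consistent with the theorem's ``at most $2$ jobs per clique'' claim.
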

	\begin{proof}
		\label{theorem:BagMachineNPC:proof}
		We proceed similarly as in the proof of \cref{theorem:R2BNPC}.
		In fact we construct the same set of machines and the same set of jobs.
		However, we do not use $p_3$.
		The reasons are that clique restrictions are used instead, and that we form cliques differently, see \cref{table:ProcessingTimesRestricedAssignment}.
		As previously the total completion time limit is $m(2p_1 + p_2)$.
		Notice that the limit on the completion time forces that each of the components corresponding to a variable has to get exactly two jobs $j^T[v,i]$ and $j^T[v,i']$ and two jobs $j^F[v,i'']$ and $j^F[v,i''']$.
		This forces the structure of the assignment to correspond to a proper solution to \ThreeSatStar.
	\end{proof}
	\begin{table}
		\caption{The processing times $p_{j}^{i}$ used in the $\alpha$-reduction in \cref{theorem:BagMachineNPC}.}
		\centering
		\begin{tabular}{ l c c c }
			job & clique & $p_i$ & clique allowed on \\
			\hline
			$j[v,1]$ & $V[v,1]$ & $p_1$ & $m[v,1], m[v,2]$ \\
			$j[v,2]$ & $V[v,2]$ & $p_2$ & $m[v,2], m[v,3]$ \\
			$j[v,3]$ & $V[v,3]$ & $p_1$ & $m[v,3], m[v,4]$ \\
			$j[v,4]$ & $V[v,4]$ & $p_2$ & $m[v,4], m[v,1]$ \\
			\hline
			$j^T[v,i]$ & $V[v,5]$ & $p_1$ & $m[v,i], m[\kappa(v,i)]$\\
			$j^F[v,i]$ & $V[v,5]$ & $p_2$ & $m[v,i], m[\kappa(v,i)]$ \\
			\hline
			$j[C, 1]$ & $V[C,1]$ & $p_1$ & $m[C,1], m[C,2], m[C,3]$ \\
			$j[C, 2]$ & $V[C,2]$ & $p_2$ & $m[C,1], m[C,2], m[C,3]$ \\
			\multirow{2}{*}{$j[C, 3]$} & \multirow{2}{*}{$V[C,3]$} & if $C \in C_{13}$: $p_1$ & \multirow{2}{*}{$m[C,1], m[C,2], m[C,3]$}  \\
			&                          & if $C \in C_{23}$: $p_2$ &        \\  
			\\ 
		\end{tabular}
		\label{table:ProcessingTimesRestricedAssignment}
	\end{table}

	\begin{theorem}
		\label{theorem:2SizesNPC}
		\Copy{theorem:2SizesNPC:text}
		{
			$R|\conflictgraph{}, p_j \in \{p_1 < p_2\}|\sum C_j$ is strongly \NPC.
		}
	\end{theorem}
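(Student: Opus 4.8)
The plan is to reuse \emph{verbatim} the reduction from \ThreeSatStar built in the proof of \cref{theorem:R2BNPC} and to obtain \cref{theorem:2SizesNPC} essentially for free by replacing the blocking processing time $p_3$ with $\infty$. Concretely, I would keep the same variable machines $m[v,i]$, clause machines $m[C,i]$, the same variable, clause, and consistency jobs, the same two cliques $V_1,V_2$, and the same target total completion time $m(2p_1+p_2)$. The key observation is that in \cref{table:ProcessingTimes} every job already has exactly one \emph{useful} finite value — $p_1$ or $p_2$ depending on the job — and that $p_3$ occurs only in the ``other'' column, i.e.\ on the machines where the job is not supposed to land. Since the machines here are unrelated, I am allowed to set those entries to $\infty$, which turns each job into one with a machine-independent finite processing time $p_j\in\{p_1,p_2\}$ together with assignment restrictions. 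Thus the produced instance lies in $R|\conflictgraph[2], p_j \in \{p_1 < p_2\}|\sum C_j$, and therefore in the target class $R|\conflictgraph{}, p_j \in \{p_1 < p_2\}|\sum C_j$.

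Next I would argue that the equivalence carries over unchanged. A schedule meeting the bound $m(2p_1+p_2)$ must give every machine one $p_1$-job and one $p_2$-job, so in particular no job is placed on a machine where its cost is $\infty$; this is precisely the role that ``$p_3$ is chosen too large to be used'' played in \cref{theorem:R2BNPC}, only now enforced by infeasibility rather than by a cost penalty. Hence the forward direction (a satisfying assignment yields the schedule described in \cref{theorem:R2BNPC}, with consistency jobs complementing the variable jobs and the clause gadgets realizing the $1$-in-$3$ / $2$-in-$3$ conditions) and the backward direction (a schedule of value $m(2p_1+p_2)$ pins down the truth value of each variable, forces the consistency jobs to be consistent, and forces exactly two $p_2$-consistency-jobs on a $C_{23}$ clause gadget and exactly one on a $C_{13}$ gadget) are literal transcriptions of the argument already given for \cref{theorem:R2BNPC}.

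For the complexity claim I would note that membership in \NPclass is immediate, since feasibility of a schedule and its total completion time are checkable in polynomial time, and that strong \NPC follows because $p_1<p_2$ are fixed constants, so the reduction from the strongly hard problem \ThreeSatStar outputs only polynomially bounded numbers. The only point needing genuine care — and the step I would expect to be the main obstacle, although a mild one — is checking that passing from a large finite $p_3$ to $\infty$ introduces no new feasible schedules of the target value and destroys none of the intended ones. This is safe because $\infty$ is strictly more restrictive than any finite $p_3$ and the intended optimal schedule never used a $p_3$-entry to begin with; consequently the set of schedules of cost $m(2p_1+p_2)$ is exactly the same as in \cref{theorem:R2BNPC}, and in particular the backward direction still forces each clause gadget into the $1$-in-$3$ respectively $2$-in-$3$ pattern, which is what makes the reduction correct.
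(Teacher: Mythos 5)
There is a genuine gap, and it lies exactly at the step you flag as ``the main obstacle, although a mild one.'' Replacing the blocking value $p_3$ by $\infty$ does \emph{not} produce an instance of $R|\conflictgraph{}, p_j \in \{p_1 < p_2\}|\sum C_j$. In this problem class every processing time must be one of the two finite values $p_1 < p_2$; an entry equal to $\infty$ (equivalently, an assignment restriction $M(j)$) is a third processing-time value and a strictly stronger modelling primitive, which the three-field notation of the paper records separately (compare $R|\conflictgraph{}, \jobmachinerestriction, \dots|\sum C_j$ and $R|\conflictgraph{}, p_j^i\in\{1,\infty\}|\sum C_j$ with the target class here). So what you have proved is hardness of a restricted-assignment variant with two finite values, not of the stated problem. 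Worse, your construction keeps only two cliques: by \cref{theorem:TwoCliquesTwoTimes} the problem $R|\conflictgraph[2], p_{j}^{i} \in \{p_1 \le p_2\}|\sum C_j$ is solvable in polynomial time, so if your instance really did lie in that class as you claim, you would have shown $\Pclass=\NPclass$. This is a strong signal that the two-clique, two-finite-value gadget cannot carry the reduction and that eliminating the blocking value is the actual content of the theorem, not a cosmetic substitution.

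The paper's proof handles precisely this point by a different construction. It uses $\Theta(|V|)$ cliques rather than two, and it \emph{emulates} each assignment restriction with families of dummy jobs: for every gadget it adds, within the same clique as the real jobs, $m-O(1)$ dummy jobs that cost $p_1$ on all machines \emph{outside} the gadget and $p_2$ on the gadget machines (see \cref{table:TwoProcessingTimes}). The completion-time bound $4|V|\,cost(b-2,1)+4|V|\,cost(b-1,1)$ then forces each machine to receive $b-1$ or $b$ jobs, almost all of them dummies scheduled at cost $p_1$, which in turn forces each real job onto its intended machines. If you want to salvage your write-up, you need to add such a dummy-job layer (and accept that the number of cliques grows with the instance); the equivalence argument then has to account for machines carrying many jobs each, so the cost accounting is genuinely different from the two-jobs-per-machine analysis of \cref{theorem:R2BNPC}.
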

	\begin{proof}
		\label{theorem:2SizesNPC:proof}
		As previously, let there be an instance of \ThreeSatStar with set of variables $V$.
		In the case of this problem we construct many dummy jobs to emulate the restricted assignment.
		For clarity let us define a function $cost(x_1, x_2) =  \frac{x_2(x_2 + 1)}{2}p_2 + x_2x_1p_2 + \frac{x_1(x_1+1)}{2}p_1$, i.e., it is the total completion time of $x_2$ jobs with processing time $p_2$ and $x_1$ jobs with processing time $p_1$ scheduled on a single machine according to Smith's Rule.
		Consider the data in \cref{table:TwoProcessingTimes}, notice that there are $b = 7|V| + 4/3|V|$ cliques.
		The bound on the total completion time is $4|V|cost(b-2, 1) + 4|V|cost(b-1, 1)$.
		Notice that the bound corresponds to an assignment of the smallest possible number of jobs with processing time $p_2$ and maximal number of jobs with processing time $p_1$, moreover in a further described optimal way.
		By this property each of the clauses machines has to have $b$ jobs and each of variable machines has to have $b-1$ jobs.
		That is, half of the machines have to have $b-2$ jobs assigned with processing time $p_1$ and one job with processing time $p_2$, these have to be variable machines.
		The second half of the machines have to have $b-1$ jobs with processing time $p_1$ and $1$ with $p_2$, these are the machines corresponding to the clauses.  
		Notice that this forces the assignment of the variable jobs to variable machines to be consistent; the dummy jobs $j^*$, $j^{**}$ force this.
	\end{proof}
	\begin{table}
		\caption{The processing times $p_{j}^{i}$ used in the $\alpha$-reduction in \cref{theorem:2SizesNPC}.}
		\centering
		\begin{tabular}{ l c c c }
			job & clique & $p_1$ on & $p_2$ on \\
			\hline
			$j[v,1]$ & $V_1[v]$ & $m[v,1], m[v,2]$ & other \\
			$j[v,2]$ & $V_2[v]$ & - &  other \\
			$j[v,3]$ & $V_1[v]$ &$m[v,3], m[v,4]$ &  other \\
			$j[v,4]$ & $V_3[v]$ &  - & other \\
			$j[v,i]$, $i \in [5,m]$  & $V_1[v]$ & $M \setminus \{m[v,1], m[v,2], m[v,3], m[v,4] \}$ & $m[v,1], m[v,2], m[v,3], m[v,4]$ \\
			$j^*[v,i]$, $i \in [3,m]$  & $V_2[v]$ & $M \setminus \{ m[v,2], m[v,3]\}$ & $ m[v,2], m[v,3]$\\
			$j^{**}[v,i]$, $i \in [3,m]$  & $V_3[v]$ & $M \setminus \{m[v,1], m[v,4] \}$ & $m[v,1],  m[v,4]$ \\
			\hline
			$j^T[v,i]$ & $V[v,i]$ & $m[v,i], m[\kappa(v,i)]$ &  other \\
			$j^F[v,i]$ & $V[v,i]$ & - &  other \\
			$j^*[v,i,j], j \in [3,m]$ & $V[v,i]$ & $M \setminus \{m[v,i], m[\kappa(v,i)]\}$ & $m[v,i], m[\kappa(v,i)]$ \\
			\hline
			$j[C, 1]$ & $V[C]$ & $m[C,1], m[C,2], m[C,3]$ & other \\
			$j[C, 2]$ & $V[C]$  & - &  other \\
			$j[C, 3]$ & $V[C]$ & if $C \in C_{13}$:  $m[C,1], m[C,2], m[C,3]$  & other \\  
			$j^*[C, i], i \in [4, m]$ & $V[C]$ & $M \setminus m[C,1], m[C,2], m[C,3]$ & $m[C,1], m[C,2], m[C,3]$ \\  
			\\
		\end{tabular}
		\label{table:TwoProcessingTimes}
	\end{table}

\section{FPT Results}\label{sec:fptalgos}
	This section presents the FPT results for scheduling with clique incompatibility considering different parameterizations. To solve these problems, the algorithms model the respective problem as \nfold{} Integer Programs. 
	These IPs are of specific form: The constraint matrix consists of non-zero entries only in the first few rows and in blocks along the diagonal beneath. 
	Further we have to assure that the introduced objective functions are separable convex. Then the \nfold{} IP and thus the underlying problem can be solved efficiently. 
	The FPT results we obtain this way are:
	\begin{itemize}
		\item the problem $P|\conflictgraph{}, \bagmachinerestriction|\sum C_j$ can be solved in FPT time parameterized by the number of cliques $b$,
		\item the problem $R|\conflictgraph{}|\sum w_j C_j$ can be solved in FPT time parameterized by the number of machines $m$, the largest processing time $p_{\max}$ and the number of job kinds $\vartheta$,
		\item the problem $R|\conflictgraph{}|\sum w_j C_j$ can be solved in FPT time parameterized by the number of cliques $b$, the number of machine kinds $\kappa$, the largest processing time $p_{\max}$ and the number of job kinds $\vartheta$. 
	\end{itemize}
		The basis for the last two algorithms is formed by the work~\cite{DBLP:journals/scheduling/KnopK18} of Knop and Kouteck{\'{y}}. 
	Therein the authors prove FPT results for $R||\sum w_j C_j$ by formulating the problems as $n$-fold IPs with an appropriate objective function and similar parameters. 
	We prove that these IPs can be extended to handle clique incompatibility by carefully adapting the variables, the IPs and the objective functions, yielding the results above. Note that in \cite{DBLP:journals/corr/abs-1909-07326} these results are generalized, but by that also more complex. Further, using these results does not improve upon our running times.
	But first, let us give a short introduction to FPT and \nfold{} Integer Programming necessary to understand the following results. For details on FPT we refer to the standard textbook~\cite{DBLP:books/sp/CyganFKLMPPS15}. For details on \nfold{} IPs, we recommend~\cite{DBLP:journals/corr/abs-1904-01361}. 

	\subparagraph*{FPT.}
	In the parameterized complexity world a language is defined as $L\subseteq \{0,1\}^{*}\times \mathbb{N}$ where the first element encodes the instance and the second element, called \emph{parameter}, gives some further knowledge about the problem. This parameter may include the size of a solution, the treewidth of the graph, the number of variables in a formula, et cetera \cite{DBLP:books/sp/CyganFKLMPPS15}.
	A problem is \emph{fixed-parameter tractable} (FPT) if there is an algorithm that decides if $(x,k)\in L$ in time $f(k)\cdot |x|^c$ for a computable function $f$ and constant $c$.
	
	\subparagraph*{\nfold{} IP.} 
	Let $n,r,s,t \in \mathbb N$. Let $A_1,\dotsc,A_n \in \mathbb{Z}^{r \times t}$ and $B_1,\dotsc,B_n \in \mathbb{Z}^{s \times t}$ be integer matrices. The constraint matrix $\mathcal A \in \mathbb{Z}^{(r+n\cdot s) \times (n\cdot t)}$ of an \nfold{} IP is of following form:
	\begin{equation*}
	\mathcal A =
	\begin{pmatrix}
	A_1	& A_2	& \dots	& A_n      \\
	B_1	& 0 	& \dots  	& 0 	  \\
	0	&B_2	&\dots 	& 0	\\
	\vdots	& \vdots 	& \ddots & \vdots \\
	0 	& 0 & \dots 	 & B_n
	\end{pmatrix}.
	\end{equation*}
	Denote by $\Delta$ the largest absolute value in $\mathcal A$. We distinguish the constraints as follows: Denote the constraints (rows) corresponding to the $A_i$ matrices \textit{globally uniform} and the ones corresponding to the $B_i$ matrices \textit{locally uniform}.

	A function $g: \mathbb{R}^n \rightarrow \mathbb{R}$ is called separable convex if there exist convex functions $g_i:\mathbb{R} \rightarrow \mathbb{R}$ for each $i\in[n]$ such that $g(x)=\sum_{i=1}^ng_i(x_i)$.
	Let $f:\mathbb{R}^{nt}\rightarrow\mathbb{R}$ be some separable convex function and $b \in \mathbb{Z}^{r+n\cdot s}$. Further, denote by $\ell$ and $u$ some upper and lower bounds on the variables. 
	The corresponding \nfold{} Integer Program (\nfold{} IP)  is defined by $
	\min\,\{f(x) \ \vert\ \mathcal Ax = b, \ell \leq x \leq u,\, x \in \mathbb{Z}^{n\cdot t} \} $.
	The main idea for solving these IPs relies on local improvement steps which are used to converge from an initial solution to an optimal one yielding:
	\begin{proposition}[\cite{DBLP:journals/corr/abs-1904-01361}] \label{p:nfold}
		The Integer Program (\nfold{} IP) can be solved in time $(\Delta r s)^{O(r^2s+rs^2)}$ $nt \log(nt) \log(\|u-\ell\|_\infty)\log(f_{\max})$ where $f_{\max} = \max\sett[\big]{|f(x)|}{\ell\leq x\leq u}$. 
	\end{proposition}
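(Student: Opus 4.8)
The plan is to prove this via the iterative augmentation framework for block-structured integer programs, whose two pillars are a norm bound on the Graver basis of $\mathcal{A}$ and a polynomial bound on the number of augmentation steps. Recall that the Graver basis $\mathcal{G}(\mathcal{A})$ consists of the $\sqsubseteq$-minimal nonzero integer elements of the kernel lattice $\{z : \mathcal{A}z = 0\}$, where $\sqsubseteq$ is the conformal (sign-compatible) partial order. The central structural fact to establish first is that every $g \in \mathcal{G}(\mathcal{A})$ has $\ell_\infty$-norm bounded independently of $n$, concretely $\|g\|_\infty \le (\Delta r s)^{O(rs)}$. I would prove this first for a single block by bounding the Graver norm of the matrix $\binom{A_i}{B_i} \in \mathbb{Z}^{(r+s)\times t}$, and then lift it to the full $n$-fold matrix. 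The lifting is the technical heart: given a kernel element, its local parts each conform to the $B_i$, while the sum of their images under the $A_i$ vanishes. Using the Steinitz lemma one reorders and merges the local parts so that all partial sums of the global images stay in a box of radius $O(\Delta\cdot g_\infty(B))$; this shows that any conformal element with too many nonzero blocks can be split, bounding the support and hence the norm of the Graver elements.

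With the norm bound in hand, the second ingredient is the convergence of Graver-best augmentation for separable convex objectives. Here I would invoke the standard fact that a feasible $x$ is optimal if and only if no $g\in\mathcal{G}(\mathcal{A})$ and step length $\lambda\in\mathbb{Z}_{>0}$ yield a feasible improving point $x+\lambda g$, and that always taking a (near-)best such step drives $f(x)$ to the optimum in $O(nt\,\log(\|u-\ell\|_\infty)\log f_{\max})$ iterations. This count follows from a halving/scaling argument: one works with a geometric sequence of step lengths, and separable convexity guarantees a geometric decrease of the optimality gap per scaling phase, with the $\log(\|u-\ell\|_\infty)$ and $\log f_{\max}$ factors accounting for the range of the variables and the objective, respectively.

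It remains to realize each augmentation step algorithmically. Given the norm bound, I would search for a best augmenting step by dynamic programming across the $n$ blocks: a state records the accumulated contribution to the $r$ globally uniform rows, which by the norm bound lies in a box of size $(\Delta rs)^{O(rs)}$ in each of the $r$ coordinates; for each block one enumerates the bounded-norm local moves respecting the locally uniform rows $B_i$ that realize a given incremental global contribution, choosing the one of best marginal objective value. Sweeping the blocks left to right and combining their tables yields the best step in time $(\Delta r s)^{O(r^2 s + rs^2)}\cdot nt$; the exponent arises as the product of the number of admissible global states, roughly $(\Delta rs)^{O(r^2 s)}$, with the work $(\Delta rs)^{O(rs^2)}$ needed to solve each bounded local subproblem. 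Multiplying the per-iteration cost by the iteration bound, together with the standard auxiliary $n$-fold IP used to obtain an initial feasible solution, gives the claimed overall running time.

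The step I expect to be the main obstacle is the Graver norm bound and, in particular, its independence of $n$: controlling the interaction between the global rows $A_i$ and the local rows $B_i$ via the Steinitz lemma so that the bound depends only on $\Delta$, $r$, and $s$ is the delicate part, and it is precisely this bound that propagates into both the iteration count and the size of the dynamic-programming state space.
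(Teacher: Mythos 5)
This proposition is not proved in the paper at all: it is imported verbatim as a black-box result from \cite{DBLP:journals/corr/abs-1904-01361}, so there is no in-paper argument to compare yours against. That said, your outline is a faithful reconstruction of how the cited work actually establishes the theorem: the $n$-independent $\ell_\infty$-bound $(\Delta rs)^{O(rs)}$ on Graver elements of the $n$-fold matrix (proved by bounding the Graver norm of a single block $\binom{A_i}{B_i}$ and then controlling the number of nonzero bricks of a global kernel element via the Steinitz lemma), the Graver-best augmentation framework whose iteration count contributes the $nt\log(\lVert u-\ell\rVert_\infty)\log(f_{\max})$ factor for separable convex objectives, and the block-by-block dynamic program over a global state space of size $(\Delta rs)^{O(r^2s)}$ with local work $(\Delta rs)^{O(rs^2)}$, which together account for the exponent $O(r^2s+rs^2)$. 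The only caveat is that your text is a proof plan rather than a proof --- the Steinitz-based brick-counting argument and the convergence analysis are each nontrivial lemmas that you invoke rather than carry out --- but as a description of the proof architecture behind the citation it is accurate, and for the purposes of this paper the proposition is legitimately used as an external tool.
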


	\subsection{Scheduling with Clique Machine Restrictions}
	We consider the problem variant $P|\conflictgraph{}, \bagmachinerestriction|\sum C_j$.
	Recall that in this setting we have a set $M(k)$ of machines for each clique $k\in [b]$. In a feasible schedule jobs of clique $k$ are scheduled exclusively on machines $i\in M(k)$.
	We prove the following result:
	
	\begin{theorem}
		The problem $P|\conflictgraph{}, \bagmachinerestriction|\sum C_j$ can be solved in FPT time parameterized by the number of cliques $b$.
	\end{theorem}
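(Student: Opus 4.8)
The plan is to model the problem as an \nfold{} IP whose objective is separable convex, so that Proposition~\ref{p:nfold} applies with all block parameters bounded by a function of $b$.

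First I would record the structural facts that make $b$ the right parameter. Because the incompatibility graph is a union of $b$ cliques, every machine receives at most one job per clique, hence at most $b$ jobs, occupying the $b$ positions (\emph{layers}) $1,\dots,b$ counted from the end; by Smith's rule~\cite{SmithRule} a job placed at layer $\ell$ contributes exactly $\ell\cdot p_j$ to $\sumcost$. Fix a clique $V_k$ and let $x_{k,\ell}$ be the number of machines carrying a clique-$k$ job at layer $\ell$. Since all jobs of $V_k$ share the eligible set $M(k)$ and land on distinct machines, assigning them to these $\sum_\ell x_{k,\ell}=|V_k|$ slots is a free assignment, and by the rearrangement inequality it is optimal to match the largest jobs to the smallest multipliers. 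Writing $P_k(s)$ for the sum of the $s$ largest processing times of $V_k$ (with $P_k(0)=0$) and $S_{k,\ell}=\sum_{\ell'\le\ell}x_{k,\ell'}$, a short telescoping computation gives the clique-$k$ contribution $b\,P_k(|V_k|)-\sum_{\ell=1}^{b-1}P_k(S_{k,\ell})$. The crucial point is that $P_k$ is \emph{concave} (sorting decreasingly makes its increments non-increasing), so each $-P_k(S_{k,\ell})$ is convex in the single variable $S_{k,\ell}$; summing over $k$, the whole objective is separable convex once the prefix counts $S_{k,\ell}$ are made explicit variables.

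Next I would build the IP. I use one block per machine, with binary variables $z_{i,k,\ell}$ indicating that clique $k$ sits at layer $\ell$ on machine $i$ (dropping the variables with $i\notin M(k)$). The locally uniform constraints per machine are $\sum_k z_{i,k,\ell}\le 1$ for every layer and $\sum_\ell z_{i,k,\ell}\le 1$ for every clique, which encode, respectively, one job per layer and the incompatibility of $V_k$. A single auxiliary block holds the variables $S_{k,\ell}$, tied to the machine blocks by the globally uniform constraints $S_{k,\ell}=\sum_i\sum_{\ell'\le\ell}z_{i,k,\ell'}$, together with $S_{k,b}=|V_k|$ to force every job to be scheduled; the objective $\sum_k\bigl(b\,P_k(|V_k|)-\sum_{\ell=1}^{b-1}P_k(S_{k,\ell})\bigr)$ is placed on these variables. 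I would not impose any ``layers are contiguous'' constraint: moving a clique-$k$ job from layer $\ell$ to an empty layer $\ell-1$ only raises some $S_{k,\ell-1}$ and therefore weakly decreases the objective, so an optimal solution is gap-free and reads off directly as a feasible schedule of the same cost.

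Finally I would bound the block parameters and invoke Proposition~\ref{p:nfold}. Each block has $t=O(b^2)$ variables, there are $s=O(b^2)$ locally uniform and $r=O(b^2)$ globally uniform constraints, every coefficient lies in $\{0,\pm1\}$ so $\Delta=1$, and the number of blocks is $m+1$, polynomial in the input; since $f_{\max}$ is polynomially bounded, the resulting running time $(\Delta rs)^{O(r^2s+rs^2)}\cdot\mathrm{poly}$ is FPT in $b$. I expect the main obstacle to be exactly the reformulation of the objective: the naive cost depends on which individual job goes where, and the work is to collapse it, via the concave partial-sum functions $P_k$ and the prefix counts $S_{k,\ell}$, into a separable convex function expressible within the rigid \nfold{} block structure while keeping $r,s,t$ bounded by a function of $b$ (a secondary nuisance is accommodating the one atypical auxiliary block, which fits the standard framework after padding).
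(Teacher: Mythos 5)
Your proposal is correct and follows essentially the same route as the paper's proof: both reduce the problem to an \nfold{} IP with one block per machine, aggregate the per-clique, per-layer slot counts into prefix-sum variables living in a single (padded) auxiliary block, and obtain a separable convex objective from partial sums of the sorted processing times (your concave $P_k$ with layers counted from the end is the mirror image of the paper's convex $y\mapsto\sum_{s=1}^{y}p_{k,s}$ with layers counted from the start and the greedy within-clique assignment). The only deviations are minor: the paper encodes each machine's pattern by $(b+1)!$ configuration indicators with a single locally uniform constraint and enforces gap-freeness syntactically in the definition of a configuration, whereas you use $O(b^2)$ clique-layer indicators with $O(b)$ local constraints and argue gap-freeness holds at optimality; both give FPT running times in $b$ (the paper's exponent is slightly smaller because its $s=1$).
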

	
	To prove this result, we first establish some notation and basic observation, then introduce an Integer Programming model with $n$-fold form for the problem, and lastly argue that it can be solved efficiently.
	
	In any schedule for an instance of the problem there can be at most $b$ jobs scheduled on each machine due to the clique constraints.
	Hence, we may imagine that there are $b$ slots on each machine numbered in chronological order.
	We further use the intuition that the slots form $b$ layers with all the first slots in the first layer, all the second slots in the second one, and so on.
	Obviously, we can represent any schedule by an assignment of the jobs to these slots.
	Some of the slots may be empty, and we introduce the convention that all the empty slots (hence taking $0$ time) on a machine should be in the beginning. 
	If a job of clique $k$ is scheduled in a certain slot, we say that $k$ is present in the slot, in the corresponding layer and on the machine.
	In the following, we are interested in the pattern of cliques present on the machine and call such a pattern a \emph{configuration}.
	More precisely, we call a vector $C\in\set{0,1,\dots,b}^b$ a configuration if the following two conditions are satisfied:
	\begin{itemize}
		\item $\forall \ell,\ell'\in[b]: C_{\ell} = C_{\ell'} \wedge \ell\neq \ell' \implies C_{\ell}=C_{\ell'} = 0$
		\item $\forall \ell: C_{\ell} >0 \wedge \ell<b \implies C_{\ell+1} > 0$
	\end{itemize}
	Note that the $0$ represents an empty slot. The first condition corresponds to the requirement that at most one job of a clique should be scheduled on each machine. The second one matches to the convention that the empty slots are at the beginning.
	We denote the set of configurations as $\confs$.
	Moreover, $\confs(k)$ denotes for each $k\in [b]$ the set of configurations in which $k$ is present, i.e., $\confs(k) = \sett{C\in\confs}{\exists \ell\in [b]:C_\ell = k}$.
	Note that $|\confs| \leq (b+1)!$ since there can be up to $b$ zeros in a configuration and a configuration excluding the zeros can be seen as a truncated permutation of the numbers in $[b]$.
	We call a configuration $C$ eligible for a machine $i$ if all the cliques occurring in $C$ are eligible on $i$, that is, for each $C_{\ell}\neq 0$ we have $i\in M(C_{\ell})$.
	
	A schedule for an instance of the problem trivially induces an assignment of the machines to the configurations.
	We call such an assignment $\tau: M \rightarrow \confs$ feasible if there exists a feasible schedule corresponding to $\tau$.
	That is, if $\tau(i)$ is eligible on $i$ for each machine $i$ and, for each clique $k$, the number of machines assigned to a configuration in $\confs(k)$ is equal to the number of jobs in $k$.
	Obviously, different schedules may have the same assignment.
	However, we argue that given a feasible assignment $\tau$, we can find a schedule corresponding to $\tau$ with a minimal objective function value via a simple greedy procedure.
	Namely, for each clique $k$ we can successively choose a smallest job that is not scheduled yet and assign it to a slot positioned in the lowest layer that still includes non-empty slots belonging to $k$ according to $\tau$.
	Due to this observation, we can associate an objective value to each feasible assignment. In the next step we introduce an Integer Program to search for a feasible assignment $\tau$ with minimal objective.
	
	We introduce two types of variables, that is, $x_{C,i}\in\set{0,1}$ for each machine $i\in M$ and configuration $C\in\confs$ corresponding to the choice of whether $i$ is assigned to $C$ or not. Further, we have $y_{k,\ell} \in\set{0,1,\dots, n}$ for each clique $k\in[b]$ and layer $\ell\in [b]$ counting the number slots reserved for clique $k$ in the layers $1$ to $\ell$. 
	Moreover, we ensure $x_{C,i} = 0$ if $C$ is not eligible on $i$ using more restrictive upper bounds.
	Let $\mathcal{C}(k,\ell) = \sett{C\in\mathcal{C}}{\exists\ell\in[\ell]: C_{\ell'} = k}$ for each $k,\ell\in[b]$, $n_k $ be the number of jobs belonging to clique $k$, and $p_{k,s}$ the size of the job that has position $s$ if we order the jobs of clique $k$ non-decreasingly by size.
	Now the Integer Program has the following form:
	\begin{align}
	\min \sum_{\ell,k \in [b]}\sum_{s=1}^{y_{k,\ell}}& p_{k,s}&\nonumber\\
	\sum_{C\in \mathcal{C}} x_{C,i} &=1& \forall i\in M \label{eq:ip_bag_constraints_one_conf}\\
	\sum_{i\in M}\sum_{C\in\mathcal{C}(k,\ell)}x_{C,i}&=y_{k,\ell} & \forall k\in [b], \ell\in [b] \label{eq:ip_bag_constraints_x_and_y}\\
	y_{k,b} &= n_k & \forall k\in[b] \label{eq:ip_bag_constraints_jobs_covered}
	\end{align}
	Constraint (\ref{eq:ip_bag_constraints_one_conf}) ensures that exactly one configuration is chosen for each machine; due to (\ref{eq:ip_bag_constraints_x_and_y}), the variables $y_{k,\ell}$ correctly count the slots reserved for clique $k$; and (\ref{eq:ip_bag_constraints_jobs_covered}) guarantees that the jobs of each clique are covered.
	Finally, the objective function corresponds to the one described above:
	For each clique $k$, we sum up the smallest $y_{k,1}$ job sizes for the first layer, the smallest $y_{k,2}$ sizes in the second one, and so on.
	Note that this counting is correct since we use the convention that empty slots are at the bottom and therefore each job contributes once to the objective for its own layer and once for each layer above.
	Although the Integer Program does not have a linear objective and super-constant number of variables and constraints, we can solve it in suitable time using \nfold{} techniques:
	\begin{lemma}
		\label{lemma:IPRunningTime}
		\Copy{lemma:IPRunningTime:text}
		{
			The above IP can be solved in time $2^{O(b^4\cdot \log(b))}m\log(m)\log(n)\log(mp_{\max})$.
		}
	\end{lemma}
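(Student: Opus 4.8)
The plan is to formulate the displayed IP as an \nfold{} IP and then invoke \cref{p:nfold}. I would take the $m$ machines as the $n$ bricks, so that the brick for machine $i$ carries the variables $(x_{C,i})_{C\in\confs}$. The $b^2$ variables $y_{k,\ell}$ are global in nature (constraints \eqref{eq:ip_bag_constraints_x_and_y} and \eqref{eq:ip_bag_constraints_jobs_covered} couple them to all machines), so I would attach them to the first brick and pad the remaining bricks with the same number of dummy coordinates fixed to $0$ by their bounds. This gives a uniform brick width $t = |\confs| + b^2 \le (b+1)! + b^2 = 2^{O(b\log b)}$.

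Next I would classify the constraints. Constraint \eqref{eq:ip_bag_constraints_one_conf}, $\sum_{C}x_{C,i}=1$, is purely local to machine $i$, so it becomes the single row of each block matrix $B_i$, giving $s=1$. Constraint \eqref{eq:ip_bag_constraints_x_and_y} sums over all machines and hence is globally uniform: rewritten as $\sum_{i}\sum_{C\in\confs(k,\ell)}x_{C,i}-y_{k,\ell}=0$, it contributes one row per pair $(k,\ell)$, where the coefficient $-1$ on $y_{k,\ell}$ appears only in $A_1$, the brick holding the $y$ variables. This yields $r=b^2$ globally uniform rows. Constraint \eqref{eq:ip_bag_constraints_jobs_covered}, $y_{k,b}=n_k$, I would enforce by setting the lower and upper bound of $y_{k,b}$ to $n_k$, so that it costs no extra row. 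All matrix entries lie in $\{-1,0,1\}$, hence $\Delta = 1$.

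I would then check the separable-convexity requirement: the objective decomposes as $\sum_{k,\ell}g_{k,\ell}(y_{k,\ell})$ with $g_{k,\ell}(y)=\sum_{s=1}^{y}p_{k,s}$, and since the jobs of each clique are indexed non-decreasingly by size, the increments $g_{k,\ell}(y+1)-g_{k,\ell}(y)=p_{k,y+1}$ are non-decreasing, so each $g_{k,\ell}$ extends to a convex function and the objective is separable convex as required.

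Finally I would substitute the parameters into \cref{p:nfold}. With $\Delta=1$, $r=b^2$, $s=1$, the front factor is $(\Delta rs)^{O(r^2s+rs^2)}=(b^2)^{O(b^4)}=2^{O(b^4\log b)}$. The factor $nt\log(nt)$ equals $2^{O(b\log b)}m\log m$, the variable ranges give $\log\|u-\ell\|_\infty=O(\log n)$, and since a feasible instance has $n\le mb$, the objective is bounded by $f_{\max}\le b^2 n p_{\max}\le b^3 m p_{\max}$, so $\log(f_{\max})=O(\log(mp_{\max}))$ after absorbing the $\log b$ terms into the $2^{O(b^4\log b)}$ factor. Multiplying these bounds yields the claimed running time $2^{O(b^4\log b)}m\log(m)\log(n)\log(mp_{\max})$. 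The main obstacle I anticipate is the bookkeeping that makes the matrix genuinely \nfold{}, in particular placing the shared $y_{k,\ell}$ variables in a single brick and padding the rest so that the globally uniform rows isolate cleanly from the locally uniform ones, rather than any hard estimate.
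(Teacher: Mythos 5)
Your proposal is correct and follows essentially the same route as the paper: the paper likewise duplicates the $y_{k,\ell}$ variables across bricks with all but one machine's copies pinned to zero by bounds (your ``pad with dummy coordinates'' phrasing describes the identical construction), treats constraint (\ref{eq:ip_bag_constraints_one_conf}) as the single locally uniform row, verifies convexity via the piecewise-linear extension with non-decreasing slopes, and plugs $\Delta=1$, $s=1$, $r=O(b^2)$ into Proposition~\ref{p:nfold}. The only cosmetic differences are that the paper keeps (\ref{eq:ip_bag_constraints_jobs_covered}) as $b$ extra globally uniform rows (giving $r=b^2+b$, same asymptotics) where you fold it into the variable bounds, and your explicit use of $n\le mb$ to control $\log(f_{\max})$ is a detail the paper leaves implicit.
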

	\begin{proof}
		In order to use algorithms for \nfold{} IPs, we have to show that the IP has a suitable structure and the objective function is separable convex.
		
		To obtain the desired structure, we have to duplicate the $y$ variables for each machine.
		Hence, we get variables $y_{k,\ell, i}$ for each $i\in M$ and $ k,\ell\in[b]$. 
		We choose some machine~$i^*\in M$ and set $y_{k,\ell, i} = 0$ for each $i\neq i^*$ using lower and upper bounds for the variables.
		In the constraints (\ref{eq:ip_bag_constraints_x_and_y}) and (\ref{eq:ip_bag_constraints_jobs_covered}) we have to replace each occurrence of $y_{k,\ell}$ by $\sum_{i\in M} y_{k,\ell, i}$.
		Moreover, we have to change the objective to $\min \sum_{\ell,k \in [b]}\sum_{s=1}^{y_{k,\ell,i^*}} p_{k,s}$.
		It is easy to see that the resulting IP is equivalent and has an \nfold{} structure with one brick for each machine, a brick size of $t \leq b^2 + (b+1)!$, and a maximum absolute entry of $\Delta = 1$.
		Constraint (\ref{eq:ip_bag_constraints_one_conf}) is locally uniform, and the other constraints are globally uniform.
		Hence, we have $s = 1$ and $r = b^2 + b$.
		
		Concerning the objective function, first note that many of the variables do not occur in the objective and hence can be ignored in the following.
		We essentially have to consider the function $g_k: [n_k] \rightarrow \mathbb{R}, q\mapsto \sum_{s=1}^{q} p_{k,s}$ for each $k\in [b]$ since the objective can be written as $\sum_{\ell,k \in [b]}g_k(y_{k,\ell,i^*})$.
		Let $\set{x} = x - \floor{x}$ for each $x\in\mathbb{R}$ and $\tilde{g}_k:\mathbb{R}\rightarrow\mathbb{R}$ with:
		\[
		x\mapsto 
		\begin{cases}
		p_{k,1}x       & \quad \text{if } x < 1 \\
		p_{k,\ceil{x}}\set{x} + \sum_{s=1}^{\floor{x}} p_{k,s}  & \quad \text{if } \floor{x} \in [n_k-1] \\
		p_{k,n_k}(x-n_k) + \sum_{s=1}^{n_k} p_{k,s} & \quad \text{if } x\geq n_k  
		\end{cases} .
		\]
		Then we have $\tilde{g}_k(q) = g_k(q)$ for each $k\in [n_k]$.
		Furthermore, $\tilde{g}_k$ is continuous and essentially a linear function with $n_k-1$ points at which the slope changes. 
		Due to the ordering of the processing times the slope can only increase and hence the function is convex.
		
		Finally, note that maximal value $f_{\max}$ of the objective function can be upper bounded by $p_{\max} b^2 n$ and the maximal difference between the upper and lower bound of a variable is given by $n$.
		By plugging in the proper values, Proposition \ref{p:nfold} yields the stated running time.
	\end{proof}

	\subsection*{Scheduling with Cliques for Sum of Weighted Completion Times}
	We consider the problem of scheduling jobs non-preemptively on unrelated machines with clique incompatibility under the objective to minimize the sum of weighted completion times, i.e., $R|\conflictgraph{}|\sum w_j C_j$. 
	Recall that we are given $m$ machines forming a set $M$ and $n$ jobs forming a set $J$. 
	Each job $j \in J$ has an $m$-dimensional vector $p_j = (p_j^1, \dots, p_j^m) \in \mathbb{Z} \cup \{\infty\}$ stating that job $j$ has a processing time $p_j^i$ on machine $i \in M$. 
	Also, each job has a weight~$w_j$. 
	The jobs are partitioned into $b$ cliques. 
	Further, we introduce \textit{kinds} of jobs formally. 
	Two jobs belong to the same kind if their processing time vectors are equal and their weights are the same.
	Denote the number of \textit{job kinds} as $\vartheta$.
	We can re-write the set of jobs as $(n_1, \dots, n_{\vartheta})$ where jobs of kind $k$ appear $n_k$ times. 
	Denote by $p_{\max}$ the largest processing time and by $w_{\max}$ the largest weight occurring in the instance. 
	In the remaining of this section we prove the following theorem:
	\begin{theorem}
	The problem $R|\conflictgraph{}|\sum w_j C_j$ can be solved in FPT time parameterized by the number of machines $m$, the largest processing time $p_{\max}$ and the number of job kinds $\vartheta$.
	\end{theorem}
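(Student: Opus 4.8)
The plan is to model the problem as an \nfold{} Integer Program and invoke Proposition~\ref{p:nfold}, in the spirit of Knop and Kouteck\'y but adapting their construction so that clique incompatibility is captured locally and the weighted completion time objective becomes separable convex. The one dimension that may be large is the number of cliques~$b$, so I would let the bricks of the \nfold{} IP correspond to the cliques; all of $r,s,t$ and the largest coefficient $\Delta$ then have to be bounded in terms of $m$, $p_{\max}$ and~$\vartheta$ only, while $b$ merely controls the number of bricks.

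For each clique $c$ I would introduce binary brick variables $z^{c}_{k,i}$ indicating that a job of kind $k\in[\vartheta]$ of clique $c$ is placed on machine $i\in M$ (setting the upper bound to $0$ whenever $p_k^i=\infty$). The clique constraint then becomes purely local: $\sum_{k}z^{c}_{k,i}\le 1$ for each machine $i$ forbids two jobs of the same clique on one machine, and $\sum_{i}z^{c}_{k,i}=n^{c}_{k}$ (with $n^c_k$ the number of kind-$k$ jobs in $c$) forces every job to be placed. These are the locally uniform constraints, so $s=O(m+\vartheta)$. For the objective I fix, for each machine $i$, the Smith order of the kinds, i.e.\ I relabel so that $p^i_1/w_1\le\dots\le p^i_\vartheta/w_\vartheta$, and introduce cumulative-load variables $\Lambda_{k,i}$ tied to the bricks by the globally uniform constraints $\Lambda_{k,i}-\Lambda_{k-1,i}=p^i_k\sum_{c}z^{c}_{k,i}$ (with $\Lambda_{0,i}=0$). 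Exactly as in \cref{lemma:IPRunningTime}, the $\Lambda$-variables are duplicated across bricks and zeroed out in all but one designated brick via the variable bounds, so that everything fits the \nfold{} format with $t=O(m\vartheta)$ and $r=O(m\vartheta)$; note that the coefficients $p^i_k\le p_{\max}$ are the reason $p_{\max}$ enters as a parameter, since $\Delta\le p_{\max}$.

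The step I expect to be the main obstacle is verifying that the weighted completion time is expressible as a separable convex function of these variables; this is exactly where the \nfold{} requirement bites, because naively the objective contains bilinear ``load-before times count'' terms. The key computation I would carry out is that, on machine~$i$ with the kinds in Smith order, the contribution is
\[
\Phi_i=\sum_{k}\frac{w_k}{2p^i_k}\bigl(\Lambda_{k,i}^2-\Lambda_{k-1,i}^2\bigr)+\sum_{k}\frac{w_kp^i_k}{2}\sum_{c}z^{c}_{k,i},
\]
where the bilinear cross terms between $\Lambda_{k-1,i}$ and the count $\sum_c z^{c}_{k,i}$ cancel. The second sum is linear, hence trivially separable convex, while Abel summation rewrites the first as $\sum_{k<\vartheta}(c_k-c_{k+1})\Lambda_{k,i}^2+c_\vartheta\Lambda_{\vartheta,i}^2$ with $c_k=w_k/(2p^i_k)$. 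Since the Smith order makes $c_k$ non-increasing, all coefficients are non-negative, so $\sum_i\Phi_i$ is a non-negative combination of squares of single variables, i.e.\ separable convex --- the same ``monotone ratios give convexity'' phenomenon already exploited in \cref{lemma:IPRunningTime}.

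Finally I would collect the parameters: $r=O(m\vartheta)$, $s=O(m+\vartheta)$, $t=O(m\vartheta)$, $\Delta\le p_{\max}$, each variable bounded by $n$ respectively $np_{\max}$, and $f_{\max}\le O(w_{\max}n^2p_{\max})$, with the number of bricks equal to~$b\le n$. Plugging these into Proposition~\ref{p:nfold} yields a running time of the form $(p_{\max}m\vartheta(m+\vartheta))^{O((m\vartheta)^2(m+\vartheta))}\cdot n\,t\log(nt)\log(np_{\max})\log(w_{\max})$, which is FPT with respect to $m$, $p_{\max}$ and~$\vartheta$; in particular $b$ is not needed as a parameter. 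A solution of the IP is then turned into a schedule by assigning, for each clique and machine, the indicated job of each kind, which is feasible precisely because of the local clique constraints.
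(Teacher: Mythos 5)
Your proposal is correct and follows essentially the same route as the paper: bricks indexed by cliques, binary assignment variables per (kind, clique, machine) with the clique constraint kept locally uniform, cumulative-load variables duplicated across bricks and tied in globally, and the Knop--Kouteck\'y quadratic separable convex objective (your Abel-summation derivation recovers exactly the form of \cref{p:SepConvF}), all fed into \cref{p:nfold}. The only deviations are cosmetic --- you place the job-covering constraints in the local blocks rather than the global ones, and your $f_{\max}$ bound drops a factor of $p_{\max}$ inside a logarithm --- neither of which affects correctness or the FPT conclusion.
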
 
	
	The main obstacle in the design of an \nfold{} IP for this setting is to formulate an appropriate objective function. 
	In \cite{DBLP:journals/scheduling/KnopK18} Knop and Kouteck{\'{y}} developed a quadratic separable convex function equivalent to the sum of completion times objective. 
	This result relies on the fact that in an optimal schedule the jobs on each machine are ordered regarding the Smith's rule, i.e., the jobs are schedules non-increasingly regarding $\rho_i(j) = w_j/p_j^i$~\cite{DBLP:journals/siamdm/GoemansW00}. 
	We may visualize this as a Gantt chart for each machine:
	Roughly speaking, it is a line of neighboring rectangles in the order of the schedule.
	The width of the $i$th rectangle is the processing time of the $i$th job on the machine and the rectangles height corresponds to the total weight of all uncompleted jobs (thus including the $i$th job).
	The area under the function, i.e. an integral of the weights of uncompleted jobs in time, corresponds to the weighted completion time and can be separated into two parts.
	One part is dependent only on the job kind and machine kind. The second one is dependent on the composition of the jobs assigned to the machine.
	By the fact that for any machine the Smith's order is optimal, the order of job kinds is known. Hence the composition is determined by the number of jobs of each kind assigned to the machine.
	Thus the second part yields a piece-wise linear convex function. 
	For details see \cite{DBLP:journals/scheduling/KnopK18}. 
	Altogether they prove:
	
	\begin{proposition}[\cite{DBLP:journals/scheduling/KnopK18}] \label{p:SepConvF}
		Let $x_1^i, \dots, x_{\vartheta}^i$ be numbers of jobs of each kind scheduled on a machine $m_i$ and let $\pi_i \colon [\{1, \dots, \vartheta\}] \rightarrow [\{1, \dots, \vartheta\}]$ be a permutation of job kinds such that $\rho_i(\pi_i(j)) \geq \rho_i(\pi_i(j+1))$ for all $1 \leq j \leq \vartheta - 1$.
Then the contribution of $m_i$ to the weighted completion time in an optimal schedule is equal to $\sum_{j=1}^{\vartheta} (1/2(z_{j}^i)^2(\rho_i(\pi_i(j)) - \rho_i(\pi_i(j+1))) + 1/2 \cdot x_{j}^i p_{j}^i w_{j})$ where $z_{j}^i = \sum_{\ell = 1}^{j}p_{\pi_i(\ell)}^i x_{\pi_i(\ell)}^i$. 
	\end{proposition}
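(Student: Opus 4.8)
The plan is to verify the formula by computing, from first principles, the contribution of a single machine $m_i$ to the weighted completion time under an optimal (Smith) order, and then to recognize the resulting per-job sum as a telescoping quadratic via a summation-by-parts identity. First I would invoke the optimality of Smith's rule \cite{SmithRule,DBLP:journals/siamdm/GoemansW00}: on $m_i$ the jobs are processed in non-increasing order of $\rho_i(\cdot)$, and because all jobs of one kind share the same ratio their internal order is irrelevant. Relabelling the kinds by $\pi_i$ so that $\rho_i(1)\geq\dots\geq\rho_i(\vartheta)$ and writing $p_j, w_j, \rho_j, x_j$ for the corresponding processing time, weight, ratio and multiplicity, the $r$-th job of kind $j$ completes at time $z_{j-1}^i + r\,p_j$, where I adopt the conventions $z_0^i = 0$ and $\rho_{\vartheta+1} = 0$.

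Summing $w_j$ times these completion times over $r=1,\dots,x_j$ and over all $j$, and using $\sum_{r=1}^{x_j} r = x_j(x_j+1)/2$, I obtain the contribution of $m_i$ as
\begin{equation*}
\sum_{j=1}^{\vartheta}\Bigl( w_j x_j z_{j-1}^i + \tfrac{1}{2} w_j p_j x_j^2 + \tfrac{1}{2} w_j p_j x_j \Bigr).
\end{equation*}
The last summand is already the term $\tfrac{1}{2}x_j^i p_j^i w_j$ of the claim (summed over all kinds, this linear part is permutation-invariant, so the relabelling does not affect it). For the first two summands I would use $w_j = \rho_j p_j$ and $p_j x_j = z_j^i - z_{j-1}^i$, factoring out $\tfrac{1}{2}\rho_j(z_j^i - z_{j-1}^i)$ to get $\tfrac{1}{2}\rho_j(z_j^i - z_{j-1}^i)(z_j^i + z_{j-1}^i) = \tfrac{1}{2}\rho_j\bigl((z_j^i)^2 - (z_{j-1}^i)^2\bigr)$.

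The final step, the only genuinely non-routine manipulation, is a summation by parts: using $z_0^i = 0$ and $\rho_{\vartheta+1} = 0$ to reindex the telescoping sum gives
\begin{equation*}
\tfrac{1}{2}\sum_{j=1}^{\vartheta}\rho_j\bigl((z_j^i)^2-(z_{j-1}^i)^2\bigr) = \tfrac{1}{2}\sum_{j=1}^{\vartheta}(z_j^i)^2(\rho_j - \rho_{j+1}),
\end{equation*}
which is precisely the quadratic part of the claimed formula; reintroducing $\pi_i$ by undoing the relabelling recovers $\rho_i(\pi_i(j))$ and finishes the argument. I expect the main point to be conceptual rather than a true obstacle: namely, recognizing that the direct per-job accounting reorganizes into this telescoping form, which is exactly the algebraic shadow of the geometric ``area under the Gantt chart'' decomposition described before the statement — the linear term $\tfrac{1}{2}x_j^i p_j^i w_j$ captures each job's self-contribution, while the quadratic term collects the composition-dependent slabs. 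The one point to state carefully is the boundary convention $\rho_i(\pi_i(\vartheta+1)) = 0$, which is what makes the endpoint term of the summation by parts vanish.
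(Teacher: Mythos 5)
Your derivation is correct, and every step checks out: the per-job completion time $C_{j,r}=z_{j-1}^i+rp_j$ (valid since jobs of one kind may be taken consecutive under Smith's rule, their ratios being equal), the rewriting $w_jx_jz_{j-1}^i+\tfrac{1}{2}w_jp_jx_j^2=\tfrac{1}{2}\rho_j\bigl((z_j^i)^2-(z_{j-1}^i)^2\bigr)$ via $w_j=\rho_jp_j$ and $p_jx_j=z_j^i-z_{j-1}^i$, and the Abel summation with the boundary conventions $z_0^i=0$ and $\rho_i(\pi_i(\vartheta+1))=0$ --- the latter is indeed needed even to make the statement well-defined at $j=\vartheta$, as you note, and you correctly observe that the linear term is permutation-invariant so the relabelling is harmless. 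Be aware, though, that the paper contains no proof of this proposition to compare against: it is imported verbatim from Knop and Kouteck\'y \cite{DBLP:journals/scheduling/KnopK18}, and the paper only precedes it with the informal Gantt-chart sketch (the area under the staircase of remaining weight splits into a part depending only on job and machine kind and a composition-dependent part that is piecewise-linear convex in the kind multiplicities). Your computation is precisely the algebraic formalization of that sketch: the term $\tfrac{1}{2}x_j^ip_j^iw_j$ is the kind-dependent self-contribution, and the telescoped quadratic $\tfrac{1}{2}(z_j^i)^2(\rho_i(\pi_i(j))-\rho_i(\pi_i(j+1)))$ collects the composition-dependent slabs. What your writeup buys over the citation is self-containedness, and as a bonus it makes visible that the formula is insensitive to how $\pi_i$ breaks ties (equal consecutive ratios make the corresponding quadratic coefficients vanish), which the cited statement leaves implicit.
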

	
	\begin{proof}
		First, let us focus on constructing the \nfold{} IP. For this result we extend the \nfold{} IP introduced in \cite{DBLP:journals/scheduling/KnopK18} and adapt the separable convex function to our needs. 
		Even though the authors separate their constraints into globally uniform and locally uniform ones, the overall number of constraints is only dependent on the parameters. 
		Thus we can shift all their constraints to the $A_i$ blocks and incooperate the clique constraints as locally uniform ones. 
		There we ensure that each machine schedules at most one job from each clique where each $B_i$ block covers one clique. 
		Let $x_{j,k}^i$ be a variable that corresponds to the number of jobs of kind $j \in \{1, \dots, \vartheta\}$ from clique $k \in \{1, \dots, b\}$ that are scheduled on machine $i \in \{1, \dots, m\}$. 
		Let $n_{j, k}$ denote number of jobs of type $j$ from clique $k$.		
		Consider the following~IP:
		\begin{align*}
		\tag{1}
		\label{11}
		& \sum_{i=1}^m x_{j,k}^i = n_{j,k} & \forall j \in \{1, \dots, \vartheta\}, \forall k \in \{1, \ldots, b\}  \\
		\tag{2}
		\label{12}
		& \sum_{k = 1}^b \sum_{\ell=1}^j x_{\pi_i(\ell), k}^i p^i_{\pi_i(\ell)} = z_j^i & \forall j \in \{1, \dots, \vartheta\}, \forall i \in \{1, \dots, m\}  \\
		\tag{3}
		\label{13}
		& \sum_{j=1}^{\vartheta} x_{j,k}^i \leq 1 & \forall i \in \{1, \dots, m\}, \forall k \in \{1, \dots, b\}  
		\end{align*}
		with lower bounds $0$ for all variables and upper bounds $x_{j,k}^i \leq 1$ and $z_j^i \leq b \cdot p_{\max}$.
		
		Let the $x_{j, k}^i$ variables form a vector $\mathbf{x}$ and the $z_j^i$ variables from a vector $\mathbf{z}$.
		Denote by $\mathbf{x}^i$ and $\mathbf{z}^i$ the corresponding subset restricted to one machine $i$. 
		The objective is to minimize the function $f(\mathbf{x}, \mathbf{z}) =\sum_{i = 1}^{m} f^i(\mathbf{x}^i, \mathbf{z}^i)=  \sum_{j=1}^{\vartheta} (1/2(z_{j}^i)^2(\rho_i(\pi_i(j)) - \rho_i(\pi_i(j+1))) + 1/2 \cdot \sum_{k=1}^b x_{j,k}^i p_{j}^i w_{j})$.  
		As we consider the altered variables $x_{j,k}^i$ over all cliques simultaneously this corresponds to the objective function from Proposition \ref{p:SepConvF}. 
		Thus, the function expresses the sum of completion times objective. 
		Further it obviously stays separable convex. 
		
		Regarding the constraint matrix, Constraint~(\ref{11}) assures that the number of jobs from a kind $j$ scheduled on the machines matches the overall number of jobs from that kind. 
		Constraint (\ref{12}) is satisfied if the $z_j^i$ variables are set as demanded in Proposition~\ref{p:SepConvF}, i.e., the jobs are scheduled with respect to the Smith's rule. 
		Finally, Constraint~(\ref{13}) assures that the number of jobs scheduled on a machine $i$ from the same clique $k$ is at most one. 
		
		We construct a schedule from the solution of the above IP in the following way:
		Place the jobs accordingly to the $x_{j,k}^i$ variables and the Smith's ratio.
		That is, assign $x_{j,k}^i$ jobs of job kind $j$ from clique $k$ to machine $i$ (note that this number is at most one due to Constraint~(\ref{13})). After assigning all jobs to a machine, place them non-increasingly regarding the Smith's ratio $\rho_i(j)$ onto the machine. 
		As we did not change the objective from \cite{DBLP:journals/scheduling/KnopK18} such a solution corresponds to an optimal one regarding the sum of weighted completion times objective.
		
		Regarding the running time we first have to estimate the \nfold IP parameters. 
		Obviously the first two constraints are globally uniform whereas the third constraint is locally uniform and repeated for each clique. 
		The parameters can be bounded by
		\begin{itemize}
			\item $n = b+1$,
			\item $t = \vartheta \cdot m$,
			\item $r = \vartheta \cdot m$,
			\item $s = m$,
			\item $\Delta = p_{\max}$,
			\item $\log(||u-\ell||_{\infty}) = \log(b \cdot p_{max})$,
			\item $\log(f_{max}) = O(\log(m \cdot b^2 \cdot p_{\max} \cdot w_{\max})) \leq O(\log(m \cdot b \cdot p_{\max} \cdot w_{\max}))$.
		\end{itemize}
		Applying Proposition \ref{p:nfold} yields a running time of
		$(p_{\max}\vartheta m)^{O(\vartheta^2m^3)} O(b \log^3(b)\log(w_{\max}))$.
		Note that the inequality constraints do not harm as we can introduce parameter many slack-variables to turn them into equality constraints. 
		Asymptotically this does not influence the running time.
	\end{proof}
	
	\subsection*{Scheduling with Parameter Many Cliques for Sum of Weighted Completion Times}
	Let us turn our attention to the same problem $R|\conflictgraph{}|\sum w_j C_j$ but parameterized by $b$, $\kappa$, $p_{\max}$ and $\vartheta$.
	Let the definitions be as in the previous section.
	The following \nfold{} IP is an extended formulation of the one from \cite{DBLP:journals/scheduling/KnopK18}.
	However, the authors did not consider cliques, thus we embed them appropriately. 
	This leads to the following theorem:
	
	\begin{theorem}
		The problem $R|\conflictgraph{}|\sum w_j C_j$ can be solved in FPT time parameterized by the number of cliques $b$, the number of machine kinds $\kappa$, the largest processing time $p_{\max}$ and the number of job kinds $\vartheta$. 
	\end{theorem}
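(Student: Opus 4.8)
The plan is to reuse the separable convex objective of Proposition \ref{p:SepConvF} and to extend the \nfold{} IP of Knop and Kouteck\'y \cite{DBLP:journals/scheduling/KnopK18} for the parameterization by $\kappa$, $p_{\max}$ and $\vartheta$, now refining its variables by clique so that the incompatibility constraints can be added. The crucial difference to the previous theorem is the placement of the machine dimension: there we used one brick per clique, so that $m$ entered the brick size $t$ and had to be a parameter, whereas here I would use \emph{one brick per machine}. Since the number of bricks enters the running time of Proposition \ref{p:nfold} only through the benign polynomial factor $nt\log(nt)$, the number $m$ of machines need not be a parameter; grouping the machines by kind leaves at most $\kappa$ distinct block types, which is what $\kappa$ controls.

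For each machine $i$ I would introduce binary variables $x_{j,k}^i$ indicating that a job of kind $j\in[\vartheta]$ belonging to clique $k\in[b]$ is placed on $i$, together with the Smith-rule cumulative-load variables $z_j^i$ exactly as in Proposition \ref{p:SepConvF}, where now the role of the job-kind count on machine $i$ is played by $\sum_{k} x_{j,k}^i$. The constraints are: for every pair $(j,k)$ the globally uniform equation $\sum_{i} x_{j,k}^i = n_{j,k}$ forcing all jobs to be placed with the correct clique membership; for every machine $i$ and clique $k$ the locally uniform inequality $\sum_{j} x_{j,k}^i \le 1$ encoding the incompatibility; and for every machine $i$ the locally uniform equations $\sum_{k}\sum_{\ell=1}^{j} x_{\pi_i(\ell),k}^i\, p_{\pi_i(\ell)}^i = z_j^i$ defining the $z$-variables with respect to the machine-dependent Smith permutation $\pi_i$. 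The objective is $\sum_i f^i(\mathbf x^i,\mathbf z^i)$ with $f^i$ the per-machine contribution of Proposition \ref{p:SepConvF}, the linear weight term summed over the cliques; it is the same separable convex function as in \cite{DBLP:journals/scheduling/KnopK18} and hence stays separable convex.

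I would then read off the \nfold{} parameters: brick size and number of globally uniform rows are $t = O(\vartheta b)$ and $r = O(\vartheta b)$, the number of locally uniform rows is $s = O(\vartheta + b)$ (one inequality per clique plus the $\vartheta$ definitions of $z$), and the largest entry is $\Delta = p_{\max}$ coming from the $z$-defining equations, while the matching and incompatibility constraints contribute only entries of absolute value $1$. The number of bricks is $n = m$ with at most $\kappa$ distinct block types. Plugging these into Proposition \ref{p:nfold} gives a running time that is FPT in $b$, $\kappa$, $p_{\max}$ and $\vartheta$ and polynomial in the remaining input. A feasible solution yields a schedule by placing the jobs according to the $x_{j,k}^i$ and ordering them on each machine by Smith's ratio $\rho_i$; the locally uniform clique inequalities guarantee that the schedule respects the incompatibilities, and by Proposition \ref{p:SepConvF} its value equals the minimum weighted completion time.

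The main obstacle I expect is to carry the incompatibility into the IP without destroying either the separable convex structure of the objective or the boundedness of the block dimensions. Refining the variables from job-kind counts to clique-indexed indicators is what lets the per-machine inequalities $\sum_j x_{j,k}^i\le 1$ express clique incompatibility as purely locally uniform constraints, and one must check that replacing the count $x_j^i$ by $\sum_k x_{j,k}^i$ leaves the objective of Proposition \ref{p:SepConvF} intact. The remaining care is in verifying that the machine-dependent permutations $\pi_i$ enter only through the per-machine blocks, so that the number of distinct blocks is bounded by $\kappa$ while all of $t$, $r$, $s$ and $\Delta$ depend solely on $b$, $\vartheta$ and $p_{\max}$.
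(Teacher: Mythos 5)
Your proposal is correct and follows essentially the same route as the paper: the same clique-refined variables $x_{j,k}^i$ and $z_j^i$, the same three constraint families with one brick per machine, the same separable convex objective from Proposition \ref{p:SepConvF}, and the same parameter bounds $t,r = O(\vartheta b)$, $s = O(\vartheta + b)$, $\Delta = p_{\max}$ plugged into Proposition \ref{p:nfold}. Your remark that $\kappa$ bounds the number of distinct block types is a sensible reading of why it appears in the parameterization, though (as in the paper) it does not actually enter the final running-time bound.
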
 
	
		\begin{proof}
	Regarding the variables for our IP, let $x_{j,k}^i$ denote that $x_{j,k}^i$ jobs of kind~$j \in \{1, \dots, \vartheta\}$ from clique $k \in \{1, \dots, b\}$ are scheduled on machine $i \in \{1, \dots, m\}$. 
	Further, we have $z_j^i$ for each $j \in \{1, \dots, \vartheta\}$ and $i \in \{1, \dots, m\}$. 	
	Consider the following IP:
	\begin{align*}
		\tag{1}
		\label{21}
		& \sum_{i = 1}^m  x_{j,k}^i = n_{j,k} & \forall j \in \{1, \dots, \vartheta\}, \forall k \in \{1, \dots, b\} \\
		\tag{2}
		\label{22}
		& \sum_{k = 1}^b \sum_{\ell=1}^j x_{\pi_i(\ell), k}^i p^i_{\pi_i(\ell)} = z_{j}^i & \forall j \in \{1, \dots, \vartheta\}, \forall i \in \{1, \dots, m\}  \\
		\tag{3}
		\label{23}
		& \sum_{j=1}^{\vartheta} x_{j,k}^i \leq 1 & \forall k \in \{1, \dots, b\}, \forall i \in \{1, \dots, m\}
	\end{align*}
	with lower bounds $0$ for all variables and upper bounds $x_{j,k}^i \leq 1$ and $z_{j}^i \leq b \cdot p_{\max}$.
	Again, we aim to minimize $f(\mathbf{x}, \mathbf{z}) =\sum_{i = 1}^{m} f^i(\mathbf{x}^i, \mathbf{z}^i)=  \sum_{j=1}^{\vartheta} (1/2(z_{j}^i)^2(\rho_i(\pi_i(j)) - \rho_i(\pi_i(j+1))) + 1/2 \cdot \sum_{k=1}^b x_{j,k}^i p_{j}^i w_{j})$. 
	As before, we altered the $x_j^i$ variable in the objective function by introducing more indices.
	However, as we only consider the sum of these variables this does not affect the objective and thus, by Proposition~$2$ the function maps correctly to the sum of weighted completion times objective.
	
	Regarding the IP, the constraints resemble the ones from previous IP. 
	Constraint~(\ref{21}) is satisfied if the number of jobs from kind $j$ and clique $k$ are covered by the number of jobs from that kind and clique scheduled on the machines. 
	Further, Constraint~(\ref{22}) is satisfied if the variable $z_{j}^i$ is set accordingly to Proposition $2$ i.e., the jobs are scheduled with respect to the Smith's rule. 
	The last constraint is the same as in the previous IP and it assures that the number of jobs scheduled on a machine $i$ from the same clique $k$ is at most one.
	
	A solution to the \nfold{} IP can be transformed into a schedule by placing $x_{j,k}^i$ variables of job kind $j$ and from clique $k$ onto machine $i$ (again this is at most one job due to Constraint~(\ref{23})) and ordering the jobs non-increasingly regarding the Smith's ratio $\rho_i(j)$. 
	
	To finally argue the running time, let us estimate the parameters. 
	The first constraint is globally uniform. 
	The remaining ones are locally uniform and repeated for each machine. 
	We can bound the parameters by:
	\begin{itemize}
		\item $n = m$,
		\item $t = O(\vartheta \cdot b)$, 
		\item $r = O(\vartheta \cdot b)$, 
		\item $s = O(\vartheta + b)$,
		\item $\Delta = p_{\max}$,
		\item $\log(||u-\ell||_{\infty}) = \log(b \cdot p_{max})$,
		\item $\log(f_{max}) = O(\log(m \cdot (b \cdot p_{\max}^2 + b \cdot p_{\max} \cdot w_{\max})))$.
	\end{itemize}
	
	Applying Proposition~\ref{p:nfold}  and setting up the schedule yields a running time of
	$(b\vartheta p_{\max})^{O(\vartheta^3b^3)}$ $O(m \log^2(m) \log(w_{\max}))$. 
	Again, the inequality constraints do no harm, as we can introduce few slack-variables to turn them into equality constraints. 
	Asymptotically, this does not influence the running time.
	\end{proof}
	
	\section{Open Problems}\label{sec:open_problems}
	
	While the present paper already presents quite a number of results, many interesting research directions are still open.
	For instance, the classical case of uniformly related machines $Q|\conflictgraph{}|\sum C_j$ where the processing times of the jobs depend on job sizes and machines speeds is more general than $P|\conflictgraph{}|\sum C_j$, but in turn more restricted than $R|\conflictgraph{}|\sum C_j$.
	Hence, the study of $Q|\conflictgraph{}|\sum C_j$ remains as an intriguing open problem. 
	Furthermore, we are quite interested in a more detailed study of our setting from the perspective of approximation algorithms or even FPT approximations, that is, approximation algorithms with FPT running times.
	The most obvious question in this context is probably whether a constant rate approximation for $P|\conflictgraph{},\bagmachinerestriction|\sum C_j$ is possible, given that this problem is \APXH.
	Finally, the study of further sensible classes of incompatibility graphs for the total completion time objective seems worthwhile.

	\bibliography{BagsForLipics}

\begin{thebibliography}{10}

\bibitem{NetworkFlows}
Ravindra~K. Ahuja, Thomas~L. Magnanti, and James~B. Orlin.
\newblock {\em Network flows - theory, algorithms and applications}.
\newblock Prentice Hall, 1993.

\bibitem{DBLP:books/lib/Ausiello99}
Giorgio Ausiello, Alberto Marchetti{-}Spaccamela, Pierluigi Crescenzi, Giorgio
  Gambosi, Marco Protasi, and Viggo Kann.
\newblock {\em Complexity and approximation: combinatorial optimization
  problems and their approximability properties}.
\newblock Springer, 1999.
\newblock URL:
  \url{https://link.springer.com/book/10.1007\%2F978-3-642-58412-1}.

\bibitem{DBLP:conf/mfcs/BodlaenderJ93}
Hans~L. Bodlaender and Klaus Jansen.
\newblock On the complexity of scheduling incompatible jobs with unit-times.
\newblock In Andrzej~M. Borzyszkowski and Stefan Sokolowski, editors, {\em
  Mathematical Foundations of Computer Science 1993, 18th International
  Symposium, MFCS'93, Gdansk, Poland, August 30 - September 3, 1993,
  Proceedings}, volume 711 of {\em Lecture Notes in Computer Science}, pages
  291--300. Springer, 1993.
\newblock URL: \url{https://doi.org/10.1007/3-540-57182-5\_21}, \href
  {http://dx.doi.org/10.1007/3-540-57182-5\_21}
  {\path{doi:10.1007/3-540-57182-5\_21}}.

\bibitem{DBLP:journals/dam/BodlaenderJW94}
Hans~L. Bodlaender, Klaus Jansen, and Gerhard~J. Woeginger.
\newblock Scheduling with incompatible jobs.
\newblock {\em Discret. Appl. Math.}, 55(3):219--232, 1994.
\newblock URL: \url{https://doi.org/10.1016/0166-218X(94)90009-4}, \href
  {http://dx.doi.org/10.1016/0166-218X(94)90009-4}
  {\path{doi:10.1016/0166-218X(94)90009-4}}.

\bibitem{DBLP:books/daglib/0010885}
Peter Brucker.
\newblock {\em Scheduling algorithms {(4.} ed.)}.
\newblock Springer, 2004.

\bibitem{DBLP:journals/cacm/BrunoCS74}
John~L. Bruno, Edward G.~Coffman Jr., and Ravi Sethi.
\newblock Scheduling independent tasks to reduce mean finishing time.
\newblock {\em Commun. {ACM}}, 17(7):382--387, 1974.
\newblock URL: \url{https://doi.org/10.1145/361011.361064}, \href
  {http://dx.doi.org/10.1145/361011.361064} {\path{doi:10.1145/361011.361064}}.

\bibitem{DBLP:journals/jcss/ChenJZ18}
Lin Chen, Klaus Jansen, and Guochuan Zhang.
\newblock On the optimality of exact and approximation algorithms for
  scheduling problems.
\newblock {\em J. Comput. Syst. Sci.}, 96:1--32, 2018.
\newblock URL: \url{https://doi.org/10.1016/j.jcss.2018.03.005}, \href
  {http://dx.doi.org/10.1016/j.jcss.2018.03.005}
  {\path{doi:10.1016/j.jcss.2018.03.005}}.

\bibitem{DBLP:conf/stacs/0011MYZ17}
Lin Chen, D{\'{a}}niel Marx, Deshi Ye, and Guochuan Zhang.
\newblock Parameterized and approximation results for scheduling with a low
  rank processing time matrix.
\newblock In Heribert Vollmer and Brigitte Vall{\'{e}}e, editors, {\em 34th
  Symposium on Theoretical Aspects of Computer Science, {STACS} 2017, March
  8-11, 2017, Hannover, Germany}, volume~66 of {\em LIPIcs}, pages 22:1--22:14.
  Schloss Dagstuhl - Leibniz-Zentrum f{\"{u}}r Informatik, 2017.
\newblock URL: \url{https://doi.org/10.4230/LIPIcs.STACS.2017.22}, \href
  {http://dx.doi.org/10.4230/LIPIcs.STACS.2017.22}
  {\path{doi:10.4230/LIPIcs.STACS.2017.22}}.

\bibitem{book:TheoryOfScheduling}
R.~W. Conway, W.~L. Maxwell, and L.~W. Miller.
\newblock {\em Theory Of Scheduling}.
\newblock Adison-Wesly, 1967.

\bibitem{IntroductionToAlgorithms}
Thomas~H. Cormen, Charles~E. Leiserson, Ronald~L. Rivest, and Clifford Stein.
\newblock {\em Introduction to Algorithms, Second Edition}.
\newblock The {MIT} Press and McGraw-Hill Book Company, 2001.

\bibitem{DBLP:books/sp/CyganFKLMPPS15}
Marek Cygan, Fedor~V. Fomin, Lukasz Kowalik, Daniel Lokshtanov, D{\'{a}}niel
  Marx, Marcin Pilipczuk, Michal Pilipczuk, and Saket Saurabh.
\newblock {\em Parameterized Algorithms}.
\newblock Springer, 2015.

\bibitem{DBLP:conf/esa/DasW17}
Syamantak Das and Andreas Wiese.
\newblock On minimizing the makespan when some jobs cannot be assigned on the
  same machine.
\newblock In Kirk Pruhs and Christian Sohler, editors, {\em 25th Annual
  European Symposium on Algorithms, {ESA} 2017, September 4-6, 2017, Vienna,
  Austria}, volume~87 of {\em LIPIcs}, pages 31:1--31:14. Schloss Dagstuhl -
  Leibniz-Zentrum f{\"{u}}r Informatik, 2017.
\newblock URL: \url{https://doi.org/10.4230/LIPIcs.ESA.2017.31}, \href
  {http://dx.doi.org/10.4230/LIPIcs.ESA.2017.31}
  {\path{doi:10.4230/LIPIcs.ESA.2017.31}}.

\bibitem{DBLP:journals/orl/DokkaKS12}
Trivikram Dokka, Anastasia Kouvela, and Frits C.~R. Spieksma.
\newblock Approximating the multi-level bottleneck assignment problem.
\newblock {\em Oper. Res. Lett.}, 40(4):282--286, 2012.
\newblock URL: \url{https://doi.org/10.1016/j.orl.2012.04.003}, \href
  {http://dx.doi.org/10.1016/j.orl.2012.04.003}
  {\path{doi:10.1016/j.orl.2012.04.003}}.

\bibitem{DBLP:journals/algorithmica/EbenlendrKS14}
Tom{\'{a}}s Ebenlendr, Marek Krc{\'{a}}l, and Jir{\'{\i}} Sgall.
\newblock Graph balancing: {A} special case of scheduling unrelated parallel
  machines.
\newblock {\em Algorithmica}, 68(1):62--80, 2014.
\newblock URL: \url{https://doi.org/10.1007/s00453-012-9668-9}, \href
  {http://dx.doi.org/10.1007/s00453-012-9668-9}
  {\path{doi:10.1007/s00453-012-9668-9}}.

\bibitem{DBLP:journals/corr/abs-1904-01361}
Friedrich Eisenbrand, Christoph Hunkenschr{\"{o}}der, Kim{-}Manuel Klein,
  Martin Kouteck{\'{y}}, Asaf Levin, and Shmuel Onn.
\newblock An algorithmic theory of integer programming.
\newblock {\em CoRR}, abs/1904.01361, 2019.

\bibitem{ApproximatingMinSumSetCover}
Uriel Feige, L{\'{a}}szl{\'{o}} Lov{\'{a}}sz, and Prasad Tetali.
\newblock Approximating min sum set cover.
\newblock {\em Algorithmica}, 40(4):219--234, 2004.
\newblock URL: \url{https://doi.org/10.1007/s00453-004-1110-5}, \href
  {http://dx.doi.org/10.1007/s00453-004-1110-5}
  {\path{doi:10.1007/s00453-004-1110-5}}.

\bibitem{FurmanczykK17}
Hanna Furmańczyk and Marek Kubale.
\newblock Scheduling of unit-length jobs with bipartite incompatibility graphs
  on four uniform machines.
\newblock {\em Bulletin of the Polish Academy of Sciences: Technical Sciences},
  65:29--34, 2017.
\newblock \href {http://dx.doi.org/10.1515/bpasts-2017-0004}
  {\path{doi:10.1515/bpasts-2017-0004}}.

\bibitem{DBLP:journals/dam/FurmanczykK18}
Hanna Furmańczyk and Marek Kubale.
\newblock Scheduling of unit-length jobs with cubic incompatibility graphs on
  three uniform machines.
\newblock {\em Discret. Appl. Math.}, 234:210--217, 2018.
\newblock URL: \url{https://doi.org/10.1016/j.dam.2016.01.036}, \href
  {http://dx.doi.org/10.1016/j.dam.2016.01.036}
  {\path{doi:10.1016/j.dam.2016.01.036}}.

\bibitem{DBLP:journals/siamdm/GoemansW00}
Michel~X. Goemans and David~P. Williamson.
\newblock Two-dimensional gantt charts and a scheduling algorithm of lawler.
\newblock {\em {SIAM} J. Discret. Math.}, 13(3):281--294, 2000.

\bibitem{DBLP:conf/spaa/GrageJK19}
Kilian Grage, Klaus Jansen, and Kim{-}Manuel Klein.
\newblock An {EPTAS} for machine scheduling with bag-constraints.
\newblock In Christian Scheideler and Petra Berenbrink, editors, {\em The 31st
  {ACM} on Symposium on Parallelism in Algorithms and Architectures, {SPAA}
  2019, Phoenix, AZ, USA, June 22-24, 2019}, pages 135--144. {ACM}, 2019.
\newblock URL: \url{https://doi.org/10.1145/3323165.3323192}, \href
  {http://dx.doi.org/10.1145/3323165.3323192}
  {\path{doi:10.1145/3323165.3323192}}.

\bibitem{HallTheorem}
P.~Hall.
\newblock On representatives of subsets.
\newblock {\em Journal of the London Mathematical Society}, s1-10(1):26--30,
  1935.
\newblock URL:
  \url{https://londmathsoc.onlinelibrary.wiley.com/doi/abs/10.1112/jlms/s1-10.37.26},
  \href
  {http://arxiv.org/abs/https://londmathsoc.onlinelibrary.wiley.com/doi/pdf/10.1112/jlms/s1-10.37.26}
  {\path{arXiv:https://londmathsoc.onlinelibrary.wiley.com/doi/pdf/10.1112/jlms/s1-10.37.26}},
  \href {http://dx.doi.org/10.1112/jlms/s1-10.37.26}
  {\path{doi:10.1112/jlms/s1-10.37.26}}.

\bibitem{DBLP:conf/innovations/JansenKMR19}
Klaus Jansen, Kim{-}Manuel Klein, Marten Maack, and Malin Rau.
\newblock Empowering the configuration-ip - new {PTAS} results for scheduling
  with setups times.
\newblock In Avrim Blum, editor, {\em 10th Innovations in Theoretical Computer
  Science Conference, {ITCS} 2019, January 10-12, 2019, San Diego, California,
  {USA}}, volume 124 of {\em LIPIcs}, pages 44:1--44:19. Schloss Dagstuhl -
  Leibniz-Zentrum f{\"{u}}r Informatik, 2019.
\newblock URL: \url{https://doi.org/10.4230/LIPIcs.ITCS.2019.44}, \href
  {http://dx.doi.org/10.4230/LIPIcs.ITCS.2019.44}
  {\path{doi:10.4230/LIPIcs.ITCS.2019.44}}.

\bibitem{DBLP:journals/scheduling/KnopK18}
Dusan Knop and Martin Kouteck{\'{y}}.
\newblock Scheduling meets n-fold integer programming.
\newblock {\em J. Sched.}, 21(5):493--503, 2018.

\bibitem{DBLP:journals/corr/abs-1909-07326}
Dusan Knop, Martin Kouteck{\'{y}}, Asaf Levin, Matthias Mnich, and Shmuel Onn.
\newblock Multitype integer monoid optimization and applications.
\newblock {\em CoRR}, abs/1909.07326, 2019.

\bibitem{DBLP:conf/esa/KnopKM17}
Dusan Knop, Martin Kouteck{\'{y}}, and Matthias Mnich.
\newblock Combinatorial n-fold integer programming and applications.
\newblock In Kirk Pruhs and Christian Sohler, editors, {\em 25th Annual
  European Symposium on Algorithms, {ESA} 2017, September 4-6, 2017, Vienna,
  Austria}, volume~87 of {\em LIPIcs}, pages 54:1--54:14. Schloss Dagstuhl -
  Leibniz-Zentrum f{\"{u}}r Informatik, 2017.
\newblock URL: \url{https://doi.org/10.4230/LIPIcs.ESA.2017.54}, \href
  {http://dx.doi.org/10.4230/LIPIcs.ESA.2017.54}
  {\path{doi:10.4230/LIPIcs.ESA.2017.54}}.

\bibitem{DBLP:conf/stacs/MaackJ20}
Marten Maack and Klaus Jansen.
\newblock Inapproximability results for scheduling with interval and resource
  restrictions.
\newblock In Christophe Paul and Markus Bl{\"{a}}ser, editors, {\em 37th
  International Symposium on Theoretical Aspects of Computer Science, {STACS}
  2020, March 10-13, 2020, Montpellier, France}, volume 154 of {\em LIPIcs},
  pages 5:1--5:18. Schloss Dagstuhl - Leibniz-Zentrum f{\"{u}}r Informatik,
  2020.
\newblock URL: \url{https://doi.org/10.4230/LIPIcs.STACS.2020.5}, \href
  {http://dx.doi.org/10.4230/LIPIcs.STACS.2020.5}
  {\path{doi:10.4230/LIPIcs.STACS.2020.5}}.

\bibitem{DBLP:journals/cor/MallekBB19}
Amin Mallek, Mohamed Bendraouche, and Mourad Boudhar.
\newblock Scheduling identical jobs on uniform machines with a conflict graph.
\newblock {\em Comput. Oper. Res.}, 111:357--366, 2019.
\newblock URL: \url{https://doi.org/10.1016/j.cor.2019.07.011}, \href
  {http://dx.doi.org/10.1016/j.cor.2019.07.011}
  {\path{doi:10.1016/j.cor.2019.07.011}}.

\bibitem{DBLP:journals/cor/MnichB18}
Matthias Mnich and Ren{\'{e}} van Bevern.
\newblock Parameterized complexity of machine scheduling: 15 open problems.
\newblock {\em Comput. Oper. Res.}, 100:254--261, 2018.
\newblock URL: \url{https://doi.org/10.1016/j.cor.2018.07.020}, \href
  {http://dx.doi.org/10.1016/j.cor.2018.07.020}
  {\path{doi:10.1016/j.cor.2018.07.020}}.

\bibitem{DBLP:journals/tcs/PageS20}
Daniel~R. Page and Roberto Solis{-}Oba.
\newblock Makespan minimization on unrelated parallel machines with a few bags.
\newblock {\em Theor. Comput. Sci.}, 821:34--44, 2020.
\newblock URL: \url{https://doi.org/10.1016/j.tcs.2020.03.013}, \href
  {http://dx.doi.org/10.1016/j.tcs.2020.03.013}
  {\path{doi:10.1016/j.tcs.2020.03.013}}.

\bibitem{SmithRule}
Wayne~E. Smith.
\newblock Various optimizers for single-stage production.
\newblock {\em Naval Research Logistics Quarterly}, 3(1‐2):59--66, 1956.
\newblock URL:
  \url{https://www.onlinelibrary.wiley.com/doi/abs/10.1002/nav.3800030106},
  \href
  {http://arxiv.org/abs/https://www.onlinelibrary.wiley.com/doi/pdf/10.1002/nav.3800030106}
  {\path{arXiv:https://www.onlinelibrary.wiley.com/doi/pdf/10.1002/nav.3800030106}},
  \href {http://dx.doi.org/10.1002/nav.3800030106}
  {\path{doi:10.1002/nav.3800030106}}.

\end{thebibliography}
	
\end{document}